\documentclass[aip,jmp,11pt,a4paper]{revtex4-1}
\usepackage{amsmath}
\usepackage{amsfonts}
\usepackage{amssymb}
\usepackage{graphicx}

\setcounter{MaxMatrixCols}{10}

\providecommand{\U}[1]{\protect\rule{.1in}{.1in}}
\newtheorem{theorem}{Theorem}
\newtheorem{acknowledgement}{Acknowledgement}

\newtheorem{corollary}{Corollary}

\newtheorem{definition}{Definition}

\newtheorem{lemma}{Lemma}
\newtheorem{notation}{Notation}

\newtheorem{proposition}{Proposition}
\newtheorem{remark}{Remark}

\newenvironment{proof}[1][Proof]{\noindent\textbf{#1.} }{\ \rule{0.5em}{0.5em}}
\input{tcilatex}
\begin{document}

\title{Local quasi hidden variable modelling and violations of Bell-type
inequalities by a multipartite quantum state}
\author{Elena R. Loubenets}
\affiliation{Applied Mathematics Department, Moscow State Institute of Electronics and
Mathematics, Moscow 109028, Russia}

\begin{abstract}
We introduce for a general correlation scenario a new simulation model, 
\emph{a} \emph{local quasi hidden variable (LqHV) model, }where\emph{\ }%
locality and the measure-theoretic structure inherent to an LHV model are
preserved but positivity of a simulation measure is dropped. We specify a
necessary and sufficient condition for LqHV modelling and, based on this,
prove that every quantum correlation scenario admits an LqHV simulation. Via
the LqHV approach, we construct analogs of Bell-type inequalities for an $N$%
-partite quantum state and find a new analytical upper bound on the maximal
violation by an $N$-partite quantum state of $S_{1}\times \cdots \times S_{N}
$-setting Bell-type inequalities - either on correlation functions or on
joint probabilities and for outcomes of an arbitrary spectral type, discrete
or continuous. This general analytical upper bound is expressed in terms of
the new state dilation characteristics introduced in the present paper and
not only traces quantum states admitting an $S_{1}\times \cdots \times S_{N}$%
-setting LHV description but also leads to the new exact numerical upper
estimates on the maximal Bell violations for concrete $N$-partite quantum
states used in quantum information processing and for an arbitrary $N$%
-partite quantum state. We, in particular, prove that violation by an $N$%
-partite quantum state of an arbitrary Bell-type inequality (either on
correlation functions or on joint probabilities) for $S$ settings per site
cannot exceed $(2S-1)^{N-1}$ even in case of an infinite dimensional quantum
state and infinitely many outcomes.
\end{abstract}

\maketitle
\tableofcontents

\section{Introduction}

The seminal papers of Einstein, Podolsky and Rosen \cite{EPR} (EPR) and Bell 
\cite{Bell, 1} are still ones of most cited in quantum information. In Ref. 
\cite{EPR}, Einstein, Podolsky and Rosen argued that \emph{locality} of
measurements performed by spatially separated parties on perfectly
correlated quantum events implies the "simultaneous reality - and thus
definite values" of physical quantities described by noncommuting quantum
observables. Based on this argument contradicting, however, the quantum
formalism and referred to as the EPR paradox, Einstein, Podolsky and Rosen
expressed in Ref. \cite{EPR} their belief on a possibility of a \emph{hidden
variable} account of quantum measurements.

Analyzing this EPR belief in 1964 - 1966, Bell explicitly constructed \cite%
{Bell} the hidden variable (HV) model reproducing the statistical properties
of all quantum observables for a qubit but, however, proved \cite{1} that,
for\emph{\ }bipartite\emph{\ }measurements on a two-qubit system in the
singlet state, a\emph{\ local} hidden variable description (LHV) disagrees
with the statistical predictions of quantum theory. Based on these results,
Bell concluded \cite{Bell} that the EPR paradox should be resolved
specifically via violation of \emph{locality} under bipartite quantum
measurements and that "...non-locality is deeply rooted in quantum mechanics
itself and will persist in any completion".

Ever since 1964, the conceptual and mathematical aspects of the
probabilistic description of multipartite quantum measurements have been
analyzed in a plenty of papers, see, for example, articles \cite{kochen,
tsirelson, tsirelson1, fine, werner, Popescu, loubenets3} and references
therein. Nevertheless, as it has been recently noted by Gisin \cite{gisin},
in this field there are still "many questions, a few answers".

It was, for example, proved by Werner \cite{werner} that there exist finite
dimensional nonseparable bipartite quantum states admitting an LHV
description under all projective bipartite quantum measurements with an
arbitrary number of measurement settings at each site. It was also shown in
Refs. \cite{terhal, loubenets2, loubenets2a, loubenets2b, loubenets3} that
some nonseparable bipartite quantum states admit an LHV description only
under correlation scenarios with specific numbers of measurements at $N$
sites. However, until now it is not still known what state parameter
quantitatively determines violation by an $N$-partite quantum state of
Bell-type inequalities \cite{fr} - constraints specifying scenarios
admitting an LHV description and named after the seminal result \cite{1} of
Bell.

Nowadays, it is also clear \cite{loubenets3} that though multipartite
quantum measurements do not need to be local in the sense of Bell, they are,
however, local in the sense meant by Einstein \emph{et al} in Ref. \cite{EPR}%
. The difference between the general nonsignaling condition, the EPR
locality and Bell's locality is analyzed in Ref. \cite{loubenets3}. Thus,
the term "a nonlocal quantum state" widespread in quantum information means
now only that this state does not admit an LHV\ description and, therefore,
violates some Bell-type inequality.

This takes us back to the EPR locality argument \cite{EPR} and asks -- if it
is possible to construct for a quantum correlation scenario a simulation
model which would be (i) local in the sense meant by Einstein, Podolsky and
Rosen \cite{EPR}; (ii) similar by its measure-theoretic construction to the
concept of an LHV model and (iii) incorporate the latter only as a
particular case. This problem is also urgent for all multipartite
correlation scenarios (not necessarily quantum) specified not in terms of a
single probability space. The latter is one of the main notions of the
conventional probability theory.

Apart from the purely theoretical interest, such a local simulation model
could also single out a state parameter characterizing quantitatively
violations of Bell-type inequalities by a multipartite quantum state - the
problem discussed in the literature ever since the seminal result of
Tsirelson \cite{tsirelson}.

Note that though, for correlation bipartite Bell-type inequalities, quantum
violations are upper bounded \cite{tsirelson1} by the Grothendieck's
constant \cite{30} independently on a dimension of a bipartite quantum state
and numbers of settings and outcomes at each site, this is not already the
case for bipartite Bell-type inequalities on joint probabilities. Since
Bell-type inequalities are now widely used in many quantum information tasks 
\cite{Acin1, buhrman}, bounds on quantum violations of Bell-type
inequalities have been recently intensively discussed in the literature\
both computationally \cite{brunner} and theoretically \cite{junge, kaplan,
junge1, junge2, carlos} and it has been found \cite{junge} that some
tripartite quantum states "can lead to arbitrarily large" violations of
correlation Bell-type inequalities. For an $N$-partite quantum state, bounds
on violation of a Bell-type inequality of an arbitrary type (either on
correlation functions or on joint probabilities) have not been reported \cite%
{31} in the literature.

In the present paper, we introduce for the probabilistic description of a
general correlation scenario a new simulation model, \emph{a} \emph{local
quasi hidden variable (LqHV) model, }where locality and the
measure-theoretic structure inherent to an LHV model are preserved but
positivity of a simulation measure is dropped. We prove that every quantum
correlation scenario admits the simulation in LqHV terms and construct via
the LqHV approach analogs of Bell-type inequalities for an $N$-partite
quantum state. This allows us to find the new analytical and numerical upper
bounds on the maximal violation by an $N$-partite quantum state of all
Bell-type inequalities -- either on correlation functions or on joint
probabilities and for outcomes of an arbitrary spectral type, discrete or
continuous. The paper is organized as follows.

In section 2, for our consideration in sections 5, 6, we specify some new
dilation characteristics of an $N$-partite quantum state and discuss their
properties.

In section 3, we introduce for a general $N$-partite correlation scenario
with $S_{n}$ measurements at each $n$-th site the notion of an LqHV model
and specify a necessary and sufficient condition for LqHV modelling.

In section 4, we recall \cite{loubenets7} for an $S_{1}\times \cdots \times
S_{N}$-setting correlation scenario with outcomes of an arbitrary type,
discrete or continuous, the general form of all Bell-type inequalities --
either on correlation functions or on joint probabilities.

In section 5, we prove that every quantum $S_{1}\times \cdots \times S_{N}$%
-setting correlation scenario admits an LqHV model and introduce, for an $N$%
-partite quantum state, the exact analytical upper bound on the state
parameter specifying a possibility of its $S_{1}\times \cdots \times S_{N}$%
-setting LHV description.

In section 6, via the LqHV approach, we construct analogs of Bell-type
inequalities for an $N$-partite quantum state and find the new analytical
and numerical upper bounds on the maximal violation by an $N$-partite
quantum state of all $S_{1}\times \cdots \times S_{N}$-setting Bell-type
inequalities. The comparison of our exact general $N$-partite numerical
upper estimate specified for $N=2,3$ with the bipartite and tripartite
numerical estimates reported in the literature is given in section 6.2.

In section 7, we summarize the main results of the present paper.

In appendices A, B, C, we present proofs of some statements formulated in
sections 2, 5 and 6, respectively.

\section{Preliminaries: source operators, tensor positivity, the covering
norm}

In this section, for our consideration in sections 5, 6, we specify the
notion of a \emph{source operator} \cite{loubenets2, loubenets2a} for an $N$%
-partite state, the notion of \emph{tensor positivity} \cite{loubenets11}
and introduce a new norm, \emph{the covering norm}, on the space of all
self-adjoint trace class operators on a tensor product Hilbert space.

For a quantum state $\rho $ on a complex separable Hilbert space $\mathcal{H}%
_{1}\otimes \cdots \otimes \mathcal{H}_{N}$ and arbitrary positive integers $%
S_{1},...,S_{N}\geq 1,$ denote by $T_{_{S_{1}\times \cdots \times
S_{N}}}^{(\rho )}$ a self-adjoint trace class operator, defined on $\mathcal{%
H}_{1}^{\otimes S_{1}}\otimes \cdots \otimes \mathcal{H}_{N}^{\otimes S_{N}}$
and satisfying the relation%
\begin{align}
& \mathrm{tr}\left[ T_{_{S_{1}\times \cdots \times S_{N}}}^{(\rho )}\left\{ 
\mathbb{I}_{\mathcal{H}_{1}^{\otimes k_{1}}}\otimes X_{1}\otimes \mathbb{I}_{%
\mathcal{H}_{1}^{\otimes (S_{1}-1-k_{1})}}\otimes \cdots \otimes \mathbb{I}_{%
\mathcal{H}_{N}^{\otimes k_{N}}}\otimes X_{N}\otimes \mathbb{I}_{\mathcal{H}%
_{1}^{\otimes (S_{N}-1-k_{N})}}\right\} \right]  \label{d} \\
& =\mathrm{tr}\left[ \rho \left\{ X_{1}\otimes \cdots \otimes X_{N}\right\} %
\right] ,  \notag \\
k_{1}& =0,...,(S_{1}-1),...,k_{N}=0,...,(S_{N}-1),  \notag
\end{align}%
for all bounded linear operators $X_{1},...,X_{N}$ on Hilbert spaces $%
\mathcal{H}_{1},....,\mathcal{H}_{N},$ respectively. In (\ref{d}), we set $%
\mathbb{I}_{\mathcal{H}_{n}^{\otimes k}}\otimes X_{n}\mid _{_{k=0}}$ $%
=X_{n}\otimes \mathbb{I}_{\mathcal{H}_{n}^{\otimes k}}\mid _{_{k=0}}$ $%
:=X_{n}.$ Clearly, \textrm{tr}$[T_{_{S_{1}\times \cdots \times
S_{N}}}^{(\rho )}]=1$ and $T_{_{1\times \cdots \times 1}}^{(\rho )}\equiv
\rho .$

\begin{definition}[Source operators \protect\cite{loubenets2, loubenets2a}]
For a state $\rho $ on a Hilbert space \cite{32} $\mathcal{H}_{1}\otimes
\cdots \otimes \mathcal{H}_{N}$ and arbitrary positive integers $%
S_{1},...,S_{N}\geq 1$, we call each of self-adjoint trace class operators $%
T_{_{S_{1}\times \cdots \times S_{N}}}^{(\rho )}$ on $\mathcal{H}%
_{1}^{\otimes S_{1}}\otimes \cdots \otimes \mathcal{H}_{N}^{\otimes S_{N}}$
satisfying relation (\ref{d}) as an $S_{1}\times \cdots \times S_{N}$%
-setting source operator for state $\rho .$
\end{definition}

For a source operator $T,$ its the trace norm 
\begin{equation}
\left\Vert T\right\Vert _{1}:=\mathrm{tr}[|T|]=\mathrm{tr}\left[ T^{+}+T^{-}%
\right] =1+2\mathrm{tr}[T^{-}]\geq1.  \label{4}
\end{equation}
Here, $T^{\pm}\geq0,$ $T^{+}T^{-}=T^{-}T^{+}=0$ are positive trace class
operators in the spectral decomposition $T=T^{+}-T^{-}$ and $|T|:=\sqrt{T^{2}%
}=T^{+}+T^{-}.$

\begin{proposition}
For every state $\rho$ on a Hilbert space $\mathcal{H}_{1}\otimes\cdots
\otimes\mathcal{H}_{N}$ and arbitrary positive integers $S_{1},...S_{N}%
\geq1, $ there exists a source operator $T_{_{S_{1}\times\cdots\times
S_{N}}}^{(\rho )}.$
\end{proposition}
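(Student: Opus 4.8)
I would prove Proposition~1 by an explicit construction: push $\rho$ forward through a tensor product of ``slot-spreading'' linear maps built separately at each site. Fix, for every $n=1,\dots,N$, an arbitrary state $\omega_{n}$ on $\mathcal{H}_{n}$ and define, on trace class operators $\sigma$ of $\mathcal{H}_{n}$, the linear map
\begin{equation*}
\Phi_{n}(\sigma)\;:=\;\sum_{j=1}^{S_{n}}\omega_{n}^{\otimes(j-1)}\otimes\sigma\otimes\omega_{n}^{\otimes(S_{n}-j)}\;-\;(S_{n}-1)\,(\mathrm{tr}\,\sigma)\;\omega_{n}^{\otimes S_{n}}
\end{equation*}
(with the convention $\omega_{n}^{\otimes0}\otimes\sigma:=\sigma=:\sigma\otimes\omega_{n}^{\otimes0}$), which sends a trace class operator on $\mathcal{H}_{n}$ to a trace class operator on $\mathcal{H}_{n}^{\otimes S_{n}}$; for $S_{n}=1$ it is the identity map. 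I would then set $T:=(\Phi_{1}\otimes\cdots\otimes\Phi_{N})(\rho)$, the image of $\rho$ under the linear extension to the trace class of $\mathcal{H}_{1}\otimes\cdots\otimes\mathcal{H}_{N}$ of the assignment $\sigma_{1}\otimes\cdots\otimes\sigma_{N}\mapsto\Phi_{1}(\sigma_{1})\otimes\cdots\otimes\Phi_{N}(\sigma_{N})$; in particular, for all $S_{n}=1$ one recovers $T=\rho$, consistently with $T_{1\times\cdots\times1}^{(\rho)}\equiv\rho$. Finding the right $\Phi_{n}$ is the one genuinely nonobvious point: the first guess $\sigma\mapsto\frac{1}{S_{n}}\sum_{j}\omega_{n}^{\otimes(j-1)}\otimes\sigma\otimes\omega_{n}^{\otimes(S_{n}-j)}$ does \emph{not} work, and the subtracted inclusion--exclusion term is precisely what is needed to cancel the ``contamination'' arising below and to keep $\mathrm{tr}\,\Phi_{n}(\sigma)=\mathrm{tr}\,\sigma$.

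The core of the argument is the evaluation of the dual map $\Phi_{n}^{\ast}$ (adjoint with respect to the trace pairing; it sends bounded operators on $\mathcal{H}_{n}^{\otimes S_{n}}$ to bounded operators on $\mathcal{H}_{n}$) on precisely the operators entering relation~(\ref{d}). For $Y_{n}:=\mathbb{I}_{\mathcal{H}_{n}^{\otimes k_{n}}}\otimes X_{n}\otimes\mathbb{I}_{\mathcal{H}_{n}^{\otimes(S_{n}-1-k_{n})}}$ and an arbitrary trace class $\sigma$ on $\mathcal{H}_{n}$, one expands $\mathrm{tr}[\sigma\,\Phi_{n}^{\ast}(Y_{n})]=\mathrm{tr}[\Phi_{n}(\sigma)\,Y_{n}]$ term by term: the summand with $j=k_{n}+1$ yields $\mathrm{tr}[\sigma X_{n}]$, each of the other $S_{n}-1$ summands yields $(\mathrm{tr}\,\sigma)\,\mathrm{tr}[\omega_{n}X_{n}]$, and the subtracted term removes exactly $(S_{n}-1)(\mathrm{tr}\,\sigma)\,\mathrm{tr}[\omega_{n}X_{n}]$; thus $\mathrm{tr}[\sigma\,\Phi_{n}^{\ast}(Y_{n})]=\mathrm{tr}[\sigma X_{n}]$ for every such $\sigma$, i.e.
\begin{equation*}
\Phi_{n}^{\ast}\big(\mathbb{I}_{\mathcal{H}_{n}^{\otimes k_{n}}}\otimes X_{n}\otimes\mathbb{I}_{\mathcal{H}_{n}^{\otimes(S_{n}-1-k_{n})}}\big)=X_{n},\qquad k_{n}=0,\dots,S_{n}-1,
\end{equation*}
for every bounded $X_{n}$; in particular $\Phi_{n}^{\ast}(\mathbb{I}_{\mathcal{H}_{n}^{\otimes S_{n}}})=\mathbb{I}_{\mathcal{H}_{n}}$.

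With this identity the proposition follows at once. Each $\Phi_{n}$ maps the trace class of $\mathcal{H}_{n}$ into the trace class of $\mathcal{H}_{n}^{\otimes S_{n}}$ and preserves self-adjointness (being a difference of two normal completely positive maps), so $T$ is a self-adjoint trace class operator on $\mathcal{H}_{1}^{\otimes S_{1}}\otimes\cdots\otimes\mathcal{H}_{N}^{\otimes S_{N}}$. By its tensor-product structure together with the displayed identity, $\mathrm{tr}[T\{\bigotimes_{n}Y_{n}\}]=\mathrm{tr}[\rho\{\bigotimes_{n}\Phi_{n}^{\ast}(Y_{n})\}]=\mathrm{tr}[\rho\{X_{1}\otimes\cdots\otimes X_{N}\}]$ for all bounded $X_{1},\dots,X_{N}$ and all admissible $k_{1},\dots,k_{N}$, which is precisely relation~(\ref{d}); taking all $X_{n}=\mathbb{I}_{\mathcal{H}_{n}}$ and $k_{n}=0$ gives $\mathrm{tr}[T]=\mathrm{tr}[\rho]=1$. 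Hence $T$ is an $S_{1}\times\cdots\times S_{N}$-setting source operator for $\rho$, which proves the claim. The one point needing care when $\dim\mathcal{H}_{n}=\infty$ is that $\Phi_{1}\otimes\cdots\otimes\Phi_{N}$ genuinely extends to a trace-norm-bounded map on the trace class of $\mathcal{H}_{1}\otimes\cdots\otimes\mathcal{H}_{N}$; this is ensured by writing each $\Phi_{n}$ as a difference of two normal completely positive maps (so that the tensor product is completely bounded, with norm at most $\prod_{n}(2S_{n}-1)$), or, more elementarily, by first treating finite-dimensional $\mathcal{H}_{n}$ --- where no such issue arises --- and then approximating $\rho$ in trace norm by finite-rank operators.
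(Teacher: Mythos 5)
Your construction is correct, and it reaches the result by a genuinely different route than the paper. The paper (appendix A) builds a source operator explicitly in a fixed product basis: for $N=2$ it takes the symmetrized operator (\ref{A2}) with correction terms formed from the reduced states $\rho_{1},\rho_{2}$ and auxiliary states $\sigma_{n}$, and for $N=3,4$ it proceeds recursively, inserting source operators of reduced states, with the general $N$ case asserted as a straightforward generalization; the same explicit operators are then reused to get the trace-norm bounds (\ref{A9}). You instead define one local hermiticity-preserving map $\Phi_{n}$ per site (a difference of two completely positive maps), prove the single dual identity $\Phi_{n}^{\ast}(\mathbb{I}^{\otimes k_{n}}\otimes X_{n}\otimes \mathbb{I}^{\otimes (S_{n}-1-k_{n})})=X_{n}$, and obtain the source operator as $(\Phi_{1}\otimes\cdots\otimes\Phi_{N})(\rho )$, so that relation (\ref{d}) follows at once for all $N$, all $k_{n}$ and all bounded $X_{n}$ without any basis decomposition or induction; I checked that for $N=2$ your operator differs from (\ref{A2}) (e.g.\ your cross terms carry $A_{1}(\rho _{1})$ rather than $\rho _{1}^{\otimes S_{1}}$ and the last term enters with the opposite sign), but it satisfies (\ref{d}) all the same, and source operators are in any case non-unique. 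Your approach buys uniformity in $N$, an explicit treatment of the infinite-dimensional issue (well-definedness of the tensor product of the maps via the CP decomposition, or by finite-rank approximation), and as a by-product the trace-norm bound $\prod_{n}(2S_{n}-1)$, consistent with (\ref{A9}); what it does not give is the paper's second family of source operators $\widetilde{\tau }$ with the dimension-dependent bound (\ref{A21}), which the paper needs later for theorem 4 — but that is beyond the statement of proposition 1, so no gap for the claim at hand.
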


\begin{proof}
For a bipartite case, this statement has been proved for settings $1\times2$ 
$,$ $2\times1$ by proposition 1 in Ref. \cite{loubenets2}. This proof was
further generalized in appendix of Ref. \cite{loubenets3} for arbitrary $%
1\times S_{2},$ $S_{1}\times1.$ The proof for a general $N$-partite case
with setting $S_{1}\times...\times S_{N}$ is presented in appendix A.
\end{proof}

If $T_{_{S_{1}\times\cdots\times S_{N}}}^{(\rho)}$ is a source operator for
state $\rho$, then each of its reduced $\left( T_{_{S_{1}\times\cdots\times
S_{N}}}^{(\rho)}\right) _{red}$ on a Hilbert space $\mathcal{H}_{1}^{\otimes
L_{1}}\otimes\cdots\otimes\mathcal{H}_{N}^{\otimes L_{N}}$, with $1\leq
L_{n}<S_{n},$ constitutes an $L_{1}\times\cdots\times L_{N}$-setting source
operator for state $\rho$ and 
\begin{equation}
1\leq\left\Vert \left( T_{_{S_{1}\times\cdots\times S_{N}}}^{(\rho)}\right)
_{red}\right\Vert _{1}\leq\left\Vert T_{_{S_{1}\times\cdots\times
S_{N}}}^{(\rho)}\right\Vert _{1}.  \label{6_}
\end{equation}
\medskip

In order to analyze situations where, for a source operator $T,$ relation $%
\mathrm{tr}[T\{X_{1}\otimes\cdots$ $\otimes X_{m}\}]\geq0$ holds for
arbitrary positive operators $X_{1},...,X_{m},$ we specify the following
general notion.

\begin{definition}[Tensor positivity \protect\cite{loubenets11}]
We call a bounded linear operator $Z$ on a Hilbert space $\mathcal{G}%
_{1}\otimes\mathcal{\cdots}\otimes\mathcal{G}_{m},$ $m\geq1,$ as tensor
positive and denote it by $Z\overset{\otimes}{\geq }0$ if the scalar product%
\begin{equation}
\left( \psi_{1}\otimes\cdots\otimes\psi_{m},Z\text{ }\psi_{1}\otimes
\cdots\otimes\psi_{m}\right) \geq0  \label{7}
\end{equation}
for arbitrary $\psi_{1}\in\mathcal{G}_{1},...,\psi_{m}\in\mathcal{G}_{m}.$
\end{definition}

\begin{remark}
For space $\mathcal{G}_{1}\otimes\mathcal{G}_{2},$ the notion of tensor
positivity is similar by its meaning to "block-positivity" in Ref. \cite%
{block}. We, however, consider that, for a tensor product of any number of
arbitrary Hilbert spaces, possibly infinite dimensional, our term "tensor
positivity" is more suitable.
\end{remark}

For $m=1$, tensor positivity is equivalent to positivity. For $m\geq2,$
positivity implies tensor positivity but not vice versa. For example,
operator $V(\psi_{1}\otimes\psi_{2}):=\psi_{2}\otimes\psi_{1}$ on space $%
\mathcal{H}\otimes\mathcal{H}$ is tensor positive but not positive.

Since on a complex separable Hilbert space, every positive operator is
self-adjoint, from (\ref{7}) and the spectral theorem it follows that, for a
trace class tensor positive operator $W\overset{\otimes}{\geq}0$ on a
complex separable Hilbert space $\mathcal{G}_{1}\otimes\mathcal{\cdots}%
\otimes \mathcal{G}_{m},$ relation $\mathrm{tr}\left[ W\{X_{1}\otimes\cdots%
\otimes X_{m}\}\right] \geq0$ holds for arbitrary positive operators $%
X_{1},...,X_{m}$ on spaces $\mathcal{G}_{1},..,\mathcal{G}_{m}$,
respectively. In particular, $\mathrm{tr}[W]\geq0,$ for each trace class $W%
\overset{\otimes }{\geq}0.$

If a trace class operator on $\mathcal{G}_{1}\otimes\mathcal{\cdots}\otimes%
\mathcal{G}_{m}$ is tensor positive, then any of its reduced operators is
also tensor positive. The converse of this statement is not true.

Coming back to source operators, we stress that though, for every $N$%
-partite state, a source operator exists (see proposition 1) for every
setting $S_{1}\times\cdots\times S_{N},$ \emph{an arbitrary }$N$\emph{%
-partite} \emph{state does not need to have a tensor positive source operator%
\cite{ter}. }

For example, every separable $N$-partite state $\rho=\sum\alpha_{i}\rho
_{1}^{(i)}\otimes\cdots\otimes\rho_{N}^{(i)},$ $\alpha_{i}>0,$ $\sum\alpha
_{i}=1,$ has a positive source operator 
\begin{equation}
\sum\alpha_{i}\left( \rho_{1}^{(i)}\right) ^{\otimes S_{1}}\otimes
\cdots\otimes\left( \rho_{N}^{(i)}\right) ^{\otimes S_{N}}  \label{9}
\end{equation}
for arbitrary $S_{1},...,S_{N}\geq1.$ However, a nonseparable state does not
need to have a tensor positive source operator even for at least one
setting. In Refs. \cite{loubenets2, loubenets2a, loubenets2b, loubenets3,
loubenets11}, we present examples of source operators for some nonseparable
bipartite states and single out the state parameters for which these source
operators become tensor positive.

Suppose now that we want to decompose a source operator into two tensor
positive operators. The following \emph{new} notion allows us to consider
such decompositions.

\begin{definition}[Coverings]
For a self-adjoint bounded linear operator $Z$ on a Hilbert space $\mathcal{G%
}_{1}\otimes \mathcal{\cdots }\otimes \mathcal{G}_{m},$ $m\geq 1,$ we call a
tensor positive operator $Z_{cov}$ on $\mathcal{G}_{1}\otimes \mathcal{%
\cdots }\otimes \mathcal{G}_{m}$ satisfying relations 
\begin{equation}
Z_{cov}\pm Z\overset{\otimes }{\geq }0  \label{10}
\end{equation}%
as a covering of $Z.$
\end{definition}

If $Z$ is tensor positive, then it, itself, represents one of its coverings.
In view of (\ref{10}), every self-adjoint bounded linear operator $Z$ on $%
\mathcal{G}_{1}\otimes \mathcal{\cdots }\otimes \mathcal{G}_{m}$ admits the
decomposition%
\begin{equation}
Z=\frac{1}{2}\left( Z_{cov}+Z\right) -\frac{1}{2}\left( Z_{cov}-Z\right)
\label{11}
\end{equation}%
via tensor positive operators $Z_{cov}\pm Z$ $\overset{\otimes }{\geq }0,$
where $Z_{cov}$ is any of its coverings.

For a source operator, we are interested in its trace class coverings.
Denote by $\mathcal{T}_{\mathcal{G}_{1}\otimes\mathcal{\cdots}\otimes%
\mathcal{G}_{m}}$ the linear space of all trace class operators on a Hilbert
space $\mathcal{G}_{1}\otimes\mathcal{\cdots}\otimes\mathcal{G}_{m}$ and by $%
\mathcal{T}_{\mathcal{G}_{1}\otimes\mathcal{\cdots}\otimes\mathcal{G}%
_{m}}^{(sa)}\subset\mathcal{T}_{\mathcal{G}_{1}\otimes\mathcal{\cdots}\otimes%
\mathcal{G}_{m}}$ -- the subspace of all self-adjoint trace class operators.

\begin{proposition}
For every self-adjoint trace class operator $W$ on a Hilbert space $\mathcal{%
G}_{1}\otimes\mathcal{\cdots}\otimes\mathcal{G}_{m},$ there exists a trace
class covering $W_{cov}.$
\end{proposition}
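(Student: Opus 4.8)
The plan is to exhibit an explicit trace class covering for an arbitrary self-adjoint trace class operator $W$ on $\mathcal{G}_{1}\otimes\mathcal{\cdots}\otimes\mathcal{G}_{m}$. The natural first candidate is some multiple of the identity, $c\,\mathbb{I}$; but since the identity is not trace class on an infinite dimensional space, this fails immediately, which is the crux of the difficulty. Instead I would look for a finite-rank or trace class tensor positive operator that dominates $W$ in the tensor-positivity order. The key observation is that $W$ has a spectral decomposition $W=\sum_{j}\lambda_{j}\,|\psi_{j}\rangle\langle\psi_{j}|$ with real eigenvalues $\lambda_{j}$ and $\sum_{j}|\lambda_{j}|=\|W\|_{1}<\infty$. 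Hence $|W|=\sum_{j}|\lambda_{j}|\,|\psi_{j}\rangle\langle\psi_{j}|$ is a positive trace class operator with $|W|\pm W\geq 0$ in the ordinary (hence tensor positive) sense, so $|W|$ is itself a covering. This would already prove the proposition.

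First I would record that for $m=1$ tensor positivity is just positivity, and $W_{cov}=|W|$ works trivially. For $m\geq 2$, I would verify that $|W|\pm W=2W^{\mp}\geq 0$ are positive trace class operators, invoke the remark in the excerpt that every positive trace class operator on a complex separable Hilbert space is tensor positive (positivity implies tensor positivity, as stated right after Definition of tensor positivity), and conclude $|W|\pm W\overset{\otimes}{\geq}0$. Since $|W|$ is trace class with $\|\,|W|\,\|_{1}=\|W\|_{1}$, it is the desired trace class covering. I would also remark that $|W|$ need not be the "smallest" covering — coverings are generally far from unique, and in applications (Section 5) one seeks coverings with small trace norm adapted to the tensor structure — but existence is all that is claimed here.

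The only subtlety worth spelling out is the reduction from "positive" to "tensor positive": one needs that a positive operator $X\geq 0$ satisfies $(\psi_{1}\otimes\cdots\otimes\psi_{m},X\,\psi_{1}\otimes\cdots\otimes\psi_{m})\geq 0$, which is immediate since $\psi_{1}\otimes\cdots\otimes\psi_{m}$ is just a particular vector in $\mathcal{G}_{1}\otimes\mathcal{\cdots}\otimes\mathcal{G}_{m}$ and $(\varphi,X\varphi)\geq 0$ for all $\varphi$. So no real obstacle remains; the proposition is essentially a restatement of the existence of the spectral (Jordan) decomposition $W=W^{+}-W^{-}$ into positive trace class parts, combined with the trivial implication positive $\Rightarrow$ tensor positive. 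I would therefore keep the proof to a few lines, taking $W_{cov}:=|W|=W^{+}+W^{-}$ and checking $W_{cov}\pm W=2W^{\mp}\overset{\otimes}{\geq}0$.
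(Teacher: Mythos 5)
Your proof is correct and coincides with the paper's own argument: the paper likewise takes $W_{cov}:=|W|=W^{+}+W^{-}$ from the spectral (Jordan) decomposition and notes that it is a trace class covering, with the implication positive $\Rightarrow$ tensor positive doing the only work. (Minor index slip: $|W|\pm W=2W^{\pm}$, not $2W^{\mp}$, but either way the operators are positive, so nothing is affected.)
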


\begin{proof}
For an operator $W\in\mathcal{T}_{\mathcal{G}_{1}\otimes\mathcal{\cdots }%
\otimes\mathcal{G}_{m}}^{sa},$ consider its spectral decomposition $%
W=W^{+}-W^{-},$ where $W^{\pm}\geq0,$ $W^{+}W^{-}=W^{-}W^{+}=0.$ The
positive trace class operator $\left\vert W\right\vert :=\sqrt{W^{2}}%
=W^{+}+W^{-}$, with $\mathrm{tr}[\left\vert W\right\vert ]:=\left\Vert
W\right\Vert _{1}<\infty,$ constitutes a trace class covering of $W$. This
proves the statement.
\end{proof}

From (\ref{10}) it follows that, for every self-adjoint trace class operator 
$W,$ the relation 
\begin{equation}
\mathrm{tr}[W_{cov}]\geq \left\vert \mathrm{tr}[W]\right\vert \geq 0
\label{12---}
\end{equation}%
holds for each of its trace class coverings $W_{cov}.$

Thus, for every $W\in\mathcal{T}_{\mathcal{G}_{1}\otimes\mathcal{\cdots }%
\otimes\mathcal{G}_{m}}^{(sa)},$ the set $\{W_{cov}\in\mathcal{T}_{\mathcal{G%
}_{1}\otimes\mathcal{\cdots}\otimes\mathcal{G}_{m}}\}\ni\left\vert
W\right\vert $ contains at least one element and $\mathrm{tr}[W_{cov}]\geq0$
for each covering $W_{cov}$. Therefore, we can introduce on space $\mathcal{T%
}_{\mathcal{G}_{1}\otimes\mathcal{\cdots}\otimes\mathcal{G}_{m}}^{(sa)}$ the
following function 
\begin{equation}
f(W):=\inf_{W_{cov}\in\mathcal{T}_{\mathcal{G}_{1}\otimes\mathcal{\cdots }%
\otimes\mathcal{G}_{m}}}\mathrm{tr}\left[ W_{cov}\right] \geq0,\text{ \ \ \ }%
\forall W\in\mathcal{T}_{\mathcal{G}_{1}\otimes\mathcal{\cdots}\otimes%
\mathcal{G}_{m}}^{(sa)},  \label{13}
\end{equation}
and relations 
\begin{align}
f(W) & =0\text{ \ }\Leftrightarrow\text{ \ }W=0,  \label{14} \\
\text{\ }f(\alpha W) & =\left\vert \alpha\right\vert \text{ }f(W),\text{ \ \ 
}\forall\alpha\in\mathbb{R},  \notag \\
f(W_{1}+W_{2}) & \leq f(W_{1})+f(W_{2}),  \notag
\end{align}
hold for all $W,W_{1},W_{2}\in\mathcal{T}_{\mathcal{G}_{1}\otimes \mathcal{%
\cdots}\otimes\mathcal{G}_{m}}^{(sa)}$. The first of these relations follows
from property 1 in lemma 1 below. The second relation -- from (\ref{10}), (%
\ref{13}). For the proof of the last relation in (\ref{14}), we note that,
for arbitrary trace class coverings $(W_{1})_{cov},$ $(W_{2})_{cov}$ of
operators $W_{1},$ $W_{2}\in\mathcal{T}_{\mathcal{G}_{1}\otimes\mathcal{%
\cdots}\otimes\mathcal{G}_{m}}^{sa},$ operator $(W_{1})_{cov}+(W_{2})_{cov} $
constitutes a possible trace class covering of $W_{1}+W_{2}$. Hence, $%
\{(W_{1}+W_{2})_{cov}\}\supseteq\{(W_{1})_{cov}+(W_{2})_{cov}\}$ and, taking
this inclusion into the account in infimum (\ref{13}) specifying $%
f(W_{1}+W_{2}),$ we come to the third relation in (\ref{14}).

In view of relations (\ref{14}), function (\ref{13}) constitutes a norm on
space $\mathcal{T}_{\mathcal{G}_{1}\otimes \mathcal{\cdots }\otimes \mathcal{%
G}_{m}}^{(sa)}$

\begin{definition}
We refer to norm (\ref{13}) as the covering norm and denote it by 
\begin{equation}
\left\Vert W\right\Vert _{cov}:=\inf_{W_{cov}\in \mathcal{T}_{\mathcal{G}%
_{1}\otimes \mathcal{\cdots }\otimes \mathcal{G}_{m}}}\mathrm{tr}[W_{cov}],%
\text{ \ \ }\forall W\in \mathcal{T}_{\mathcal{G}_{1}\otimes \mathcal{\cdots 
}\otimes \mathcal{G}_{m}}^{(sa)}.  \label{15}
\end{equation}
\end{definition}

The following general properties of the covering norm are proved in Appendix
A.

\begin{lemma}
For a self-adjoint trace class operator $W$ on a Hilbert space $\mathcal{G}%
_{1}\otimes\mathcal{\cdots}\otimes\mathcal{G}_{m}:\medskip$\newline
(1) $\left\vert \mathrm{tr}[W]\right\vert \leq\sup\left\vert \mathrm{tr}%
\left[ W\{X_{1}\otimes\cdots\otimes X_{m}\}\right] \right\vert
\leq\left\Vert W\right\Vert _{cov}\leq\left\Vert W\right\Vert _{1},$ \newline
where supremum is taken over all self-adjoint bounded linear operators $%
X_{1},...,X_{m}$ with operator norms $||X_{j}||=1$ on spaces $\mathcal{G}%
_{1},...,\mathcal{G}_{m},$ respectively;$\medskip$\newline
(2) if $W\overset{\otimes}{\geq}0,$ then $\left\Vert W\right\Vert _{cov}=%
\mathrm{tr}[W];\medskip$\newline
(3) $\left\vert \mathrm{tr}[W]\right\vert \leq\left\Vert W_{red}\right\Vert
_{cov}\leq\left\Vert W\right\Vert _{cov}$ for each operator $W_{red}$
reduced from $W.$
\end{lemma}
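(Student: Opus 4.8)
The plan is to prove the three chained inequalities of Lemma 1 separately, exploiting the defining infimum (\ref{15}) and the basic properties of tensor positive operators established just above the statement.

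\textbf{Property (1).} For the rightmost inequality $\left\Vert W\right\Vert _{cov}\leq\left\Vert W\right\Vert _{1}$, I would simply recall that $|W|=\sqrt{W^{2}}$ is a trace class covering of $W$ (this is exactly Proposition 3, whose proof exhibits $|W|$ as a covering), so the infimum defining $\left\Vert W\right\Vert _{cov}$ is bounded above by $\mathrm{tr}[|W|]=\left\Vert W\right\Vert _{1}$. For the middle inequality, fix any trace class covering $W_{cov}$ and any self-adjoint $X_{1},\dots,X_{m}$ with $\|X_{j}\|=1$. Write $X_{j}=X_{j}^{+}-X_{j}^{-}$ with $X_{j}^{\pm}\geq0$, $\|X_{j}^{\pm}\|\leq1$; expanding the product $X_{1}\otimes\cdots\otimes X_{m}$ into $2^{m}$ terms of the form $\pm\,X_{1}^{\pm}\otimes\cdots\otimes X_{m}^{\pm}$ and using that each $W_{cov}\pm Z\overset{\otimes}{\geq}0$-type relation combined with $0\leq X_{j}^{\pm}\leq\mathbb{I}$ gives $\mathrm{tr}[(W_{cov}\pm W)\{X_{1}^{\pm}\otimes\cdots\otimes X_{m}^{\pm}\}]\geq0$ and $\leq\mathrm{tr}[W_{cov}]$, one obtains $|\mathrm{tr}[W\{X_{1}\otimes\cdots\otimes X_{m}\}]|\leq\mathrm{tr}[W_{cov}]$; taking the supremum over the $X_{j}$ and then the infimum over $W_{cov}$ yields the claim. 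Alternatively, and more cleanly, one can argue directly from (\ref{11}): $\mathrm{tr}[W\{X_{1}\otimes\cdots\otimes X_{m}\}]=\frac12\mathrm{tr}[(W_{cov}+W)\{\cdots\}]-\frac12\mathrm{tr}[(W_{cov}-W)\{\cdots\}]$, and each term is controlled in absolute value by $\frac12\mathrm{tr}[W_{cov}\{(\mathbb{I}\pm X_{1})\otimes\cdots\}]$-type expressions using tensor positivity; I would present whichever version is shortest. The leftmost inequality $|\mathrm{tr}[W]|\leq\sup_{\|X_j\|=1}|\mathrm{tr}[W\{X_{1}\otimes\cdots\otimes X_{m}\}]|$ is immediate by taking $X_{j}=\mathbb{I}_{\mathcal{G}_{j}}$ for all $j$.

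\textbf{Property (2).} If $W\overset{\otimes}{\geq}0$ then $W$ is itself a covering of $W$, so $\left\Vert W\right\Vert _{cov}\leq\mathrm{tr}[W]$. Conversely, applying property (1) with $X_{j}=\mathbb{I}_{\mathcal{G}_{j}}$ gives $\mathrm{tr}[W]=|\mathrm{tr}[W]|\leq\left\Vert W\right\Vert _{cov}$, using that $\mathrm{tr}[W]\geq0$ for trace class $W\overset{\otimes}{\geq}0$ (noted in the text). Hence equality.

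\textbf{Property (3).} For the reduced operator statement, the key observation is that if $W_{cov}$ is a covering of $W$ on $\mathcal{G}_{1}\otimes\cdots\otimes\mathcal{G}_{m}$, then the operator $(W_{cov})_{red}$ reduced to the same sublist of tensor factors as $W_{red}$ is a covering of $W_{red}$: tensor positivity passes to reduced operators (stated in the text just after the tensor positivity discussion), and partial traces of $W_{cov}\pm W\overset{\otimes}{\geq}0$ give $(W_{cov})_{red}\pm W_{red}\overset{\otimes}{\geq}0$; moreover $\mathrm{tr}[(W_{cov})_{red}]=\mathrm{tr}[W_{cov}]$. Therefore every covering of $W$ induces a covering of $W_{red}$ of equal trace, so $\left\Vert W_{red}\right\Vert _{cov}\leq\left\Vert W\right\Vert _{cov}$; and $|\mathrm{tr}[W_{red}]|=|\mathrm{tr}[W]|\leq\left\Vert W_{red}\right\Vert _{cov}$ follows by applying property (1) to $W_{red}$.

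The only genuinely delicate point is the middle inequality of property (1): making the reduction from arbitrary self-adjoint contractions $X_{j}$ to positive operators between $0$ and $\mathbb{I}$ rigorous, so that the tensor positivity relations (\ref{10}) can actually be invoked. The combinatorial bookkeeping of the $2^{m}$ sign terms and the verification that each partial trace against a tensor product of sub-unit positive operators is nonnegative and bounded by $\mathrm{tr}[W_{cov}]$ is the step I would write out most carefully; everything else is a short application of (\ref{15}), (\ref{11}), and the already-recorded facts about tensor positive operators and their reductions. Since this is relegated to Appendix A in the paper, a full line-by-line treatment there is appropriate, while the body only needs the statement.
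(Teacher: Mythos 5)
Your handling of properties (2), (3) and of the two outer inequalities in property (1) coincides with the paper's own proof: $|W|$ serves as an explicit trace class covering for the bound $\left\Vert W\right\Vert _{cov}\leq \left\Vert W\right\Vert _{1}$, the choice $X_{j}=\mathbb{I}_{\mathcal{G}_{j}}$ gives the leftmost bound, and property (3) rests on exactly the inclusion you describe, namely that the reduction of any trace class covering of $W$ is a covering of $W_{red}$ with the same trace, so the infimum in (\ref{15}) can only decrease. The one step where you take a different route is the middle inequality of (1). The paper also starts from decomposition (\ref{11}), but then applies the spectral theorem: writing $X_{k}=\int \lambda \,\mathbb{E}_{X_{k}}(\mathrm{d}\lambda )$ with spectrum in $[-1,1]$, the quantity $\mathrm{tr}[(W_{cov}\pm W)\{X_{1}\otimes \cdots \otimes X_{m}\}]$ becomes the integral of $\lambda _{1}\cdot \ldots \cdot \lambda _{m}$ against the positive product measure $\mathrm{tr}[(W_{cov}\pm W)\{\mathbb{E}_{X_{1}}(\mathrm{d}\lambda _{1})\otimes \cdots \otimes \mathbb{E}_{X_{m}}(\mathrm{d}\lambda _{m})\}]$ of total mass $\mathrm{tr}[W_{cov}\pm W]$, and the two halves of (\ref{11}) then add up to $\mathrm{tr}[W_{cov}]$. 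Your primary route via $X_{j}=X_{j}^{+}-X_{j}^{-}$ is a discrete analogue of this and can be completed, but not quite as sketched: each of the $2^{m}$ terms satisfies $0\leq \mathrm{tr}[(W_{cov}\pm W)\{X_{1}^{\epsilon _{1}}\otimes \cdots \otimes X_{m}^{\epsilon _{m}}\}]\leq \mathrm{tr}[W_{cov}\pm W]$ (not $\mathrm{tr}[W_{cov}]$, since $\mathrm{tr}[W]$ need not vanish), and a term-by-term triangle inequality over the $2^{m}$ signed terms produces a spurious factor $2^{m}$. To obtain the stated bound you must resum by multilinearity, $\sum_{\epsilon }\mathrm{tr}[(W_{cov}\pm W)\{X_{1}^{\epsilon _{1}}\otimes \cdots \otimes X_{m}^{\epsilon _{m}}\}]=\mathrm{tr}[(W_{cov}\pm W)\{|X_{1}|\otimes \cdots \otimes |X_{m}|\}]$, and then bound this by $\mathrm{tr}[W_{cov}\pm W]$ using $0\leq |X_{j}|\leq \mathbb{I}$ and a telescoping decomposition of $\mathbb{I}^{\otimes m}-|X_{1}|\otimes \cdots \otimes |X_{m}|$ into tensor products of positive operators, each of which pairs nonnegatively with the tensor positive operators $W_{cov}\pm W$; averaging the two halves of (\ref{11}) then gives $\mathrm{tr}[W_{cov}]$, and the infimum over coverings finishes the argument. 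The paper's spectral-measure formulation delivers this monotonicity and resummation in a single stroke and treats discrete and continuous spectra uniformly, so if you keep your decomposition route, the resummation is precisely the step you must write out; your fallback version via (\ref{11}) is, once the spectral theorem is invoked, the paper's proof.
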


As an example, consider the self-adjoint operator $V(\psi_{1}\otimes\psi
_{2})=\psi_{2}\otimes\psi_{1}$ on $\mathbb{C}^{d}\otimes\mathbb{C}^{d}.$ For
this operator, the covering norm $\left\Vert V\right\Vert _{cov}=d$ while
the trace norm $\left\Vert V\right\Vert _{1}=d^{2}.$

For an $S_{1}\times ...\times S_{N}$-setting source operator $%
T_{_{S_{1}\times ...\times S_{N}}}^{(\rho )}$ for a state $\rho $ on a
Hilbert space $\mathcal{H}_{1}\otimes \cdots \otimes \mathcal{H}_{N},$ lemma
1 implies%
\begin{align}
1& \leq \left\Vert T_{_{S_{1}\times \cdots \times S_{N}}}^{(\rho
)}\right\Vert _{cov}\leq \left\Vert T_{_{S_{1}\times \cdots \times
S_{N}}}^{(\rho )}\right\Vert _{1},  \label{19} \\
T_{_{S_{1}\times \cdots \times S_{N}}}^{(\rho )}\overset{\otimes }{\geq }0%
\text{ \ }& \Rightarrow \text{ \ }\left\Vert T_{_{S_{1}\times \cdots \times
S_{N}}}^{(\rho )}\right\Vert _{cov}=1  \label{19'}
\end{align}%
and 
\begin{equation}
1\leq \left\Vert \left( T_{_{S_{1}\times \cdots \times S_{N}}}^{(\rho
)}\right) _{_{red}}\right\Vert _{cov}\leq \left\Vert T_{_{S_{1}\times \cdots
\times S_{N}}}^{(\rho )}\right\Vert _{cov}  \label{20}
\end{equation}%
for each source operator $\left( T_{_{S_{1}\times \cdots \times
S_{N}}}^{(\rho )}\right) _{_{red}}$ reduced from a source operator $%
T_{_{S_{1}\times \cdots \times S_{N}}}^{(\rho )}.$

\section{LqHV modelling of a general correlation scenario}

Consider an $N$-partite correlation scenario, where each $n$-th of $N\geq2$
parties (players) performs $S_{n}\geq1$ measurements with outcomes $%
\lambda_{n}\in\Lambda_{n}\ $of an arbitrary type \ and $\mathcal{F}%
_{\Lambda_{n}} $ is a $\sigma$-algebra of events $F_{n}\subseteq$ $\Lambda
_{n}$ observed at $n$-th site. For the general framework on the
probabilistic description of multipartite correlation scenarios, see Ref. 
\cite{loubenets3}

We label each measurement at $n$-th site by a positive integer $%
s_{n}=1,...,S_{n}$ and each of $N$-partite joint measurements, induced by
this correlation scenario and with outcomes $(\lambda_{1},\ldots,%
\lambda_{N})\in\Lambda_{1}\times\cdots\times\Lambda_{N}$ -- by an $N$-tuple $%
(s_{1},...,s_{N}),$ where $n$-th component refers to a measurement at $n$-th
site.

For concreteness, we further refer to an $S_{1}\times \cdots \times S_{N}$%
-setting correlation scenario with outcomes in $\Lambda _{1}\times \cdots
\times \Lambda _{N}$ by symbol 
\begin{equation}
\mathcal{E}_{S,\Lambda },\text{ \ \ }S:=S_{1}\times \cdots \times S_{N},%
\text{ \ \ }\Lambda :=\Lambda _{1}\times \cdots \times \Lambda _{N},
\label{notation}
\end{equation}%
and denote by $P_{(s_{1},...,s_{N})}^{(\mathcal{E}_{S,\Lambda })}$ a
probability measure, describing an $N$-partite joint measurement $%
(s_{1},...,s_{N})$ of scenario $\mathcal{E}_{S,\Lambda }$ and defined on the
direct product $(\Lambda _{1}\times \cdots \times \Lambda _{N},$ $\mathcal{F}%
_{\Lambda _{1}}\otimes \cdots \otimes \mathcal{F}_{\Lambda _{N}})$ of
measurable spaces $(\Lambda _{n},\mathcal{F}_{\Lambda _{n}}),$ $n=1,...,N.$
Recall \cite{dunford} that the product $\sigma $-algebra $\mathcal{F}%
_{\Lambda _{1}}\otimes \cdots \otimes \mathcal{F}_{\Lambda _{N}}$ is the
smallest $\sigma $-algebra generated by the set of all rectangles $%
F_{1}\times \cdots \times F_{N}\subseteq \Lambda _{1}\times \cdots \times
\Lambda _{N}$ with measurable "sides" $F_{n}\in \mathcal{F}_{\Lambda _{n}},$ 
$n=1,...,N.$

In what follows, we consider only standard measurable spaces. In this case,
each $(\Lambda _{n},\mathcal{F}_{\Lambda _{n}})$ is Borel isomorphic to a
measurable space $(\mathcal{X}_{n},\mathcal{B}_{\mathcal{X}_{n}}),$ where $%
\mathcal{X}_{n}\mathcal{\in B}_{\mathbb{R}}$ is a Borel subset of $\mathbb{R}
$ and $\mathcal{B}_{\mathcal{X}_{n}}:=\mathcal{B}_{\mathbb{R}}\cap \mathcal{X%
}_{n}$ is the trace on $\mathcal{X}_{n}$ of the Borel $\sigma $-algebra $%
\mathcal{B}_{\mathbb{R}}$ on $\mathbb{R}.$

For a general correlation scenario $\mathcal{E}_{S,\Lambda},$ let us
introduce the following new type of simulation models.

\begin{definition}
We say that an $S_{1}\times ...\times S_{N}$-setting correlation scenario $%
\mathcal{E}_{S,\Lambda }$, with joint probability distributions $%
P_{(s_{1},...,s_{N})}^{(\mathcal{E}_{S,\Lambda })},$ \ $%
s_{1}=1,...,S_{1},...,s_{N}=1,...,S_{N},$ and outcomes $(\lambda _{1},\ldots
,\lambda _{N})\in \Lambda _{1}\times \cdots \times \Lambda _{N}:=\Lambda ,$
admits a local quasi hidden variable (LqHV) model if all of its joint
probability distributions admit the representation 
\begin{align}
P_{(s_{1},...,s_{N})}^{(\mathcal{E}_{S,\Lambda })}(F_{1}\times \cdots \times
F_{N})& =\dint\limits_{\Omega }P_{1}^{(s_{1})}(F_{1}|\omega )\cdot \ldots
\cdot P_{N}^{(s_{_{N}})}(F_{N}|\omega )\text{ }\nu _{\mathcal{E}_{S,\Lambda
}}(\mathrm{d}\omega ),  \label{21} \\
F_{1}& \in \mathcal{F}_{\Lambda _{1}},...,F_{N}\in \mathcal{F}_{\Lambda
_{N}},  \notag
\end{align}%
in terms of a single measure space $\left( \Omega ,\mathcal{F}_{\Omega },\nu
_{\mathcal{E}_{S,\Lambda }}\right) ,$ with a normalized bounded real-valued
measure $\nu _{\mathcal{E}_{S,\Lambda }},$ and conditional probability
measures $P_{n}^{(s_{_{n}})}(\cdot |\omega ):\mathcal{F}_{\Lambda
_{n}}\rightarrow \lbrack 0,1],$ defined $\nu _{_{\mathcal{E}_{S,\Lambda }}}$%
-a.e. (almost everywhere) on $\Omega $ and such that, for each $%
s_{n}=1,...,S_{n}$ and every $n=1,...,N,$ function $%
P_{n}^{(s_{_{n}})}(F_{n}|\cdot ):\Omega \rightarrow \lbrack 0,1]$ is
measurable for all $F_{n}\in \mathcal{F}_{\Lambda _{n}}.$
\end{definition}

\begin{notation}
In a triple $\left( \Omega,\mathcal{F}_{\Omega},\nu\right) $ representing a
measure space, $\Omega$ is a non-empty set, $\mathcal{F}_{\Omega}$ is a $%
\sigma$-algebra of subsets of $\Omega$ and $\nu$ is a measure on a
measurable space $\left( \Omega,\mathcal{F}_{\Omega}\right) .$ A real-valued
measure $\nu$ is called normalized if $\nu(\Omega)=1$ and bounded if $%
\left\vert \nu(F)\right\vert \leq M<\infty$ for all $F\in\mathcal{F}%
_{\Omega}.$
\end{notation}

We stress that, in an LqHV model (\ref{21}), measure $\nu _{\mathcal{E}%
_{S,\Lambda }}$ has a simulation character and may, in general, depend (via
the lower index $\mathcal{E}_{S,\Lambda }$) on measurement settings at all
(or some) sites, as an example, see measure (\ref{22}).

From (\ref{21}) it follows that a correlation scenario $\mathcal{E}%
_{S,\Lambda }$ admitting an LqHV model satisfies the general nonsignaling
condition specified by definition 1 (Eq. (10)) in Ref. \cite{loubenets3}.

If, for a correlation scenario $\mathcal{E}_{S,\Lambda }$, there exists
representation (\ref{21}), where a normalized real-valued measure $\nu _{%
\mathcal{E}_{S,\Lambda }}$ is positive and, hence, is a probability measure,
then this scenario admits an LHV\ model\emph{\ }formulated\emph{\ }for a
general case by definition 4 (Eq. (26)) in Ref. \cite{loubenets3}.

\begin{remark}
Recall \cite{dunford} that a bounded real-valued measure $\nu $ on a
measurable space $(\Omega ,\mathcal{F}_{\Omega })$ admits the Jordan
decomposition $\nu =\nu ^{+}-\nu ^{=}$ via positive measures%
\begin{equation}
\nu ^{+}(F):=\sup_{F^{\prime }\in \mathcal{F}_{\Omega },F^{\prime }\subseteq
F}\nu (F^{\prime }),\text{ \ \ \ }\nu ^{-}(F):=-\inf_{F^{\prime }\in 
\mathcal{F}_{\Omega },F^{\prime }\subseteq F}\nu (F^{\prime }),\text{ \ \ }%
\forall F\in \mathcal{F}_{\Omega },  \label{33}
\end{equation}%
with disjoint supports. The sum $(\nu ^{+}(\Omega )+\nu ^{-}(\Omega ))$
coincides with the total variation $\left\vert \nu \right\vert (\Omega )$ of
measure $\nu $ on $\Omega ,$ which is defined by relation 
\begin{equation}
\sup \sum_{i=1}^{m}\left\vert \nu (F_{i})\right\vert :=\left\vert \nu
\right\vert (\Omega )\equiv \left\Vert \nu \right\Vert _{var},  \label{tv}
\end{equation}%
where supremum is taken over all finite systems $\{F_{i}\}$ of disjoint sets
in $\mathcal{F}_{\Omega }.$ For a bounded measure $\nu ,$ its total
variation $\left\Vert \nu \right\Vert _{var}<\infty $ and $\left\Vert \cdot
\right\Vert _{var}$ constitutes\textrm{\ }a norm, the total variation norm,%
\textrm{\ }on the linear space of all bounded real-valued measures on a
measurable space $(\Omega ,\mathcal{F}_{\Omega }).$ Thus, for a bounded
real-valued measure $\nu ,$ we have 
\begin{equation}
\nu ^{+}(\Omega )+\nu ^{-}(\Omega )=\left\Vert \nu \right\Vert _{var}.
\label{35''}
\end{equation}
If a bounded real-valued measure $\nu $ is normalized, then 
\begin{equation}
\left\Vert \nu \right\Vert _{var}=1+2\nu ^{-}(\Omega )\geq 1.  \label{35}
\end{equation}%
A normalized bounded real-valued measure $\nu $\ is a probability measure
iff $\left\Vert \nu \right\Vert _{var}=1.$ Note that relation%
\begin{equation}
\sup_{F\in \mathcal{F}_{\Omega }}\left\vert \nu (F)\right\vert \leq
\left\Vert \nu \right\Vert _{var}\leq 2\sup_{F\in \mathcal{F}_{\Omega
}}\left\vert \nu (F)\right\vert  \label{var}
\end{equation}%
holds for every real-valued measure $\nu .$
\end{remark}

From the Jordan decomposition for measure $\nu _{\mathcal{E}_{S,\Lambda }}$
it follows that if a correlation scenario admits an LqHV model (\ref{21}),
then each of its joint probability distributions $P_{(s_{1},...,s_{N})}^{(%
\mathcal{E}_{S,\Lambda })}$ can be expressed via the affine combination of
some LHV distributions $P_{(s_{1},...,s_{N})}^{(\mathcal{E}_{S,\Lambda
}^{lhv})}$ that are represented in (\ref{21}) by the same conditional
measures $P_{n}^{(s_{_{n}})}(\cdot |\omega )$. On the other hand, if a
correlation scenario with a finite number of outcomes at each site admits
the affine model in the sense of Ref. \cite{kaplan}, then this scenario
admits the special LqHV model, where measure $\nu _{\mathcal{E}_{S,\Lambda
}} $ is given by the affine combination of discrete probability measures and
each $P_{n}^{(s_{_{n}})}(F_{n}|\omega )$ has the particular form $\chi
_{_{f_{n,s_{n}}^{-1}(F_{n})}}(\omega ),$ $F_{n}\in \mathcal{F}_{\Lambda
_{n}},$ where $f_{n,s_{n}}:\Omega \rightarrow \Lambda _{n}$ is some
measurable function, $f_{n,s_{n}}^{-1}(F_{n}):=\{\omega \in \Omega \mid $ $%
f_{n,s_{n}}(\omega )\in F_{n}\}$ and $\chi _{D}(\cdot )$ is the indicator
function of a subset $D\subseteq \Omega ,$ that is: $\chi _{D}(\omega )=1$
if $\omega \in D$ and $\chi _{D}(\omega )$ $=0$ if $\omega \notin D.$

Thus, an LqHV model incorporates as particular cases and generalizes in one
whole both types of simulation models discussed in the literature -- an LHV
model and an affine model. Note that the latter model is, in principle,
built up on the concept of an LHV model.

We stress that, \emph{in an LqHV model, locality and the measure-theoretic
structure inherent to an LHV model are preserved}.

The following general theorem introduces a necessary and sufficient
condition for LqHV modelling.

\begin{theorem}
An $S_{1}\times ...\times S_{N}$-setting correlation scenario $\mathcal{E}%
_{S,\Lambda }$ admits an LqHV model (\ref{21}) if and only if, on the direct
product space $(\Lambda _{1}^{S_{1}}\times \cdots \times \Lambda
_{N}^{S_{N}},\mathcal{F}_{\Lambda _{1}}^{\otimes S_{1}}\otimes \cdots
\otimes \mathcal{F}_{\Lambda _{N}}^{\otimes S_{N}}),$ there exists a
normalized bounded real-valued measure \cite{33}%
\begin{align}
& \mu _{\mathcal{E}_{S,\Lambda }}\left( \mathrm{d}\lambda _{1}^{(1)}\times
\cdots \times \mathrm{d}\lambda _{1}^{(S_{1})}\times \cdots \times \mathrm{d}%
\lambda _{N}^{(1)}\times \cdots \times \mathrm{d}\lambda
_{N}^{(S_{N})}\right) ,  \label{26} \\
\lambda _{n}^{(s_{n})}& \in \Lambda _{n},\text{ \ \ }s_{n}=1,...,S_{n},\text{
\ \ }n=1,...,N,  \notag
\end{align}%
returning all joint probability distributions $P_{(s_{1},...,s_{N})}^{(%
\mathcal{E}_{S,\Lambda })}$ of scenario $\mathcal{E}_{S,\Lambda }$ as the
corresponding marginals.
\end{theorem}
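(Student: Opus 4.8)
The plan is to prove both implications by making explicit the link between the "big" joint distribution on $\Lambda_1^{S_1}\times\cdots\times\Lambda_N^{S_N}$ and the LqHV data $(\Omega,\mathcal F_\Omega,\nu_{\mathcal E_{S,\Lambda}})$ together with the conditional measures $P_n^{(s_n)}(\cdot|\omega)$.

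For the "only if" direction, suppose scenario $\mathcal E_{S,\Lambda}$ admits an LqHV model~(\ref{21}). I would define the candidate measure $\mu_{\mathcal E_{S,\Lambda}}$ on the product space by pushing forward $\nu_{\mathcal E_{S,\Lambda}}$ along the conditional product kernels: on rectangles,
\begin{equation}
\mu_{\mathcal E_{S,\Lambda}}\Bigl(\textstyle\prod_{n=1}^{N}\prod_{s_n=1}^{S_n}F_n^{(s_n)}\Bigr)
:=\int_{\Omega}\prod_{n=1}^{N}\prod_{s_n=1}^{S_n}P_n^{(s_n)}\bigl(F_n^{(s_n)}\big|\omega\bigr)\;\nu_{\mathcal E_{S,\Lambda}}(\mathrm d\omega),
\end{equation}
and extend it to the product $\sigma$-algebra. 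Since each $\omega\mapsto P_n^{(s_n)}(F_n^{(s_n)}|\omega)$ is measurable and bounded by $1$, the integrand is a measurable function bounded by $1$ in absolute value, so the integral is well defined, and $|\mu_{\mathcal E_{S,\Lambda}}|$ is bounded by $\|\nu_{\mathcal E_{S,\Lambda}}\|_{var}<\infty$; taking all $F_n^{(s_n)}=\Lambda_n$ gives $\mu_{\mathcal E_{S,\Lambda}}(\text{whole space})=\nu_{\mathcal E_{S,\Lambda}}(\Omega)=1$, so it is normalized. To see that the marginal of $\mu_{\mathcal E_{S,\Lambda}}$ onto the coordinates indexed by $(s_1,\dots,s_N)$ is $P_{(s_1,\dots,s_N)}^{(\mathcal E_{S,\Lambda})}$, one sets $F_n^{(s_n')}=\Lambda_n$ for $s_n'\neq s_n$; then every factor with $s_n'\neq s_n$ equals $P_n^{(s_n')}(\Lambda_n|\omega)=1$ and the integral collapses exactly to the right-hand side of~(\ref{21}), which is $P_{(s_1,\dots,s_N)}^{(\mathcal E_{S,\Lambda})}(F_1\times\cdots\times F_N)$. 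Since the rectangles generate the product $\sigma$-algebra, this identifies the marginal.

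For the "if" direction, suppose the measure $\mu_{\mathcal E_{S,\Lambda}}$ of~(\ref{26}) exists. Here I take the hidden variable space to be the product itself, $\Omega:=\Lambda_1^{S_1}\times\cdots\times\Lambda_N^{S_N}$ with $\mathcal F_\Omega$ the product $\sigma$-algebra and $\nu_{\mathcal E_{S,\Lambda}}:=\mu_{\mathcal E_{S,\Lambda}}$. Writing a point of $\Omega$ as $\omega=(\lambda_n^{(s_n)})$, I define the deterministic conditional measures $P_n^{(s_n)}(F_n|\omega):=\chi_{F_n}\bigl(\lambda_n^{(s_n)}\bigr)$, i.e.\ the indicator that the $(n,s_n)$-coordinate of $\omega$ lies in $F_n$; equivalently $P_n^{(s_n)}(\cdot|\omega)=\delta_{\lambda_n^{(s_n)}}(\cdot)$. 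These are genuine probability measures on $\mathcal F_{\Lambda_n}$ for each fixed $\omega$, and for fixed $F_n$ the map $\omega\mapsto\chi_{F_n}(\lambda_n^{(s_n)})$ is the indicator of the measurable cylinder $\{\omega:\lambda_n^{(s_n)}\in F_n\}$, hence measurable. Plugging these into the right-hand side of~(\ref{21}) gives $\int_\Omega\prod_n\chi_{F_n}(\lambda_n^{(s_n)})\,\mu_{\mathcal E_{S,\Lambda}}(\mathrm d\omega)$, which is precisely $\mu_{\mathcal E_{S,\Lambda}}$ evaluated on the cylinder whose $(n,s_n)$-side is $F_n$ and whose remaining sides are all $\Lambda_n$, i.e.\ the marginal of $\mu_{\mathcal E_{S,\Lambda}}$ onto the coordinates $(s_1,\dots,s_N)$; by hypothesis this marginal equals $P_{(s_1,\dots,s_N)}^{(\mathcal E_{S,\Lambda})}(F_1\times\cdots\times F_N)$. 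Thus~(\ref{21}) holds and the scenario admits an LqHV model.

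The routine measure-theoretic bookkeeping — extending a set function from rectangles to the product $\sigma$-algebra, and justifying the interchange of integration with the marginalization — is standard for signed (bounded) measures on standard Borel spaces, so I would invoke the Hahn--Jordan decomposition and the monotone class / $\pi$-$\lambda$ argument rather than belabor it. The only genuinely delicate point is the construction in the "only if" direction: one must confirm that the prescription on rectangles extends consistently to a bounded signed measure on the full product $\sigma$-algebra. I expect this to be the main (though still essentially technical) obstacle; it is handled by writing $\nu_{\mathcal E_{S,\Lambda}}=\nu^+-\nu^-$ via the Jordan decomposition of Remark~3, applying the usual product-kernel construction separately to the two positive finite measures $\nu^{\pm}$, and taking the difference, the boundedness of the kernels ensuring $\|\mu_{\mathcal E_{S,\Lambda}}\|_{var}\le\|\nu_{\mathcal E_{S,\Lambda}}\|_{var}$.
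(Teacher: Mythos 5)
Your proposal is correct and follows essentially the same route as the paper's own proof: in the necessity direction you build the measure on $\Lambda_1^{S_1}\times\cdots\times\Lambda_N^{S_N}$ by integrating the product of the conditional kernels against $\nu_{\mathcal{E}_{S,\Lambda}}$ (the paper's measure (\ref{meas}), with the same boundedness estimate via $\Vert\nu_{\mathcal{E}_{S,\Lambda}}\Vert_{var}$), and in the sufficiency direction you take $\Omega$ to be the product space itself with the indicator conditionals $P_n^{(s_n)}(F_n|\omega)=\chi_{F_n}(\lambda_n^{(s_n)})$, exactly as in the paper's representation (\ref{25}). Your extra care about extending the rectangle prescription via the Jordan decomposition and the product-kernel construction for $\nu^{\pm}$ is a sound way to justify the step the paper leaves implicit.
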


\begin{proof}
Let scenario $\mathcal{E}_{S,\Lambda }$ admit an LqHV model (\ref{21}) \
Then the normalized real-valued measure 
\begin{equation}
\dint\limits_{\Omega }\text{ }\{\dprod\limits_{s_{n}=1,...,S_{n},\text{ }%
n=1,...,N}P_{n}^{(s_{n})}(\mathrm{d}\lambda _{n}^{(s_{n})}\mid \omega )\text{
}\}\text{ }\nu _{\mathcal{E}_{S,\Lambda }}(\mathrm{d}\omega )  \label{meas}
\end{equation}%
on $(\Lambda _{1}^{S_{1}}\times \cdots \times \Lambda _{N}^{S_{N}},\mathcal{F%
}_{\Lambda _{1}}^{\otimes S_{1}}\otimes \cdots \otimes \mathcal{F}_{\Lambda
_{N}}^{\otimes S_{N}})$ returns all distributions $P_{(s_{1},...,s_{N})}^{(%
\mathcal{E}_{S,\Lambda })}$ of scenario $\mathcal{E}_{S,\Lambda }$ as the
corresponding marginals. The total variation of measure (\ref{meas}) is
upper bounded by $\left\Vert \nu _{\mathcal{E}_{S,\Lambda }}\right\Vert
_{var}<\infty ,$ so that, in view of relation (\ref{var}), this measure is
bounded.

In order to prove the sufficiency part of theorem 1, let there exist a
normalized bounded real-valued measure $\widetilde{\mu }_{\mathcal{E}%
_{S,\Lambda }}$ returning all probability distributions $%
P_{(s_{1},...,s_{N})}^{(\mathcal{E}_{S,\Lambda })}$ of scenario $\mathcal{E}%
_{S,\Lambda }$ as the corresponding marginals. This means that the
representation 
\begin{align}
P_{(s_{1},...,s_{N})}^{(\mathcal{E}_{S,\Lambda })}\left( F_{1}\times \cdots
\times F_{N}\right) & =\dint \chi _{_{F_{1}}}(\lambda _{1}^{(s_{1})})\cdot
\ldots \cdot \chi _{_{F_{N}}}(\lambda _{N}^{(s_{N})})\text{ }\widetilde{\mu }%
_{\mathcal{E}_{S,\Lambda }}(\mathrm{d}\lambda _{1}^{(1)}  \label{25} \\
& \times \cdots \times \mathrm{d}\lambda _{1}^{(S_{1})}\times \cdots \times 
\mathrm{d}\lambda _{N}^{(1)}\times \cdots \times \mathrm{d}\lambda
_{N}^{(S_{N})}),  \notag \\
s_{1}& =1,...,S_{1},...,s_{N}=1,...,S_{N},  \notag
\end{align}%
holds for all $F_{1}\in \mathcal{F}_{\Lambda _{1}},...,F_{N}\in \mathcal{F}%
_{\Lambda _{N}}.$ Representation (\ref{25}) constitutes a particular case of
the LqHV representation (\ref{21}) specified with 
\begin{align}
\omega ^{\prime }& =\left( \lambda _{1}^{(1)},...,\lambda
_{1}^{(S_{1})},...,\lambda _{N}^{(1)},...,\lambda _{N}^{(S_{N})}\right) , \\
\Omega ^{\prime }& =\Lambda _{1}^{S_{1}}\times ...\times \Lambda
_{N}^{S_{N}},\text{ \ \ }\mathcal{F}_{\Omega ^{\prime }}=\mathcal{F}%
_{\Lambda _{1}}^{\mathcal{\otimes S}_{1}}\otimes \cdots \otimes \mathcal{F}%
_{\Lambda _{N}}^{\otimes S_{N}},  \notag \\
\nu _{\mathcal{E}_{S,\Lambda }}^{\prime }& =\widetilde{\mu }_{\mathcal{E}%
_{S,\Lambda }},\text{ \ \ \ }P_{n}^{(s_{n})}(F_{n}|\omega ^{\prime })=\chi
_{F_{n}}(\lambda _{n}^{(s_{_{n}})}).  \notag
\end{align}%
This proves the statement.
\end{proof}

The following corollary of theorem 1 corresponds to the statements (a), (c)
of the general theorem 1 on LHV modelling in Ref. \cite{loubenets3}.

\begin{corollary}
An $S_{1}\times ...\times S_{N}$-setting correlation scenario $\mathcal{E}%
_{S,\Lambda }$ admits an LHV model if and only if there exists a probability
measure $\mu _{\mathcal{E}_{S,\Lambda }}^{\prime }$ on space $(\Lambda
_{1}^{S_{1}}\times \cdots \times \Lambda _{N}^{S_{N}},\mathcal{F}_{\Lambda
_{1}}^{\otimes S_{1}}\otimes \cdots \otimes \mathcal{F}_{\Lambda
_{N}}^{\otimes S_{N}})$ returning all joint probability distributions $%
P_{(s_{1},...,s_{N})}^{(\mathcal{E}_{S,\Lambda })}$ of scenario $\mathcal{E}%
_{S,\Lambda }$ as the corresponding marginals.
\end{corollary}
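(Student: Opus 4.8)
The plan is to mirror the proof of Theorem 1, carrying the positivity of the simulation measure through both implications. Indeed, Corollary 1 is nothing but Theorem 1 with ``normalized bounded real-valued measure'' strengthened to ``probability measure'' throughout, so the whole content is to verify that positivity is preserved by the two constructions used in that proof, together with the characterization of an LHV model recalled just before the corollary.

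For the ``only if'' part, assume $\mathcal{E}_{S,\Lambda }$ admits an LHV model, i.e.\ the representation (\ref{21}) holds with $\nu _{\mathcal{E}_{S,\Lambda }}$ a probability measure. Form the measure (\ref{meas}) on $(\Lambda _{1}^{S_{1}}\times \cdots \times \Lambda _{N}^{S_{N}},\mathcal{F}_{\Lambda _{1}}^{\otimes S_{1}}\otimes \cdots \otimes \mathcal{F}_{\Lambda _{N}}^{\otimes S_{N}})$. Exactly as in the proof of Theorem 1, it returns every $P_{(s_{1},\ldots ,s_{N})}^{(\mathcal{E}_{S,\Lambda })}$ as the corresponding marginal. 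The new point is that, since every conditional $P_{n}^{(s_{n})}(\cdot |\omega )$ is a probability measure and $\nu _{\mathcal{E}_{S,\Lambda }}$ is now nonnegative, the product of the conditional measures $P_{n}^{(s_{n})}(\cdot |\omega )$ is a probability measure for $\nu _{\mathcal{E}_{S,\Lambda }}$-a.e.\ $\omega $, and its integral against the probability measure $\nu _{\mathcal{E}_{S,\Lambda }}$ is again a nonnegative, countably additive, normalized set function on the product $\sigma $-algebra; hence (\ref{meas}) is itself a probability measure with the required marginals.

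For the ``if'' part, let $\mu _{\mathcal{E}_{S,\Lambda }}^{\prime }$ be a probability measure on the product space returning all $P_{(s_{1},\ldots ,s_{N})}^{(\mathcal{E}_{S,\Lambda })}$ as marginals. Then (\ref{25}) holds with $\widetilde{\mu }_{\mathcal{E}_{S,\Lambda }}=\mu _{\mathcal{E}_{S,\Lambda }}^{\prime }$, which is precisely the LqHV representation (\ref{21}) specified with $\Omega ^{\prime }=\Lambda _{1}^{S_{1}}\times \cdots \times \Lambda _{N}^{S_{N}}$, $\mathcal{F}_{\Omega ^{\prime }}=\mathcal{F}_{\Lambda _{1}}^{\otimes S_{1}}\otimes \cdots \otimes \mathcal{F}_{\Lambda _{N}}^{\otimes S_{N}}$, $\nu _{\mathcal{E}_{S,\Lambda }}^{\prime }=\mu _{\mathcal{E}_{S,\Lambda }}^{\prime }$ and $P_{n}^{(s_{n})}(F_{n}|\omega ^{\prime })=\chi _{F_{n}}(\lambda _{n}^{(s_{n})})$. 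Since $\mu _{\mathcal{E}_{S,\Lambda }}^{\prime }$ is a probability measure, this representation is an LHV model in the sense of Definition 4 of Ref.\ \cite{loubenets3}.

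There is essentially no obstacle here: the result is Theorem 1 with positivity of the simulation measure tracked through both constructions, which it respects. The one point meriting a line of care is the countable additivity of (\ref{meas}) on the full product $\sigma $-algebra (as opposed to merely on rectangles with measurable sides), which is the standard construction of a mixture of product probability kernels and uses the measurability, required in the definition of an LqHV model, of each function $P_{n}^{(s_{n})}(F_{n}|\cdot )$. Alternatively, one may simply invoke statements (a), (c) of Theorem 1 on LHV modelling in Ref.\ \cite{loubenets3}, to which this corollary corresponds.
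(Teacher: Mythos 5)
Your proposal is correct and follows essentially the same route as the paper's own proof: in the "only if" direction you observe that the measure (\ref{meas}) constructed in the proof of theorem 1 is a probability measure when $\nu _{\mathcal{E}_{S,\Lambda }}$ is, and in the "if" direction you note that representation (\ref{25}) with the probability measure $\mu _{\mathcal{E}_{S,\Lambda }}^{\prime }$ is itself an LHV model. Your extra remark about countable additivity of the mixture of product kernels is a reasonable point of care but does not change the argument.
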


\begin{proof}
If scenario $\mathcal{E}_{S,\Lambda }$ admits an LHV model, then there
exists representation (\ref{21}) with some probability measure $\nu ^{\prime
}$ and, for this $\nu ^{\prime },$ the constructed normalized measure (\ref%
{meas}) is a probability one. Conversely, let there exist a probability
measure $\mu _{\mathcal{E}_{S,\Lambda }}^{\prime }$ returning all
distributions $P_{(s_{1},...,s_{N})}^{(\mathcal{E}_{S,\Lambda })}$ of
scenario $\mathcal{E}_{S,\Lambda }$ as the corresponding marginals. Then
representation (\ref{25}) with probability measure $\mu _{\mathcal{E}%
_{S,\Lambda }}^{\prime }$ constitutes a particular LHV model.
\end{proof}

\section{Bell-type inequalities}

For an $S_{1}\times\cdots\times S_{N}$-setting correlation scenario $%
\mathcal{E}_{S,\Lambda},$ with joint probability distributions $%
P_{(s_{1},...,s_{N})}^{(\mathcal{E}_{S,\Lambda})}$ and outcomes $(\lambda
_{1},...,\lambda_{N})\in\Lambda_{1}\times\cdots\times\Lambda_{N}:=\Lambda,$
consider a linear combination%
\begin{equation}
\dsum \limits_{s_{1},...,s_{_{N}}}\left\langle
\psi_{(s_{1},\ldots,s_{_{N}})}(\lambda_{1},...,\lambda _{N})\right\rangle _{%
\mathcal{E}_{S,\Lambda}}  \label{38}
\end{equation}
of averages 
\begin{align}
& \left\langle \psi_{(s_{1},\ldots,s_{_{N}})}(\lambda_{1},...,\lambda
_{N})\right\rangle _{\mathcal{E}_{S,\Lambda}}  \label{av} \\
& :=\dint
\limits_{\Lambda}\psi_{(s_{1},\ldots,s_{_{N}})}(\lambda_{1},...,%
\lambda_{N})P_{(s_{1},...,s_{N})}^{(\mathcal{E}_{S,\Lambda})}(\mathrm{d}%
\lambda_{1}\times \cdots\times\mathrm{d}\lambda_{N})  \notag
\end{align}
arising under joint measurements $(s_{1},\ldots,s_{_{N}})$ and specified by
a family 
\begin{equation}
\Psi_{S,\Lambda}:=\left\{ \psi_{(s_{1},\ldots,s_{_{N}})},\text{ \ }%
s_{1}=1,...,S_{1},...,s_{N}=1,...,S_{N}\right\}
\end{equation}
of bounded measurable real-valued functions $\psi_{(s_{1},\ldots,s_{_{N}})}(%
\lambda_{1},...,\lambda_{N})$ on the direct product $(\Lambda_{1}\times%
\cdots\times\Lambda_{N},\mathcal{F}_{\Lambda_{1}}\otimes\cdots \otimes%
\mathcal{F}_{\Lambda_{N}})$ of measurable spaces $(\Lambda _{n},\mathcal{F}%
_{\Lambda_{n}}),$ $n=1,...,N.$

If, in (\ref{av}), function $\psi _{(s_{1},\ldots ,s_{_{N}})}$ has the
product form $\phi _{s_{1}}(\lambda _{1})\cdot ...\cdot \phi
_{s_{_{N}}}(\lambda _{N}),$ then, depending on a concrete choice of
functions $\phi _{s_{n}}(\lambda _{n}),$ for a joint measurement $%
(s_{1},...,s_{_{N}}),$ the average 
\begin{equation}
\dint\limits_{\Lambda }\phi _{s_{1}}(\lambda _{1})\cdot ...\cdot \phi
_{s_{_{N}}}(\lambda _{N})\text{ }P_{(s_{1},...,s_{N})}^{(\mathcal{E}%
_{S,\Lambda })}(\mathrm{d}\lambda _{1}\times \cdots \times \mathrm{d}\lambda
_{N})  \label{41}
\end{equation}%
may refer to either the joint probability 
\begin{equation}
\left\langle \chi _{F_{1}}(\lambda _{1}^{(s_{1})})\cdot ...\cdot \chi
_{F_{N}}(\lambda _{N}^{(s_{N})})\right\rangle _{\mathcal{E}_{S,\Lambda
}}=P_{(s_{1},...,s_{N})}^{(\mathcal{E}_{S,\Lambda })}(F_{1}\times \cdots
\times F_{N})  \label{42}
\end{equation}%
of events $F_{1}\in \mathcal{F}_{\Lambda _{1}},...,F_{N}\in \mathcal{F}%
_{\Lambda _{N}}$ observed at the corresponding sites or if outcomes are
real-valued and bounded -- to the expectation value 
\begin{equation}
\left\langle \lambda _{n_{1}}^{(s_{n_{1}})}\cdot ...\cdot \lambda
_{n_{_{M}}}^{(s_{n_{M}})}\right\rangle _{\mathcal{E}_{S,\Lambda
}}=\dint\limits_{\Lambda }\lambda _{n_{1}}\cdot ...\cdot \lambda
_{n_{_{M}}}P_{(s_{1},...,s_{N})}^{(\mathcal{E}_{S,\Lambda })}(\mathrm{d}%
\lambda _{1}\times \cdots \times \mathrm{d}\lambda _{N})  \label{43}
\end{equation}%
of the product $\lambda _{n_{1}}\cdot ...\cdot \lambda _{n_{_{M}}}$ of
outcomes observed at arbitrary $M\leq N$ sites $1\leq n_{1}<...<n_{M}\leq N.$
For $M\geq 2,$ the expectation value (\ref{43}) is referred to (in quantum
information) as \emph{a correlation function. }A correlation function\emph{\ 
}for\emph{\ }an $N$-partite joint measurement is called \emph{\ full }if, in
(\ref{43}),\emph{\ }$M=N.$

If a correlation scenario $\mathcal{E}_{S,\Lambda }$ admits an LHV model,
then every linear combination (\ref{38}) of averages satisfies the tight LHV
constraints \cite{loubenets7}%
\begin{equation}
\mathcal{B}_{\Psi _{S,\Lambda }}^{\inf }\text{ \ }\leq
\dsum\limits_{s_{1},...,s_{_{N}}}\left\langle \psi _{(s_{1},\ldots
,s_{_{N}})}(\lambda _{1},...,\lambda _{N})\right\rangle _{\mathcal{E}%
_{S,\Lambda }}\text{ }|_{_{_{_{LHV}}}}\text{ \ }\leq \mathcal{B}_{\Psi
_{S,\Lambda }}^{\sup },  \label{44}
\end{equation}%
where the LHV constants $\mathcal{B}_{\Psi _{S,\Lambda }}^{\sup }$ and $%
\mathcal{B}_{\Psi _{S,\Lambda }}^{\inf }$ constitute, correspondingly,
supremum and infimum of (\ref{38}) over all LHV scenarios $\mathcal{E}%
_{S,\Lambda }^{lhv}$ and have the form: 
\begin{align}
\mathcal{B}_{\Psi _{S,\Lambda }}^{\sup }& :=\sup_{\mathcal{E}_{S,\Lambda
}^{lhv}}\text{ }\dsum\limits_{s_{1},...,s_{_{N}}}\left\langle \psi
_{(s_{1},\ldots ,s_{_{N}})}(\lambda _{1},...,\lambda _{N})\right\rangle _{%
\mathcal{E}_{S,\Lambda }^{lhv}}  \label{45} \\
& =\sup_{\lambda _{n}^{(s_{n})}\in \Lambda _{n},\forall s_{n},\forall n}%
\text{ }\dsum\limits_{s_{1},...,s_{_{N}}}\psi _{(s_{1},\ldots
,s_{_{N}})}(\lambda _{1}^{(s_{1})},...,\lambda _{N}^{(s_{N})}),  \notag \\
&  \notag \\
\mathcal{B}_{\Psi _{S,\Lambda }}^{\inf }& :=\inf_{\mathcal{E}_{S,\Lambda
}^{lhv}}\text{ }\dsum\limits_{s_{1},...,s_{_{N}}}\left\langle \psi
_{(s_{1},\ldots ,s_{_{N}})}(\lambda _{1},...,\lambda _{N})\right\rangle _{%
\mathcal{E}_{S,\Lambda }^{lhv}}  \notag \\
& =\inf_{\lambda _{n}^{(s_{n})}\in \Lambda _{n},\forall s_{n},\forall n}%
\text{\ }\dsum\limits_{s_{1},...,s_{_{N}}}\psi _{(s_{1},\ldots
,s_{_{N}})}(\lambda _{1}^{(s_{1})},...,\lambda _{N}^{(s_{N})}).  \notag
\end{align}

Constraints (\ref{44}) imply%
\begin{equation}
\left\vert \dsum \limits_{s_{1},...,s_{_{N}}}\left\langle
\psi_{(s_{1},\ldots,s_{_{N}})}(\lambda_{1},...,\lambda _{N})\right\rangle _{%
\mathcal{E}_{S,\Lambda}}\right\vert _{LHV}\leq \mathcal{B}%
_{\Psi_{S,\Lambda}},  \label{47}
\end{equation}
where

\begin{align}
\mathcal{B}_{\Psi_{S,\Lambda}} & :=\max\left\{ \left\vert \mathcal{B}%
_{\Psi_{S,\Lambda}}^{\sup}\right\vert ,\left\vert \mathcal{B}_{\Psi
_{S,\Lambda}}^{\inf}\right\vert \right\}  \label{48} \\
& =\sup_{\lambda_{n}^{(s_{n})}\in\Lambda_{n},\forall s_{n},\forall n}\text{ }%
\left\vert \dsum
\limits_{s_{1},...,s_{_{N}}}\psi_{(s_{1},\ldots,s_{_{N}})}(%
\lambda_{1}^{(s_{1})},...,\lambda_{N}^{(s_{N})})\right\vert .  \notag
\end{align}

Inequalities (\ref{44}) have been introduced in Ref. \cite{loubenets7} and
represent the general form of all unconditional \cite{cond} tight linear LHV
constraints on correlation functions and joint probabilities for an $%
S_{1}\times \cdots \times S_{N}$-setting correlation scenario with outcomes
of an arbitrary type, discrete or continuous.

Note that some of the LHV constraints (\ref{44}) may be fulfilled for a
wider (than LHV) class of correlation scenarios. This is, for example, the
case for those LHV constraints on joint probabilities that follow explicitly
from positivity and nonsignaling of probability distributions $%
P_{(s_{1},...,s_{N})}^{(\mathcal{E}_{S},_{\Lambda})}$ and are, therefore,
fulfilled for any nonsignaling scenario $\mathcal{E}_{S,\Lambda}.$ Moreover,
for some $\Psi_{S,\Lambda}$, the corresponding constraints (\ref{44}) may be
simply trivial -- in the sense that these constraints are fulfilled for each
scenario $\mathcal{E}_{S,\Lambda}$. For example, if we specify (\ref{44})
with functions $\widetilde{\psi}_{(s_{1},\ldots,s_{_{N}})}(\lambda_{1},...,%
\lambda_{N})=1,$ $\forall(\lambda_{1},...,\lambda_{N})\in\Lambda,$ for all
joint measurements $(s_{1},\ldots,s_{_{N}}),$ then 
\begin{equation}
\mathcal{B}_{\widetilde{\Psi}_{S,\Lambda}}^{\inf}=\dsum
\limits_{s_{1},...,s_{_{N}}}\left\langle \text{ }\widetilde{\psi}%
_{(s_{1},\ldots,s_{_{N}})}(\lambda _{1},...,\lambda_{N})\text{ }%
\right\rangle _{\mathcal{E}_{S,\Lambda}}=\mathcal{B}_{\widetilde{\Psi}%
_{S,\Lambda}}^{\sup}=S_{1}\cdot\ldots\cdot S_{N}
\end{equation}
holds for every scenario $\mathcal{E}_{S,\Lambda}.$

If, however, an LHV constraint may be violated in a non-LHV\ case, then it
is generally named after Bell due to his seminal result in Ref. \cite{1}.

\begin{definition}
Each of the tight linear LHV constraints (\ref{44}) that may be violated
under a non-LHV correlation scenario is referred to as a Bell-type
(equivalently, Bell) inequality.
\end{definition}

As it is discussed in section 3 of Ref. \cite{loubenets7}, the general form (%
\ref{44}) covers in a unified manner all unconditional Bell-type
inequalities that were introduced via a variety of methods ever since the
seminal publication of Bell \cite{1}. Note that the original Bell inequality 
\cite{1}, discussed recently in Ref. \cite{loubenets11}, constitutes an
example of conditional Bell-type inequalities.

\section{LqHV modelling of a quantum correlation scenario}

Let, under an $S_{1}\times \cdots \times S_{N}$-setting correlation
scenario, each $N$-partite joint measurement $(s_{1},...,s_{N})$ be
performed on a quantum state $\rho $ on a Hilbert space $\mathcal{H}%
_{1}\otimes \cdots \otimes \mathcal{H}_{N}$ and described by joint
probability measures 
\begin{equation}
\mathrm{tr}[\rho \{\mathrm{M}_{1}^{(s_{1})}(\mathrm{d}\lambda _{1})\otimes
\cdots \otimes \mathrm{M}_{N}^{(s_{_{N}})}(\mathrm{d}\lambda _{N})\}]
\label{1}
\end{equation}%
on the measurable space $(\Lambda _{1}\times \cdots \times \Lambda _{N},$ $%
\mathcal{F}_{\Lambda _{1}}\otimes \cdots \otimes \mathcal{F}_{\Lambda
_{N}}). $ Here, each $\mathrm{M}_{n}^{(s_{n})}$ is a normalized positive
operator-valued (\emph{POV}) measure on a measurable space $(\Lambda _{n},%
\mathcal{F}_{\Lambda _{n}})$ representing on a Hilbert space $\mathcal{H}%
_{n} $ a quantum measurement $s_{n}$ at $n$-th site. For a POV measure $%
\mathrm{M}_{n}^{(s_{n})}$, all its values $\mathrm{M}_{n}^{(s_{n})}(F_{n}),$ 
$F_{n}\in \mathcal{F}_{\Lambda _{n}},$ are positive operators on $\mathcal{H}%
_{n}$ and $\mathrm{M}_{n}^{(s_{n})}(\Lambda _{n})=\mathbb{I}_{\mathcal{H}%
_{n}}.$

We specify this quantum $S_{1}\times\cdots\times S_{N}$-setting correlation
scenario by symbol $\mathcal{E}_{\rho,\mathrm{M}_{S,\Lambda}}$, where 
\begin{equation}
\mathrm{M}_{S,\Lambda}:=\left\{ \mathrm{M}_{n}^{(s_{n})},\text{ }%
s_{n}=1,..,S_{n},\text{ }n=1,...,N\right\}  \label{2}
\end{equation}
is a collection of POV measures describing this quantum scenario and denote
by 
\begin{align}
P_{(s_{1},...,s_{N})}^{(\mathcal{E}_{\rho,\mathrm{M}_{S,\Lambda}})}(\mathrm{d%
}\lambda_{1}\times\cdots\times\mathrm{d}\lambda_{N}) & :=\mathrm{tr}\left[
\rho\left\{ \mathrm{M}_{1}^{(s_{1})}(\mathrm{d}\lambda_{1})\otimes\cdots%
\otimes\mathrm{M}_{N}^{(s_{_{N}})}(\mathrm{d}\lambda_{N})\right\} \right] ,
\label{3} \\
s_{1} & =1,...,S_{1},...,s_{N}=1,...,S_{N},  \notag
\end{align}
its joint probability distributions (\ref{1}).

\begin{theorem}
For every state $\rho $ on a Hilbert space $\mathcal{H}_{1}\otimes \mathcal{%
\cdots }\otimes \mathcal{H}_{N}$ and arbitrary positive integers $%
S_{1},...,S_{N}\geq 1$, each quantum $S_{1}\times ...\times S_{N}$-setting
correlation scenario $\mathcal{E}_{\rho ,\mathrm{M}_{S,\Lambda }}$, with
joint probability distributions (\ref{3}) and outcomes of an arbitrary
spectral type, discrete or continuous, admits an LqHV\ model.
\end{theorem}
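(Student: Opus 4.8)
The plan is to reduce Theorem 3 to Theorem 1 (the necessary and sufficient condition for LqHV modelling) by exhibiting, for the quantum scenario $\mathcal{E}_{\rho,\mathrm{M}_{S,\Lambda}}$, a normalized bounded real-valued measure on the product space $(\Lambda_{1}^{S_{1}}\times\cdots\times\Lambda_{N}^{S_{N}},\mathcal{F}_{\Lambda_{1}}^{\otimes S_{1}}\otimes\cdots\otimes\mathcal{F}_{\Lambda_{N}}^{\otimes S_{N}})$ that returns every joint distribution $P_{(s_{1},\ldots,s_{N})}^{(\mathcal{E}_{\rho,\mathrm{M}_{S,\Lambda}})}$ as a marginal. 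The natural candidate is built from a source operator: by Proposition 1 there exists an $S_{1}\times\cdots\times S_{N}$-setting source operator $T_{S_{1}\times\cdots\times S_{N}}^{(\rho)}$ for $\rho$ on $\mathcal{H}_{1}^{\otimes S_{1}}\otimes\cdots\otimes\mathcal{H}_{N}^{\otimes S_{N}}$, and I would set
\begin{equation}
\mu_{\mathcal{E}_{S,\Lambda}}(\mathrm{d}\lambda_{1}^{(1)}\times\cdots\times\mathrm{d}\lambda_{N}^{(S_{N})}):=\mathrm{tr}\Bigl[T_{S_{1}\times\cdots\times S_{N}}^{(\rho)}\bigl\{\mathrm{M}_{1}^{(1)}(\mathrm{d}\lambda_{1}^{(1)})\otimes\cdots\otimes\mathrm{M}_{N}^{(S_{N})}(\mathrm{d}\lambda_{N}^{(S_{N})})\bigr\}\Bigr],
\end{equation}
i.e. the "joint distribution" of the $S_{1}+\cdots+S_{N}$ measurements $\mathrm{M}_{n}^{(s_{n})}$ taken against the (non-positive) operator $T$ rather than against $\rho$.

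The verification then splits into three routine checks. First, $\mu_{\mathcal{E}_{S,\Lambda}}$ is a well-defined countably additive real-valued set function: on rectangles it is a finite linear combination (via the spectral decomposition $T=T^{+}-T^{-}$) of honest probability-type measures $\mathrm{tr}[T^{\pm}\{\bigotimes \mathrm{M}_{n}^{(s_{n})}(\cdot)\}]$, each of which extends uniquely to the product $\sigma$-algebra by the standard product-measure / Carathéodory argument for POV measures on standard measurable spaces; subtracting gives a bounded real-valued measure with total variation at most $\mathrm{tr}[T^{+}]+\mathrm{tr}[T^{-}]=\|T\|_{1}<\infty$, so $\mu_{\mathcal{E}_{S,\Lambda}}$ is bounded, and by relation (\ref{var}) this boundedness is exactly what Theorem 1 requires. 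Second, normalization: taking all $F_{n}^{(s_{n})}=\Lambda_{n}$ and using $\mathrm{M}_{n}^{(s_{n})}(\Lambda_{n})=\mathbb{I}_{\mathcal{H}_{n}}$ together with $\mathrm{tr}[T_{S_{1}\times\cdots\times S_{N}}^{(\rho)}]=1$ (noted right after Eq. (\ref{d})) gives $\mu_{\mathcal{E}_{S,\Lambda}}(\Lambda_{1}^{S_{1}}\times\cdots\times\Lambda_{N}^{S_{N}})=1$. Third, the marginal property: fix a joint measurement $(s_{1},\ldots,s_{N})$ and integrate out all coordinates $\lambda_{n}^{(k)}$ with $k\neq s_{n}$; using $\mathrm{M}_{n}^{(k)}(\Lambda_{n})=\mathbb{I}_{\mathcal{H}_{n}}$ for those factors, the marginal of $\mu_{\mathcal{E}_{S,\Lambda}}$ on the $(s_{1},\ldots,s_{N})$-coordinates equals $\mathrm{tr}[T_{S_{1}\times\cdots\times S_{N}}^{(\rho)}\{\mathbb{I}\otimes\cdots\otimes\mathrm{M}_{1}^{(s_{1})}(\mathrm{d}\lambda_{1})\otimes\cdots\otimes\mathrm{M}_{N}^{(s_{N})}(\mathrm{d}\lambda_{N})\otimes\cdots\}]$, which by the defining relation (\ref{d}) of a source operator — applied with $X_{n}=\mathrm{M}_{n}^{(s_{n})}(F_{n})$ — is precisely $\mathrm{tr}[\rho\{\mathrm{M}_{1}^{(s_{1})}(\mathrm{d}\lambda_{1})\otimes\cdots\otimes\mathrm{M}_{N}^{(s_{N})}(\mathrm{d}\lambda_{N})\}]=P_{(s_{1},\ldots,s_{N})}^{(\mathcal{E}_{\rho,\mathrm{M}_{S,\Lambda}})}$. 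With all three checks in hand, Theorem 1 yields the desired LqHV model.

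I expect the main obstacle to be purely technical rather than conceptual: justifying that the set function $\mu_{\mathcal{E}_{S,\Lambda}}$ on rectangles genuinely extends to a (signed) countably additive measure on the product $\sigma$-algebra when the outcome spaces $\Lambda_{n}$ are general standard measurable spaces and the measurements are POV (not projection-valued) measures with possibly continuous outcomes. The clean way around this is to pass through the spectral decomposition $T=T^{+}-T^{-}$, normalize $T^{\pm}$ to states, invoke the standard construction of the joint distribution of a finite family of POV measures against a fixed state on a product of standard measurable spaces (a Kolmogorov-type extension), and then recombine; linearity and boundedness of the trace functional do the rest. I would relegate this measure-theoretic extension — essentially the same lemma already used implicitly in defining the quantum joint distributions (\ref{3}) — to the appendix, as the excerpt indicates ("In appendices A, B, C, we present proofs of some statements"). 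Everything else is an immediate application of (\ref{d}) and Theorem 1.
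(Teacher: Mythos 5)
Your proposal is correct and follows essentially the same route as the paper: it constructs exactly the measure (\ref{22}) from a source operator guaranteed by proposition 1, verifies normalization and the marginal property via the defining relation (\ref{d}), bounds the total variation by $\Vert T_{S_{1}\times\cdots\times S_{N}}^{(\rho)}\Vert_{1}$ (the paper routes this through the covering norm, lemma 4 of appendix B, but your direct Jordan-decomposition bound suffices), and then invokes theorem 1.
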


\begin{proof}
For a state $\rho $ on $\mathcal{H}_{1}\otimes \mathcal{\cdots }\otimes 
\mathcal{H}_{N}$, let $T_{S_{1}\times \cdots \times S_{N}}^{(\rho )}$ be an $%
S_{1}\times ...\times S_{N}$-setting source operator on space $\mathcal{H}%
_{1}^{\otimes S_{1}}\otimes \cdots \otimes \mathcal{H}_{N}^{\otimes S_{N}}$,
see definition 1 in section 2. For each scenario $\mathcal{E}_{\rho ,\mathrm{%
M}_{S,\Lambda }},$ the normalized real-valued measure%
\begin{align}
& \mu _{T_{S_{1}\times \cdots \times S_{N}}^{(\rho )}}^{(\rho ,\mathrm{M}%
_{S,\Lambda })}\left( \mathrm{d}\lambda _{1}^{(1)}\times \cdots \times 
\mathrm{d}\lambda _{1}^{(S_{1})}\times \cdots \times \mathrm{d}\lambda
_{N}^{(1)}\times \cdots \times \mathrm{d}\lambda _{N}^{(S_{N})}\right)
\label{22} \\
& :=\mathrm{tr}[T_{S_{1}\times \cdots \times S_{N}}^{(\rho )}\text{ }\{%
\mathrm{M}_{1}^{(1)}(\mathrm{d}\lambda _{1}^{(1)})\otimes \cdots \otimes 
\mathrm{M}_{1}^{(S_{1})}(\mathrm{d}\lambda _{1}^{(S_{1})})  \notag \\
& \otimes \cdots \otimes \mathrm{M}_{N}^{(1)}(\mathrm{d}\lambda
_{N}^{(1)})\otimes \cdots \otimes \mathrm{M}_{N}^{(S_{N})}(\mathrm{d}\lambda
_{N}^{(S_{N})})\}]  \notag
\end{align}%
on the direct product space $(\Lambda _{1}^{S_{1}}\times \cdots \times
\Lambda _{N}^{S_{N}},\mathcal{F}_{\Lambda _{1}}^{\otimes S_{1}}\otimes
\cdots \otimes \mathcal{F}_{\Lambda _{N}}^{\otimes S_{N}})$ returns all
joint probability distributions $P_{(s_{1},...,s_{N})}^{(\mathcal{E}_{\rho ,%
\mathrm{M}_{S,\Lambda }})}$ of scenario $\mathcal{E}_{\rho ,\mathrm{M}%
_{S,\Lambda }}$ as the corresponding marginals. Due to bound (\ref{B10})
proved in appendix B and relation (\ref{19}), the total variation norm of
measure (\ref{22}) is upper bounded by $||T_{S_{1}\times \cdots \times
S_{N}}^{(\rho )}||_{1}<\infty .$ This and relation (\ref{var}) imply that
the normalized real-valued measure (\ref{22}) is bounded. Thus, for each
quantum scenario $\mathcal{E}_{\rho ,\mathrm{M}_{S,\Lambda }},$ the
constructed measure $\mu _{T_{S_{1}\times \cdots \times S_{N}}^{(\rho
)}}^{(\rho ,\mathrm{M}_{S,\Lambda })}$ satisfies the sufficiency condition
of theorem 1 on LqHV modelling. This proves the statement.\medskip
\end{proof}

If, for a state $\rho ,$ every quantum scenario $\mathcal{E}_{\rho ,\mathrm{M%
}_{S,\Lambda }}$ (i.e. for an arbitrary collection $\mathrm{M}_{S,\Lambda }$
of POV measures and an arbitrary outcome set $\Lambda $) admits an LHV
model, then, according to our terminology in Ref. \cite{loubenets3}, this
state $\rho $ admits the $S_{1}\times \cdots \times S_{N}$-setting LHV
description. In the latter case, state $\rho $ admits \cite{loubenets3} an $%
L_{1}\times \cdots \times L_{N}$-setting LHV description for all $L_{1}\leq
S_{1},...,L_{N}\leq S_{N},$ but does not need to admit the LHV description
whenever at least one $L_{n}>S_{n}.$

Via a similar terminology for the LqHV case, theorem 2 reads -- \emph{every} 
$N$\emph{-partite quantum state }$\rho $\emph{\ admits an }$S_{1}\times
\cdots \times S_{N}$\emph{-setting LqHV description for arbitrary numbers }$%
S_{1},...,S_{N}$\emph{\ of\ measurements at }$N$\emph{\ sites.}

In view of theorems 1, 2, corollary 1 and relation \ref{35}, let us
introduce, for a quantum correlation scenario $\mathcal{E}_{\rho ,\mathrm{M}%
_{S,\Lambda }}$, the parameter 
\begin{equation}
\mathrm{\gamma }_{\mathcal{E}_{\rho ,\mathrm{M}_{S,\Lambda }}}:=\inf
\left\Vert \mu _{\mathcal{E}_{\rho ,\mathrm{M}_{S,\Lambda }}}\right\Vert
_{var}\geq 1,  \label{q1}
\end{equation}%
where, infimum is taken over all normalized bounded real-valued measures $%
\mu _{\mathcal{E}_{\rho ,\mathrm{M}_{S,\Lambda }}}$, each returning all
distributions $P_{(s_{1},...,s_{N})}^{(\mathcal{E}_{\rho ,\mathrm{M}%
_{S,\Lambda }})}$ of scenario $\mathcal{E}_{\rho ,\mathrm{M}_{S,\Lambda }}$
as the corresponding marginals.

The following lemma is proved in appendix B.

\begin{lemma}
A quantum correlation scenario $\mathcal{E}_{\rho ,\mathrm{M}_{S,\Lambda }}$
admits an LHV model if and only if $\mathrm{\gamma }_{\mathcal{E}_{\rho ,%
\mathrm{M}_{S,\Lambda }}}=1.\medskip $
\end{lemma}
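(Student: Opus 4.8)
The plan is to prove the equivalence by relating $\gamma_{\mathcal{E}_{\rho,\mathrm{M}_{S,\Lambda}}}$, the infimum of the total variation norm over all ``linearization'' measures on $\Lambda_1^{S_1}\times\cdots\times\Lambda_N^{S_N}$ returning the scenario's distributions as marginals, to the existence of an LHV model via Corollary 1. The key observation is that $\gamma=1$ forces any near-optimal measure to be (in the limit) a probability measure, which by Corollary 1 yields an LHV model; conversely an LHV model provides an honest probability measure of total variation norm exactly $1$, so the infimum cannot exceed $1$, and by \eqref{q1} it equals $1$.

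For the easy direction ($\Leftarrow$ of the LHV side, i.e. $\gamma=1\Rightarrow$ LHV): first I would recall that by Theorem 2 the infimum in \eqref{q1} is taken over a nonempty set, and by \eqref{35} every normalized bounded real-valued measure $\mu$ has $\|\mu\|_{var}=1+2\mu^-(\Omega)\ge 1$, with equality iff $\mu$ is a probability measure. If $\gamma_{\mathcal{E}_{\rho,\mathrm{M}_{S,\Lambda}}}=1$, the main point is to show the infimum is actually attained by some probability measure. The natural route is a compactness argument: the set of normalized bounded real-valued measures on the standard product measurable space that return the fixed marginals $P_{(s_1,\ldots,s_N)}^{(\mathcal{E}_{\rho,\mathrm{M}_{S,\Lambda}})}$, with total variation norm bounded by (say) $2$, is weak-* compact (on standard Borel spaces this is where one invokes the Riesz representation / Prokhorov-type tightness packaged into ``standard measurable space''), the marginal constraints are weak-* closed, and $\mu\mapsto\|\mu\|_{var}$ is weak-* lower semicontinuous. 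Hence a minimizer $\mu_0$ exists with $\|\mu_0\|_{var}=1$, so $\mu_0^-(\Omega)=0$, i.e. $\mu_0$ is a probability measure returning all the $P_{(s_1,\ldots,s_N)}^{(\mathcal{E}_{\rho,\mathrm{M}_{S,\Lambda}})}$ as marginals. By Corollary 1 (applied with $\mu'_{\mathcal{E}_{S,\Lambda}}=\mu_0$), the scenario $\mathcal{E}_{\rho,\mathrm{M}_{S,\Lambda}}$ admits an LHV model.

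For the converse ($\Rightarrow$): if $\mathcal{E}_{\rho,\mathrm{M}_{S,\Lambda}}$ admits an LHV model, then by Corollary 1 there is a probability measure $\mu'$ on $(\Lambda_1^{S_1}\times\cdots\times\Lambda_N^{S_N},\mathcal{F}_{\Lambda_1}^{\otimes S_1}\otimes\cdots\otimes\mathcal{F}_{\Lambda_N}^{\otimes S_N})$ returning all $P_{(s_1,\ldots,s_N)}^{(\mathcal{E}_{\rho,\mathrm{M}_{S,\Lambda}})}$ as marginals. Being a probability measure, $\mu'$ is normalized and bounded with $\|\mu'\|_{var}=1$, so it is admissible in the infimum \eqref{q1} and gives $\gamma_{\mathcal{E}_{\rho,\mathrm{M}_{S,\Lambda}}}\le 1$. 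Combined with the lower bound $\gamma_{\mathcal{E}_{\rho,\mathrm{M}_{S,\Lambda}}}\ge 1$ already recorded in \eqref{q1} (which itself rests on \eqref{35}), this yields $\gamma_{\mathcal{E}_{\rho,\mathrm{M}_{S,\Lambda}}}=1$.

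The main obstacle is the attainment-of-infimum step in the forward direction: one must ensure that $\gamma=1$ genuinely produces a probability measure and not merely a sequence of signed measures with negative parts shrinking to zero. This is where the restriction to standard measurable spaces is essential — it lets one identify bounded real-valued measures with bounded functionals on an appropriate separable function space and apply Banach--Alaoglu plus weak-* lower semicontinuity of the total variation norm; alternatively, one can extract a weakly convergent subsequence of the Jordan-positive parts (which are tight since their masses are bounded) and pass to the limit in both the marginal constraints and the mass of the negative part. I would present whichever of these is cleaner given the conventions fixed in Ref.~\cite{loubenets3}, and note that if one only wants the weaker statement ``$\gamma=1\Rightarrow$ LHV'' under an extra closedness hypothesis, the argument simplifies, but the standard-space assumption already in force makes the full attainment argument go through.
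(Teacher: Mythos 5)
Your proposal is correct and takes essentially the same route as the paper: both directions rest on Corollary 1, the easy direction via the probability measure it supplies (so $\gamma_{\mathcal{E}_{\rho,\mathrm{M}_{S,\Lambda}}}\leq 1$, hence $=1$), and the direction $\gamma=1\Rightarrow$ LHV via a compactness argument on a minimizing sequence of normalized measures with the prescribed marginals, passing to a limit measure $\widetilde{\mu}$ with $\left\Vert \widetilde{\mu}\right\Vert _{var}\leq 1$, checking that the marginal constraints survive the limit, and concluding from (\ref{35}) that $\widetilde{\mu}$ is a probability measure. The only difference is presentational: you phrase the key step as attainment of the infimum by weak-* compactness plus lower semicontinuity of the variation norm, while the paper extracts a convergent subsequence directly on the standard product space, which is the same idea.
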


Introduce also the state parameters%
\begin{align}
\mathrm{\Upsilon}_{S_{1}\times\cdots\times S_{N}}^{(\rho,\Lambda)} & :=\sup_{%
\mathrm{M}_{S,\Lambda}}\mathrm{\gamma}_{\mathcal{E}_{\rho ,\mathrm{M}%
_{S,\Lambda}}}\geq1,  \label{s1} \\
\mathrm{\Upsilon}_{S_{1}\times\cdots\times S_{N}}^{(\rho)} & :=\sup_{\Lambda
}\mathrm{\Upsilon}_{S_{1}\times\cdots\times S_{N}}^{(\rho,\Lambda)}\geq1.
\label{s}
\end{align}

\begin{proposition}
(a)\textrm{\ }For a state $\rho$ on $\mathcal{H}_{1}\otimes\mathcal{\cdots }%
\otimes\mathcal{H}_{N},$ each quantum $L_{1}\times\cdots\times L_{N}$%
-setting scenario $\mathcal{E}_{\rho,\mathrm{M}_{L,\Lambda}},$ with $%
L_{1}\leq S_{1},...,L_{N}\leq S_{N}$ and an outcome set $\Lambda,$ admits an
LHV model if and only if $\mathrm{\Upsilon}_{S_{1}\times\cdots\times
S_{N}}^{(\rho,\Lambda)}=1;\medskip$\newline
(b) A state $\rho$ on $\mathcal{H}_{1}\otimes\mathcal{\cdots}\otimes\mathcal{%
H}_{N}$ admits an $S_{1}\times\cdots\times S_{N}$-setting LHV description if
and only if $\mathrm{\Upsilon}_{S_{1}\times\cdots\times S_{N}}^{(\rho)}=1.$
\end{proposition}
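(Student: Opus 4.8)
The plan is to derive Proposition 2 as an almost immediate consequence of the results already in hand, namely Lemma 3, Corollary 1, and the monotonicity of LHV describability under reduction of the number of settings. First I would prove part (a). Fix a state $\rho$ and an outcome set $\Lambda$, and suppose $\Upsilon_{S_{1}\times\cdots\times S_{N}}^{(\rho,\Lambda)}=1$. By definition \eqref{s1} this says $\sup_{\mathrm{M}_{S,\Lambda}}\gamma_{\mathcal{E}_{\rho,\mathrm{M}_{S,\Lambda}}}=1$, and since each $\gamma_{\mathcal{E}_{\rho,\mathrm{M}_{S,\Lambda}}}\geq 1$ by \eqref{q1}, this forces $\gamma_{\mathcal{E}_{\rho,\mathrm{M}_{S,\Lambda}}}=1$ for every collection $\mathrm{M}_{S,\Lambda}$ of POV measures. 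By Lemma 3, every quantum $S_{1}\times\cdots\times S_{N}$-setting scenario $\mathcal{E}_{\rho,\mathrm{M}_{S,\Lambda}}$ then admits an LHV model, i.e. $\rho$ admits the $S_{1}\times\cdots\times S_{N}$-setting LHV description for this $\Lambda$. Now I invoke the monotonicity property quoted in the excerpt just before Proposition 2 (and established in Ref.~\cite{loubenets3}): if $\rho$ admits an $S_{1}\times\cdots\times S_{N}$-setting LHV description, then it admits an $L_{1}\times\cdots\times L_{N}$-setting LHV description for all $L_{n}\leq S_{n}$; concretely, given any $L_{1}\times\cdots\times L_{N}$-setting scenario $\mathcal{E}_{\rho,\mathrm{M}_{L,\Lambda}}$ one pads each site's collection with extra copies of one of its POV measures to obtain an $S_{1}\times\cdots\times S_{N}$-setting scenario whose LHV measure (on the larger product of outcome spaces), when marginalized back down, gives an LHV measure for $\mathcal{E}_{\rho,\mathrm{M}_{L,\Lambda}}$ via Corollary 1. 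Hence every such $L_{1}\times\cdots\times L_{N}$-setting scenario admits an LHV model. For the converse direction of (a), note that the $S_{1}\times\cdots\times S_{N}$-setting scenarios are themselves among the $L_{1}\times\cdots\times L_{N}$-setting scenarios with $L_{n}=S_{n}$; so if all of the latter admit LHV models, then in particular $\gamma_{\mathcal{E}_{\rho,\mathrm{M}_{S,\Lambda}}}=1$ for every $\mathrm{M}_{S,\Lambda}$ by Lemma 3, whence $\Upsilon_{S_{1}\times\cdots\times S_{N}}^{(\rho,\Lambda)}=\sup_{\mathrm{M}_{S,\Lambda}}1=1$.

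For part (b), I would simply take the supremum over $\Lambda$. By definition \eqref{s}, $\Upsilon_{S_{1}\times\cdots\times S_{N}}^{(\rho)}=\sup_{\Lambda}\Upsilon_{S_{1}\times\cdots\times S_{N}}^{(\rho,\Lambda)}$, and since each term is $\geq 1$, the equality $\Upsilon_{S_{1}\times\cdots\times S_{N}}^{(\rho)}=1$ holds if and only if $\Upsilon_{S_{1}\times\cdots\times S_{N}}^{(\rho,\Lambda)}=1$ for every outcome set $\Lambda$. By part (a) this is equivalent to saying that, for every $\Lambda$, every quantum $S_{1}\times\cdots\times S_{N}$-setting scenario $\mathcal{E}_{\rho,\mathrm{M}_{S,\Lambda}}$ admits an LHV model — which is precisely the definition (recalled in section 5, following the proof of Theorem 2) of $\rho$ admitting the $S_{1}\times\cdots\times S_{N}$-setting LHV description.

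The steps are almost entirely bookkeeping: unwinding the definitions \eqref{q1}, \eqref{s1}, \eqref{s}, applying Lemma 3, and applying Corollary 1. The one place that needs genuine care — and which I expect to be the main (mild) obstacle — is the padding argument used in the forward direction of (a): one must check that adjoining trivial extra measurement settings at each site does not change the state, that the resulting LqHV/LHV measure on the enlarged product outcome space indeed marginalizes correctly onto the original $L_{1}\times\cdots\times L_{N}$-setting joint distributions (so that Corollary 1 applies), and that positivity of the simulation measure is preserved under this marginalization. All of this is contained in the general LHV theory of Ref.~\cite{loubenets3}, so in the write-up I would cite that reference for the monotonicity statement rather than reprove it, and keep the proof of Proposition 2 short. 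A clean way to present it is: (a) $\Leftarrow$ via Lemma 3 plus the cited $S\to L$ reduction; (a) $\Rightarrow$ by specializing $L_{n}=S_{n}$ and Lemma 3; (b) by taking $\sup_{\Lambda}$ and using that all the quantities are bounded below by $1$.
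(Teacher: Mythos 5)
Your argument is correct and follows essentially the same route as the paper: unwind the definitions (\ref{q1}), (\ref{s1}), (\ref{s}), apply the lemma stating that a quantum scenario admits an LHV model iff $\gamma_{\mathcal{E}_{\rho,\mathrm{M}_{S,\Lambda}}}=1$ (lemma 2 in the paper, which you call Lemma 3), and invoke the settings-monotonicity result of Ref.~\cite{loubenets3} to pass from $S_{1}\times\cdots\times S_{N}$ to $L_{1}\times\cdots\times L_{N}$ scenarios. The only difference is cosmetic: you sketch the padding argument behind that monotonicity, whereas the paper simply cites proposition 3 of Ref.~\cite{loubenets3}.
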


\begin{proof}
If each scenario $\mathcal{E}_{\rho ,\mathrm{M}_{L,\Lambda }}$, where $%
L_{1}\leq S_{1},...,L_{N}\leq S_{N},$ admits an LHV\ model, then, by lemma
2, $\mathrm{\gamma }_{\mathcal{E}_{\rho ,\mathrm{M}_{L,\Lambda }}}=1$ for
all collections $\mathrm{M}_{L,\Lambda }$ of POV measures, where $L_{1}\leq
S_{1},...,L_{N}\leq S_{N}.$ Hence, due to its definition (\ref{s1}),
parameter $\mathrm{\Upsilon }_{S_{1}\times \cdots \times S_{N}}^{(\rho
,\Lambda )}=1.$ Conversely, let $\mathrm{\Upsilon }_{S_{1}\times \cdots
\times S_{N}}^{(\rho ,\Lambda )}=1.$ Then, in view of (\ref{s1}), (\ref{q1}%
), $\mathrm{\gamma }_{\mathcal{E}_{\rho ,\mathrm{M}_{S,\Lambda }}}=1$ for
all collections $\mathrm{M}_{S,\Lambda }$ of POV\ measures. By lemma 2, this
implies LHV modelling of every quantum correlation scenario $\mathcal{E}%
_{\rho ,\mathrm{M}_{S,\Lambda }}.$ By proposition 3 in Ref. \cite{loubenets3}%
, the latter, in turn, implies LHV modelling of each quantum correlation
scenario $\mathcal{E}_{\rho ,\mathrm{M}_{L,\Lambda }}$ with settings $%
L_{1}\leq S_{1},...,L_{N}\leq S_{N}.$ This proves the sufficiency part of
statement (a). Statement (b) is proved quite similarly. "
\end{proof}

Thus, for an $N$-partite state $\rho ,$ it is specifically the state
parameter $\Upsilon _{S_{1}\times \cdots \times S_{N}}^{(\rho )}$ that
determines quantitatively a possibility of an LHV description of all quantum
scenarios (\ref{3}) with settings up to setting $S_{1}\times \cdots \times
S_{N}$ and outcomes of an arbitrary type.

\begin{proposition}
For a state $\rho$ on a Hilbert space $\mathcal{H}_{1}\otimes\mathcal{\cdots 
}\otimes\mathcal{H}_{N}$ and arbitrary positive integers $%
S_{1},...,S_{N}\geq1,$ 
\begin{equation}
1\leq\mathrm{\Upsilon}_{S_{1}\times\cdots\times S_{N}}^{(\rho,\Lambda)}\leq%
\mathrm{\Upsilon}_{S_{1}\times\cdots\times S_{N}}^{(\rho)}\leq
\inf_{T_{S_{1}\times\cdots\times\underset{\overset{\uparrow}{n}}{1}%
\times\cdots\times S_{N}}^{(\rho)},\text{ }\forall n}\text{ }%
||T_{S_{1}\times\cdots\times\underset{\overset{\uparrow}{n}}{1}%
\times\cdots\times S_{N}}^{(\rho)}\text{ }||_{cov},  \label{x}
\end{equation}
where (i) infimum is taken over all source operators $T_{S_{1}\times
\cdots\times\underset{\overset{\uparrow}{n}}{1}\times\cdots\times
S_{N}}^{(\rho)}$ with only one setting at $n$-th site and over all $%
n=1,...,N;$ (ii) $\left\Vert \cdot\right\Vert _{cov}$ is the covering norm
(see definition 4 in section 2).
\end{proposition}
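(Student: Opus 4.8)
The plan is to prove the chain of inequalities (\ref{x}) from left to right. The first inequality $1 \le \Upsilon_{S_1\times\cdots\times S_N}^{(\rho,\Lambda)}$ and the second $\Upsilon_{S_1\times\cdots\times S_N}^{(\rho,\Lambda)} \le \Upsilon_{S_1\times\cdots\times S_N}^{(\rho)}$ are immediate: the former follows from (\ref{q1}) and (\ref{s1}), the latter directly from the definition (\ref{s}) as a supremum over outcome sets $\Lambda$. So the entire content is the final inequality, namely that for \emph{every} index $n\in\{1,\dots,N\}$ and \emph{every} source operator $T_{S_1\times\cdots\times 1\times\cdots\times S_N}^{(\rho)}$ having a single setting at site $n$, one has $\Upsilon_{S_1\times\cdots\times S_N}^{(\rho)} \le \|T_{S_1\times\cdots\times 1\times\cdots\times S_N}^{(\rho)}\|_{cov}$. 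Taking the infimum over all such source operators and all $n$ then yields the claim.

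First I would fix $n$, an $S_1\times\cdots\times S_{n-1}\times 1\times S_{n+1}\times\cdots\times S_N$-setting source operator $T$ for $\rho$, and an arbitrary outcome set $\Lambda$ together with an arbitrary collection $\mathrm{M}_{S,\Lambda}$ of POV measures. The goal is to exhibit a normalized bounded real-valued measure on $(\Lambda_1^{S_1}\times\cdots\times\Lambda_N^{S_N},\,\mathcal{F}_{\Lambda_1}^{\otimes S_1}\otimes\cdots\otimes\mathcal{F}_{\Lambda_N}^{\otimes S_N})$ returning all the joint distributions $P_{(s_1,\dots,s_N)}^{(\mathcal{E}_{\rho,\mathrm{M}_{S,\Lambda}})}$ as marginals and whose total variation norm is at most $\|T\|_{cov}$; by Theorem~1 and definitions (\ref{q1}), (\ref{s1}), (\ref{s}) this bounds $\Upsilon_{S_1\times\cdots\times S_N}^{(\rho)}$. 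The construction uses the covering decomposition (\ref{11}): pick any trace class covering $T_{cov}$ of $T$, so that $T_{cov}\pm T \overset{\otimes}{\ge} 0$ and $T = \tfrac12(T_{cov}+T) - \tfrac12(T_{cov}-T)$. Plugging the two tensor positive operators $\tfrac12(T_{cov}\pm T)$ into the quantum expression analogous to (\ref{22}) — but with $S_n-1$ \emph{copies} of each $\mathrm{M}_n^{(s_n)}$ at site $n$ formally absent since $T$ has only one setting there, so instead one repeats: for the site with a single setting one uses the defining relation (\ref{d}) of a source operator, which guarantees that the marginals over the "missing" settings at site $n$ are correctly reproduced regardless — produces two \emph{positive} (hence probability, after normalization) measures $\mu_{\pm}$, because the trace of a tensor positive operator against a tensor product of POV-measure values is nonnegative (this is exactly the consequence of tensor positivity recorded just after Definition~3). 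Setting $\mu := \tfrac12\mu_+ - \tfrac12\mu_-$ gives the required signed measure; it returns the correct marginals because each tensor-positive piece does (again via (\ref{d}) for the repeated-setting sites and directly for the others), and its total variation is $\le \tfrac12\mathrm{tr}[T_{cov}+T]+\tfrac12\mathrm{tr}[T_{cov}-T] = \mathrm{tr}[T_{cov}]$. Taking the infimum over coverings $T_{cov}$ yields total variation $\le \|T\|_{cov}$ by the definition (\ref{15}) of the covering norm.

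The step requiring the most care is handling the site $n$ that carries only one setting: one must check that a source operator with a single slot at position $n$ still produces, via the trace formula, the \emph{full} family of joint distributions for all $s_n = 1,\dots,S_n$, i.e. that all these distributions share the same $n$-th marginal structure and are therefore consistently returned from a single $\Lambda_n$-slot. This is precisely what the source-operator relation (\ref{d}) encodes — the left side of (\ref{d}) is independent of the index $k_n$ at each site — so the argument goes through, but writing the measure-theoretic marginalization cleanly (the $\sigma$-algebra is only the product $\sigma$-algebra, so one argues on rectangles and extends) is the fiddly part. One should also verify boundedness of the constructed signed measure via relation (\ref{var}), exactly as in the proof of Theorem~2. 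Finally, since $\Lambda$ and $\mathrm{M}_{S,\Lambda}$ were arbitrary, the bound $\gamma_{\mathcal{E}_{\rho,\mathrm{M}_{S,\Lambda}}} \le \|T\|_{cov}$ passes to the suprema defining $\Upsilon_{S_1\times\cdots\times S_N}^{(\rho,\Lambda)}$ and then $\Upsilon_{S_1\times\cdots\times S_N}^{(\rho)}$, and taking the infimum over the single-setting index $n$ and over source operators $T$ completes the proof. The detailed verification is deferred to Appendix B.
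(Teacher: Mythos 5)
Your reduction of (\ref{x}) to the bound $\gamma_{\mathcal{E}_{\rho,\mathrm{M}_{S,\Lambda}}}\leq \|T\|_{cov}$ for every source operator $T$ with a single setting at some site $n$, followed by suprema over $\mathrm{M}_{S,\Lambda}$ and $\Lambda$ and the infimum over $T$ and $n$, is exactly the paper's strategy (proposition 4 is deduced from (\ref{s1}), (\ref{s}) and relation (\ref{B18}) of lemma 5). However, the one step you flag as ``fiddly'' is precisely where your sketch breaks down, and the idea that actually closes it is absent. Theorem 1 requires a measure on the \emph{full} space $\Lambda_1^{S_1}\times\cdots\times\Lambda_n^{S_n}\times\cdots\times\Lambda_N^{S_N}$, with $S_n$ copies of $\Lambda_n$, whereas the trace formula analogous to (\ref{22}) built from $T_{S_1\times\cdots\times 1\times\cdots\times S_N}^{(\rho)}$ has only one slot at site $n$ and hence yields, for each fixed $s_n$, a separate measure on a space with a single $\Lambda_n$ factor. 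Your justification --- that relation (\ref{d}) makes the ``missing'' settings at site $n$ come out correctly, because ``all these distributions share the same $n$-th marginal structure and are therefore consistently returned from a single $\Lambda_n$-slot'' --- is not valid: the distributions $P_{(s_1,\ldots,s_N)}^{(\mathcal{E}_{\rho,\mathrm{M}_{S,\Lambda}})}$ for different $s_n$ involve different POV measures $\mathrm{M}_n^{(s_n)}$ and genuinely differ in the $\Lambda_n$ variable, so no single measure with one $\Lambda_n$ slot can return them all as marginals (its marginal cannot depend on $s_n$), and (\ref{d}) gives no mechanism for creating the extra slots. The paper's solution is the Radon--Nikodym step (\ref{B7}): for each tensor positive piece $\tau^{\pm}=T_{cov}\pm T$, one factors $\mathrm{tr}[\tau^{\pm}\{\mathrm{M}_n^{(s_n)}(\mathrm{d}\lambda_n^{(s_n)})\otimes\cdots\}]$ as a conditional probability measure $\alpha_{s_n}^{(\pm)}(\mathrm{d}\lambda_n^{(s_n)}\mid\cdots)$ times the marginal $\mathrm{tr}[\tau^{\pm}\{\mathbb{I}_{\mathcal{H}_n}\otimes\cdots\}]$, which is independent of $s_n$ because $\mathrm{M}_n^{(s_n)}(\Lambda_n)=\mathbb{I}_{\mathcal{H}_n}$; then one takes the \emph{product} $\prod_{s_n}\alpha_{s_n}^{(\pm)}$ of these conditionals against the common base measure, producing the measure (\ref{B9}) on the full space, with total variation bounded by $\mathrm{tr}[(T)_{cov}]$ and hence, after the infimum over coverings, by $\|T\|_{cov}$ (bound (\ref{B17})). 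Without this dilation of the single slot into $S_n$ slots, your construction does not produce an object to which theorem 1 applies.

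A secondary inaccuracy: you claim the measure you build ``returns the correct marginals because each tensor-positive piece does.'' The pieces $\tfrac12(T_{cov}\pm T)$ are not source operators for $\rho$; only their difference $T$ satisfies (\ref{d}), so only the signed combination reproduces the scenario's distributions. The first two inequalities in (\ref{x}) and the final passage to suprema and the infimum are fine, and your total-variation estimate $\leq\mathrm{tr}[T_{cov}]$ together with definition (\ref{15}) is the right bookkeeping; the gap is the missing conditional-measure construction at the single-setting site.
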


\begin{proof}
Inequalities (\ref{x}) follow from (\ref{s1}), (\ref{s}) and the upper bound 
\begin{equation}
\mathrm{\gamma}_{\mathcal{E}_{\rho,\mathrm{M}_{S,\Lambda}}}\leq\inf
_{T_{S_{1}\times\cdots\times\underset{\overset{\uparrow}{n}}{1}\times
\cdots\times S_{N}}^{(\rho)},\text{ }\forall n}\text{ }||T_{S_{1}\times
\cdots\times\underset{\overset{\uparrow}{n}}{1}\times\cdots\times
S_{N}}^{(\rho)}\text{ }||_{cov}  \label{new}
\end{equation}
constituting relation\ (\ref{B18}) of lemma 5 in appendix B.
\end{proof}

Propositions 3, 4 imply the following general statements on an $%
S_{1}\times\cdots\times S_{N}$-setting LHV description of an $N$-partite
quantum state.

\begin{proposition}
(a)\textrm{\ }Every\textrm{\ }$N$-partite\textrm{\ }quantum state $\rho$
admits an $1\times\cdots\times1\times S_{n}\times1\times\cdots\times1$%
-setting LHV description; \medskip\newline
(b)\textrm{\ }If, for a state $\rho$ on a Hilbert space $\mathcal{H}%
_{1}\otimes\cdots\otimes\mathcal{H}_{N},$ there exists a tensor positive
source operator $T_{S_{1}\times\cdots\times \underset{\overset{\uparrow}{n}}{%
1}\times\cdots\times S_{N}}^{(\rho)}$ for some $n$, then $\rho$ admits the $%
S_{1}\times\cdots\times\widetilde{S}_{n}\times\cdots\times S_{N}$-setting
LHV description for an arbitrary number $\widetilde{S}_{n}\geq1$ of settings
at this $n$-th site;\medskip\newline
(c) If, for a state $\rho$ on a Hilbert space $\mathcal{H}_{1}\otimes\cdots
\otimes\mathcal{H}_{N},$ there exists a tensor positive source operator $%
T_{S_{1}\times\cdots\times S_{N}}^{(\rho)}$, then $\rho$ admits the $%
S_{1}\times\cdots\times S_{n}^{\prime}\times\cdots\times S_{N}$-setting LHV
description for an arbitrary number $S_{n}^{\prime}$ of measurements at
every $n$-th site.
\end{proposition}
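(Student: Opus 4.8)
The plan is to derive all three statements from Proposition 4 together with Proposition 3(b) and the elementary properties of source operators recorded in section 2. The unifying observation is that each conclusion ``$\rho$ admits a certain LHV description'' is, by Proposition 3(b), equivalent to the statement that the corresponding $\Upsilon$-parameter equals $1$, and by the chain of inequalities in Proposition 4 (relation (\ref{x})) such a parameter is squeezed between $1$ and the covering norm of a suitable source operator having only one setting at some site. Whenever that source operator is tensor positive, relation (\ref{19'}) forces its covering norm to equal $1$, and the squeeze collapses, giving $\Upsilon = 1$.

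First I would prove (a). Fix an arbitrary $n$ and the setting $1\times\cdots\times 1\times S_{n}\times 1\times\cdots\times 1$, i.e.\ all sites but the $n$-th have a single measurement. By Proposition 1 a source operator $T$ for this setting exists, and this $T$ has \emph{only one setting at every site other than $n$}; in particular it has only one setting at some site (say site $1$, assuming $N\geq 2$). I would then invoke (\ref{x}) with that choice, but I must instead argue directly that the operator is tensor positive. The point is that an operator of the form $\sum_i \alpha_i(\cdots)\otimes(\rho_n^{(i)})^{\otimes S_n}\otimes(\cdots)$ need not be available, so the cleaner route is: the $1\times\cdots\times 1$-setting source operator is $\rho$ itself, and more generally one checks that for the setting with a single measurement everywhere except site $n$, a concrete tensor positive source operator can be written down by applying the appendix-A construction only on the $n$-th tensor factor while keeping the identity-like structure on the other factors; alternatively, invoke Proposition 5(b) itself in the degenerate case $S_1=\cdots=S_N=1$, where $T^{(\rho)}_{1\times\cdots\times 1}\equiv\rho$ is trivially tensor positive (every state is a positive, hence tensor positive, operator), and take the site $n$ to be the one whose number of settings is increased. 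Thus (a) follows from (b) applied with $S_1=\cdots=S_N=1$.

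Next, (b): suppose a tensor positive source operator $T^{(\rho)}_{S_1\times\cdots\times\underset{\overset{\uparrow}{n}}{1}\times\cdots\times S_N}$ exists for some fixed $n$. By (\ref{19'}), its covering norm equals $1$. This operator has exactly one setting at the $n$-th site, so it is an admissible competitor in the infimum on the right-hand side of (\ref{x}) when we consider the target setting $S_1\times\cdots\times\widetilde S_n\times\cdots\times S_N$ for \emph{any} $\widetilde S_n\geq 1$: indeed, relation (\ref{x}) written for that target setting takes the infimum over all source operators with a single setting at \emph{some} site, and our $T$ is precisely such an operator (single setting at site $n$), with the other site-counts matching $S_1,\ldots,S_N$. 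Hence the right-hand side of (\ref{x}) is $\leq \|T\|_{cov}=1$, forcing $\Upsilon^{(\rho)}_{S_1\times\cdots\times\widetilde S_n\times\cdots\times S_N}=1$, and Proposition 3(b) yields the claimed $S_1\times\cdots\times\widetilde S_n\times\cdots\times S_N$-setting LHV description.

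Finally, (c): a tensor positive source operator $T^{(\rho)}_{S_1\times\cdots\times S_N}$ is given. For each fixed $m$, its reduction to the Hilbert space with the $m$-th tensor factor replaced by a single copy, i.e.\ $\big(T^{(\rho)}_{S_1\times\cdots\times S_N}\big)_{red}$ on $\mathcal H_1^{\otimes S_1}\otimes\cdots\otimes\mathcal H_m^{\otimes 1}\otimes\cdots\otimes\mathcal H_N^{\otimes S_N}$, is again a source operator (by the reduction property stated after Proposition 1) and is again tensor positive (since reductions of tensor positive operators are tensor positive, as noted in section 2). Thus for every $m$ there is a tensor positive source operator with a single setting at site $m$, and part (b) applies with $n=m$ and $\widetilde S_m=S_m'$ arbitrary, giving the $S_1\times\cdots\times S_m'\times\cdots\times S_N$-setting LHV description. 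I expect the main obstacle to be purely bookkeeping: making sure that in invoking (\ref{x}) the surrounding site-counts on the competitor operator genuinely match those of the target setting, and handling the degenerate $N=2$ / all-ones cases in (a) so that ``only one setting at $n$-th site'' and ``only one setting at some other site'' are not in conflict; no real analytic difficulty arises because tensor positivity does all the work via (\ref{19'}).
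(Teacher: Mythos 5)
Your argument is correct and follows essentially the same route as the paper: tensor positivity gives covering norm $1$ via (\ref{19'}), this collapses the squeeze in Proposition 4's bound (\ref{x}) to force $\Upsilon=1$, Proposition 3(b) converts that into the LHV description, and (c) is reduced to (b) by passing to the (tensor positive) reduction with one setting at the chosen site. The only cosmetic difference is that you obtain (a) by applying (b) in the degenerate case $S_{1}=\cdots=S_{N}=1$ with $T_{1\times\cdots\times1}^{(\rho)}=\rho$, whereas the paper plugs $\left\Vert \rho\right\Vert _{cov}=1$ into (\ref{x}) directly — the same computation.
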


\begin{proof}
Since $T_{1\times\cdots\times1}^{(\rho)}=\rho$ and $||\rho||_{cov}=1,$ from
bound (\ref{x}) it follows that $\mathrm{\Upsilon}_{1\times\cdots\times
S_{n}\times\cdots\times1}^{(\rho)}=1$ for every $N$-partite state $\rho$. By
proposition 3, this proves statement \textrm{(a)}.

If, for an $N$-partite state $\rho,$ there exists a tensor positive source
operator $T_{S_{1}\times\cdots\times\underset{\overset{\uparrow}{n}}{1}%
\times\cdots\times S_{N}}^{(\rho)}\overset{\otimes}{\geq}0$ for some $n, $
then from relation (\ref{19'}) it follows that, for this source operator,
the covering norm $||T_{S_{1}\times\cdots\times\underset{\overset{\uparrow}{n%
}}{1}\times\cdots\times S_{N}}^{(\rho)}||_{cov}$ $=1. $ In view of bound (%
\ref{x}), the latter implies $\Upsilon_{S_{1}\times\cdots\times\widetilde {S}%
_{n}\times\cdots\times S_{N}}^{(\rho)}=1$ for any number $\widetilde{S}%
_{n}\geq1$ of settings at this $n$-th site. By proposition 3, this proves
statement \textrm{(b)}.

Let an $N$-partite state $\rho $ have a tensor positive source operator $%
T_{S_{1}\times \cdots \times S_{N}}^{(\rho )}\overset{\otimes }{\geq }0$.
Then, for each $n=1,...,N,$ the operator on $\mathcal{H}_{1}^{(S_{1})}%
\otimes \cdots \otimes \mathcal{H}_{n}$ $\otimes \cdots \otimes \mathcal{H}%
_{N}^{(S_{N})}$ reduced from $T_{S_{1}\times \cdots \times S_{N}}^{(\rho )}$
constitutes a tensor positive source operator $T_{S_{1}\times \cdots \times 
\underset{\overset{\uparrow }{n}}{1}\times \cdots \times S_{N}}^{(\rho )}$
for state $\rho $ and, therefore, statement \textrm{(c)} follows from
statement \textrm{(b).}
\end{proof}

Statement \textrm{(a)} of proposition 5 agrees with proposition 2 of Ref. 
\cite{loubenets3} on the LHV description of a general correlation scenario
with setting $S\times1\times\cdots\times1.$

Specified for a bipartite case $(N=2)$, statements \textrm{(b),} \textrm{(c)}
of proposition 5 are consistent in view of note \cite{ter} with theorems 1,
2 in Ref. \cite{terhal}.

\section{Quantum violations of Bell-type inequalities}

Consider a linear combination (\ref{38}) of averages (\ref{av}), arising
under a quantum $S_{1}\times \cdots \times S_{N}$-setting correlation
scenario $\mathcal{E}_{\rho ,\text{ }\mathrm{M}_{S,\Lambda }}$ and specified
by a family $\Psi _{S,\Lambda }=\{\psi _{(s_{1},\ldots ,s_{_{N}})}\}$ of
bounded measurable real-valued functions $\psi _{(s_{1},\ldots ,s_{_{N}})}:$ 
$\Lambda _{1}\times \cdots \times \Lambda _{N}\rightarrow \mathbb{R}.$

By theorem 2, every quantum scenario $\mathcal{E}_{\rho ,\text{ }\mathrm{M}%
_{S,\Lambda }}$ admits an LqHV model and, by theorem 1, the latter is
equivalent to the existence of a bounded real-valued measure $\mu _{\mathcal{%
E}_{\rho ,\mathrm{M}_{S,\Lambda }}}$ returning all joint probability
distributions $P_{(s_{1},...,s_{N})}^{(\mathcal{E}_{\rho ,\text{ }\mathrm{M}%
_{S,\Lambda }})}$ of scenario $\mathcal{E}_{\rho ,\text{ }\mathrm{M}%
_{S,\Lambda }}$ as the corresponding marginals. \ Therefore, for a quantum
scenario $\mathcal{E}_{\rho ,\text{ }\mathrm{M}_{S,\Lambda }},$ a linear
combination (\ref{38}) of averages takes the form 
\begin{align}
& \dsum\limits_{s_{1},...,s_{N}}\left\langle \psi _{(s_{1},\ldots
,s_{_{N}})}(\lambda _{1},...,\lambda _{N})\right\rangle _{\mathcal{E}_{\rho ,%
\mathrm{M}_{S,\Lambda }}}  \label{49} \\
& =\dint \text{ }\dsum\limits_{s_{1},...,s_{N}}\psi _{(s_{1},\ldots
,s_{_{N}})}(\lambda _{1}^{(s_{1})},...,\lambda _{N}^{(s_{N})})\text{ }\mu
_{_{\mathcal{E}_{\rho ,\mathrm{M}_{S,\Lambda }}}}(\mathrm{d}\lambda
_{1}^{(1)}\times \cdots \times \mathrm{d}\lambda _{1}^{(S_{1})}  \notag \\
& \times \cdots \times \mathrm{d}\lambda _{N}^{(1)}\times \cdots \times 
\mathrm{d}\lambda _{N}^{(S_{N})}).  \notag
\end{align}%
Substituting the Jordan decomposition (see remark 2) of measure $\mu _{_{%
\mathcal{E}_{\rho ,\mathrm{M}_{S,\Lambda }}}}$ into (\ref{49}) and taking
into the account (\ref{35''}), (\ref{35}), we derive:%
\begin{align}
& \mathcal{B}_{\Psi _{S,\Lambda }}^{\inf }-\frac{\left\Vert \mu _{_{\mathcal{%
E}_{\rho ,\mathrm{M}_{S,\Lambda }}}}\right\Vert _{var}-1}{2}(\mathcal{B}%
_{\Psi _{S,\Lambda }}^{\sup }-\mathcal{B}_{\Psi _{S,\Lambda }}^{\inf })
\label{50} \\
& \leq \dsum\limits_{s_{1},...,s_{N}}\left\langle \psi _{(s_{1},\ldots
,s_{_{N}})}(\lambda _{1},...,\lambda _{N})\text{ }\right\rangle _{\mathcal{E}%
_{\rho ,\mathrm{M}_{S,\Lambda }}}  \notag \\
& \leq \mathcal{B}_{\Psi _{S,\Lambda }}^{\sup }+\frac{\left\Vert \mu _{_{%
\mathcal{E}_{\rho ,\mathrm{M}_{S,\Lambda }}}}\right\Vert _{var}-1}{2}(%
\mathcal{B}_{\Psi _{S,\Lambda }}^{\sup }-\mathcal{B}_{\Psi _{S,\Lambda
}}^{\inf }),  \notag
\end{align}%
where $\mathcal{B}_{\Psi _{S,\Lambda }}^{\sup },$ $\mathcal{B}_{\Psi
_{S,\Lambda }}^{\inf }$ are the LHV constants (\ref{45}) and $\left\Vert \mu
_{\mathcal{E}_{\rho ,\mathrm{M}_{S,\Lambda }}}\right\Vert _{var}\geq 1$ is
the total variation norm of measure $\mu _{\mathcal{E}_{\rho ,\text{ }%
\mathrm{M}_{S,\Lambda }}}$.

Since inequalities (\ref{50}) hold for each measure $\mu_{\mathcal{E}_{\rho,%
\text{ }\mathrm{M}_{S,\Lambda}}}$ returning all joint probability
distributions $P_{(s_{1},...,s_{N})}^{(\mathcal{E}_{\rho,\text{ }\mathrm{M}%
_{S,\Lambda}})}$ of scenario $\mathcal{E}_{\rho,\text{ }\mathrm{M}%
_{S,\Lambda}}$ as the corresponding marginals, we have:%
\begin{align}
& \mathcal{B}_{\Psi_{S,\Lambda}}^{\inf}-\frac{\mathrm{\gamma}_{\mathcal{E}%
_{\rho,\mathrm{M}_{S,\Lambda}}}-1}{2}\left( \mathcal{B}_{\Psi_{S,\Lambda}}^{%
\sup}-\mathcal{B}_{\Psi_{S,\Lambda}}^{\inf}\right)  \label{52} \\
& \leq\dsum \limits_{s_{1},...,s_{N}}\left\langle
\psi_{(s_{1},\ldots,s_{_{N}})}(\lambda_{1},...,\lambda_{N})\text{ }%
\right\rangle _{\mathcal{E}_{\rho,\mathrm{M}_{S,\Lambda}}}  \notag \\
& \leq\mathcal{B}_{\Psi_{S,\Lambda}}^{\sup}+\frac{\mathrm{\gamma }_{\mathcal{%
E}_{\rho,\mathrm{M}_{S,\Lambda}}}-1}{2}\left( \mathcal{B}_{\Psi_{S,%
\Lambda}}^{\sup}-\mathcal{B}_{\Psi_{S,\Lambda}}^{\inf}\right) ,  \notag
\end{align}
where $\mathrm{\gamma}_{\mathcal{E}_{\rho,\mathrm{M}_{S,\Lambda}}}=\inf
_{\mu_{_{_{\mathcal{E}_{\rho,\mathrm{M}_{S,\Lambda}}}}}}\left\Vert \mu_{%
\mathcal{E}_{\rho,\mathrm{M}_{S,\Lambda}}}\right\Vert _{var}\geq1$ is the
scenario parameter (\ref{q1}).

Maximizing (\ref{52}) over all possible scenarios $\mathcal{E}_{\rho,\text{ }%
\mathrm{M}_{S,\Lambda}}$, performed on a quantum state $\rho$ and with
outcomes in a set $\Lambda,$ and taking into the account that $\sup _{%
\mathrm{M}_{S,\Lambda}}$ $\mathrm{\gamma}_{\mathcal{E}_{\rho,\mathrm{M}%
_{S,\Lambda}}}=\mathrm{\Upsilon}_{S_{1}\times\cdots\times
S_{N}}^{(\rho,\Lambda)}$ is the state parameter (\ref{s1}), for an $N$%
-partite quantum state $\rho$ and a function collection $\Psi_{S,\Lambda}=\{%
\psi_{(s_{1},\ldots,s_{_{N}})}\},$ we derive the following analogs 
\begin{align}
& \mathcal{B}_{\Psi_{S,\Lambda}}^{\inf}-\frac{\mathrm{\Upsilon}%
_{S_{1}\times\cdots\times S_{N}}^{(\rho,\Lambda)}-1}{2}(\mathcal{B}%
_{\Psi_{S,\Lambda }}^{\sup}-\mathcal{B}_{\Psi_{S,\Lambda}}^{\inf})
\label{52'} \\
& \leq\dsum \limits_{s_{1},...,s_{N}}\left\langle
\psi_{(s_{1},\ldots,s_{_{N}})}(\lambda_{1},...,\lambda_{N})\text{ }%
\right\rangle _{\mathcal{E}_{\rho,\mathrm{M}_{S,\Lambda}}}  \notag \\
& \leq\mathcal{B}_{\Psi_{S,\Lambda}}^{\sup}+\frac{\mathrm{\Upsilon}%
_{S_{1}\times\cdots\times S_{N}}^{(\rho,\Lambda)}-1}{2}(\mathcal{B}%
_{\Psi_{S,\Lambda}}^{\sup}-\mathcal{B}_{\Psi_{S,\Lambda}}^{\inf})  \notag
\end{align}
of the LHV constraints (\ref{44}). Since inequalities (\ref{52'}) are
non-trivial only for those $\Psi_{S,\Lambda}$ that correspond via (\ref{44})
to Bell-type inequalities, we refer to (\ref{52'}) as \emph{the} \emph{%
analogs of Bell-type inequalities for an }$N$\emph{-partite quantum state }$%
\rho.$

From (\ref{52'}) it follows%
\begin{equation}
\left\vert \dsum \limits_{s_{1},...,s_{_{N}}}\left\langle
\psi_{(s_{1},\ldots,s_{_{N}})}(\lambda_{1},...,\lambda _{N})\right\rangle _{%
\mathcal{E}_{\rho,\mathrm{M}_{S,\Lambda}}}\right\vert \text{ }\leq\mathrm{%
\Upsilon}_{S_{1}\times\cdots\times S_{N}}^{(\rho,\Lambda )}\text{ }\mathcal{B%
}_{\Psi_{S,\Lambda}},  \label{52'_}
\end{equation}
where $\mathcal{B}_{\Psi_{S,\Lambda}}$ is the LHV constant (\ref{48}).

\begin{remark}
The quantum constraints (\ref{52'}), (\ref{52'_}) are equivalent iff $%
\mathcal{B}_{\Psi_{S,\Lambda}}^{\inf}=-\mathcal{B}_{\Psi_{S,\Lambda}}^{\sup}$%
. For an arbitrary function collection $\Psi_{S,\Lambda},$ (\ref{52'}) $%
\Rightarrow$ (\ref{52'_}) but not vice versa. In order to see a difference
between these two types of quantum constraints for an arbitrary $\Psi
_{S,\Lambda}$, let us take $\widetilde{\Psi}_{S,\Lambda}$ for which $%
\mathcal{B}_{\widetilde{\Psi}_{S,\Lambda}}^{\sup}=0.$ In this case, the
left-hand and the right-hand sides of (\ref{52'}) are equal to 
\begin{equation}
-\frac{\mathrm{\Upsilon}_{S_{1}\times\cdots\times S_{N}}^{(\rho,\Lambda)}+1}{%
2}|\mathcal{B}_{\Psi_{S,\Lambda}}^{\inf}|,\text{ \ \ \ }\frac {\mathrm{%
\Upsilon}_{S_{1}\times\cdots\times S_{N}}^{(\rho,\Lambda)}-1}{2}|\mathcal{B}%
_{\Psi_{S,\Lambda}}^{\inf}|,
\end{equation}
respectively, whereas the right hand side of (\ref{52'_}) is given by $%
\Upsilon_{S_{1}\times\cdots\times S_{N}}^{(\rho,\Lambda)}|\mathcal{B}%
_{\Psi_{S,\Lambda}}^{\inf}|.$ Note that, specifically for bipartite
Bell-type inequalities with $\mathcal{B}_{\widetilde{\Psi}%
_{S,\Lambda}}^{\sup}=0,$ the maximal violations by two-qubit states have
been analyzed numerically in Ref. \cite{brunner}.
\end{remark}

The following statement (proved in appendix C) shows that the quantum
constraints (\ref{52'}), (\ref{52'_}) are \emph{tight} in the sense that the
state parameter $\Upsilon_{S_{1}\times\cdots\times S_{N}}^{(\rho,\Lambda)}$
represents the maximal violation by state $\rho$ of all Bell-type
inequalities (either on correlation functions or on joint probabilities) for
a given outcome set $\Lambda$ and settings $L_{1}\times\cdots\times L_{N}$
with $L_{1}\leq S_{1},...,L_{N}\leq S_{N}$.

\begin{lemma}
In (\ref{52'_}), parameter $\mathrm{\Upsilon }_{S_{1}\times \cdots \times
S_{N}}^{(\rho ,\Lambda )}=\sup_{\mathrm{M}_{S,\Lambda }}\mathrm{\gamma }_{%
\mathcal{E}_{\rho ,\mathrm{M}_{S,\Lambda }}}$ is otherwise expressed by%
\begin{equation}
\mathrm{\Upsilon }_{S_{1}\times \cdots \times S_{N}}^{(\rho ,\Lambda
)}=\sup_{_{\substack{ \mathrm{M}_{S,\Lambda },\Psi _{S,\Lambda },  \\ 
\mathcal{B}_{\Psi _{S,\Lambda }}\neq 0}}}\left\vert \frac{1}{\mathcal{B}%
_{\Psi _{S,\Lambda }}}\dsum\limits_{s_{1},...,s_{_{N}}}\left\langle \psi
_{(s_{1},\ldots ,s_{_{N}})}(\lambda _{1},...,\lambda _{N})\right\rangle _{%
\mathcal{E}_{\rho ,\text{ }\mathrm{M}_{S,\Lambda }}}\right\vert ,  \label{64}
\end{equation}%
where supremum is taken over all non-trivial $(\mathcal{B}_{\Psi _{S,\Lambda
}}\neq 0)$ families $\Psi _{S,\Lambda }=\{\psi _{(s_{1},\ldots ,s_{_{N}})}\}$
of bounded measurable real-valued functions on $\Lambda =\Lambda _{1}\times
\cdots \times \Lambda _{N}$ and over all possible families $\mathrm{M}%
_{S,\Lambda }=\{\mathrm{M}_{n}^{(s_{n})}\}$ of POV measures on spaces $%
(\Lambda _{n},\mathcal{F}_{\Lambda _{n}})$.$\medskip $
\end{lemma}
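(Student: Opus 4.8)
The plan is to prove the two inequalities $\mathrm{\Upsilon}_{S_{1}\times\cdots\times S_{N}}^{(\rho,\Lambda)}\geq \sup(\cdots)$ and $\mathrm{\Upsilon}_{S_{1}\times\cdots\times S_{N}}^{(\rho,\Lambda)}\leq \sup(\cdots)$ separately, where the right-hand side is the supremum in (\ref{64}). The $\geq$ direction is essentially a restatement of (\ref{52'_}): for \emph{any} family $\mathrm{M}_{S,\Lambda}$ of POV measures and \emph{any} non-trivial $\Psi_{S,\Lambda}$, inequality (\ref{52'_}) gives $|\dsum_{s_{1},...,s_{_{N}}}\langle \psi_{(s_{1},\ldots,s_{_{N}})}\rangle_{\mathcal{E}_{\rho,\mathrm{M}_{S,\Lambda}}}|/\mathcal{B}_{\Psi_{S,\Lambda}}\leq \mathrm{\Upsilon}_{S_{1}\times\cdots\times S_{N}}^{(\rho,\Lambda)}$, so taking the supremum over all such $\mathrm{M}_{S,\Lambda}$ and $\Psi_{S,\Lambda}$ on the left yields $\geq$.

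For the $\leq$ direction, recall that $\mathrm{\Upsilon}_{S_{1}\times\cdots\times S_{N}}^{(\rho,\Lambda)}=\sup_{\mathrm{M}_{S,\Lambda}}\mathrm{\gamma}_{\mathcal{E}_{\rho,\mathrm{M}_{S,\Lambda}}}$ and $\mathrm{\gamma}_{\mathcal{E}_{\rho,\mathrm{M}_{S,\Lambda}}}=\inf\left\Vert \mu_{\mathcal{E}_{\rho,\mathrm{M}_{S,\Lambda}}}\right\Vert_{var}$ over all normalized bounded real-valued measures returning the joint distributions as marginals. So it suffices to show that, for each fixed collection $\mathrm{M}_{S,\Lambda}$ of POV measures, $\mathrm{\gamma}_{\mathcal{E}_{\rho,\mathrm{M}_{S,\Lambda}}}$ is bounded above by the supremum over non-trivial $\Psi_{S,\Lambda}$ of $|\dsum\langle\psi\rangle|/\mathcal{B}_{\Psi_{S,\Lambda}}$. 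The natural route is a duality argument. Fix $\mathrm{M}_{S,\Lambda}$ and consider the linear space of bounded real-valued measures on the product space $(\Lambda_{1}^{S_{1}}\times\cdots\times\Lambda_{N}^{S_{N}}, \mathcal{F}_{\Lambda_{1}}^{\otimes S_{1}}\otimes\cdots\otimes\mathcal{F}_{\Lambda_{N}}^{\otimes S_{N}})$ equipped with the total variation norm, whose dual is (contained in) the space of bounded measurable functions with the sup norm. A normalized measure returning the required marginals is constrained by a family of linear equations (one for each joint measurement $(s_{1},...,s_{N})$ and each bounded function on $\Lambda_{1}\times\cdots\times\Lambda_{N}$, testing the marginal). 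Minimizing the total variation norm over this affine set is a linear program; its dual optimal value is the supremum, over "test vectors" in the constraint space, of the duality pairing against the target marginals, normalized by the total-variation-dual norm of the combined constraint functional. One identifies the test vectors with precisely the families $\Psi_{S,\Lambda}=\{\psi_{(s_{1},\ldots,s_{_{N}})}\}$, the pairing against the marginals with $\dsum_{s_{1},...,s_{_{N}}}\langle\psi_{(s_{1},\ldots,s_{_{N}})}\rangle_{\mathcal{E}_{\rho,\mathrm{M}_{S,\Lambda}}}$ (using that the marginals of $\mu$ are the $P_{(s_{1},...,s_{N})}^{(\mathcal{E}_{\rho,\mathrm{M}_{S,\Lambda}})}$), and the relevant dual norm of the constraint functional $\omega'\mapsto\dsum_{s_{1},...,s_{_{N}}}\psi_{(s_{1},\ldots,s_{_{N}})}(\lambda_{1}^{(s_{1})},...,\lambda_{N}^{(s_{N})})$ — modulo the normalization-of-$\mu$ constraint, which shifts it by a constant — with exactly the quantity $\mathcal{B}_{\Psi_{S,\Lambda}}=\max\{|\mathcal{B}_{\Psi_{S,\Lambda}}^{\sup}|,|\mathcal{B}_{\Psi_{S,\Lambda}}^{\inf}|\}$ from (\ref{48}), via formula (\ref{48}) expressing $\mathcal{B}_{\Psi_{S,\Lambda}}$ as a sup of $|\dsum\psi|$ over all deterministic assignments $\lambda_{n}^{(s_{n})}\in\Lambda_{n}$. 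Taking the supremum of $\mathrm{\gamma}_{\mathcal{E}_{\rho,\mathrm{M}_{S,\Lambda}}}$ over $\mathrm{M}_{S,\Lambda}$ then gives the claimed $\leq$, and combined with the $\geq$ direction, the equality (\ref{64}) follows.

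The main obstacle is making the duality argument rigorous in the infinite-dimensional / continuous-outcome setting: one is working with bounded finitely-additive or countably-additive measures on a product of standard Borel spaces, and the "dual" of bounded measures under total variation is genuinely larger than $L^\infty$-type function spaces, so one must be careful that the optimizing functional is attained (or approximated) by an honest bounded measurable $\Psi_{S,\Lambda}$. A clean way to sidestep the subtlety is to avoid general LP duality and instead argue directly: for the $\leq$ direction one shows that for every $\varepsilon>0$ one can exhibit a $\Psi_{S,\Lambda}$ with $|\dsum\langle\psi\rangle_{\mathcal{E}_{\rho,\mathrm{M}_{S,\Lambda}}}|/\mathcal{B}_{\Psi_{S,\Lambda}}\geq \mathrm{\gamma}_{\mathcal{E}_{\rho,\mathrm{M}_{S,\Lambda}}}-\varepsilon$, by starting from a near-optimal source operator $T$ for $\rho$ (whose covering norm controls $\mathrm{\gamma}$ via (\ref{x})/(\ref{new})), passing to a near-optimal tensor-positive decomposition $T=\tfrac12(T_{cov}+T)-\tfrac12(T_{cov}-T)$, and reading off from the covering data the sign pattern that a separating function $\Psi_{S,\Lambda}$ should realize; alternatively, one reduces to the finite case by a measurable partition of each $\Lambda_n$ refining the relevant POV measures and the candidate measure, where the LP duality is elementary, and then takes a limit. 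Either way, the technical heart is the identification of $\mathcal{B}_{\Psi_{S,\Lambda}}$ with the correct dual norm and the attainment (up to $\varepsilon$) of the extremal $\Psi_{S,\Lambda}$; the rest is bookkeeping with the Jordan decomposition and formula (\ref{48}) already recorded above.
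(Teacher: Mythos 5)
Your viable branch coincides with the paper's own proof. The direction $\mathrm{\Upsilon }_{S_{1}\times \cdots \times S_{N}}^{(\rho ,\Lambda )}\geq \sup (\cdots )$ is indeed just (\ref{52'_}) (the paper's (\ref{C1})), and for the converse the paper also fixes $\mathrm{M}_{S,\Lambda }$, proves $\sup_{\Psi _{S,\Lambda }}\left\vert \cdot \right\vert /\mathcal{B}_{\Psi _{S,\Lambda }}=\mathrm{\gamma }_{\mathcal{E}_{\rho ,\mathrm{M}_{S,\Lambda }}}$ by elementary LP duality when every $\Lambda _{n}$ is finite (invoking the Kaplan et al. argument), and handles countable and uncountable outcome sets exactly by your second workaround: families of indicator-built step functions associated with finite truncations, respectively with nested measurable partitions $\mathrm{D}_{K_{n}}$ of $\Lambda _{n}$ (Eqs. (\ref{C10})--(\ref{C12})), followed by a limit, and finally a supremum over $\mathrm{M}_{S,\Lambda }$.

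Two points in your proposal need repair. First, the alternative of reading a separating $\Psi _{S,\Lambda }$ off a near-optimal covering decomposition of a source operator cannot work: relation (\ref{new}) gives only $\mathrm{\gamma }_{\mathcal{E}_{\rho ,\mathrm{M}_{S,\Lambda }}}\leq \left\Vert T\right\Vert _{cov}$, an upper bound that is not claimed (and not expected) to be attained, whereas the $\leq $ direction of (\ref{64}) requires exhibiting functions $\Psi _{S,\Lambda }$ whose normalized value approaches $\mathrm{\gamma }_{\mathcal{E}_{\rho ,\mathrm{M}_{S,\Lambda }}}$ itself; a covering of $T$ is a certificate for an upper bound on $\gamma $ (it bounds $\left\Vert \mu _{T}\right\Vert _{var}$ for one particular measure), not a dual certificate for the infimum over all measures with the prescribed marginals, so no near-optimal $\Psi $ can be extracted from it. Second, in the partition route the substantive step is precisely what you compress into "then takes a limit": finite LP duality at level $K$ only yields $\sup_{\Psi }\left\vert \cdot \right\vert /\mathcal{B}_{\Psi }\geq \mathrm{\gamma }_{\mathcal{E}}^{(K)}$, the coarse-grained analogue defined in (\ref{C8}), (\ref{C14}), and one must still prove $\sup_{K}\mathrm{\gamma }_{\mathcal{E}}^{(K)}=\mathrm{\gamma }_{\mathcal{E}}$. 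The paper secures this by rewriting $\mathrm{\gamma }_{\mathcal{E}}$ as the infimum of $\xi ^{+}+\xi ^{-}$ over affine decompositions of the scenario's distributions into LHV ones (its (\ref{C3})) and by choosing the partitions nested as in (\ref{C11}), so that $\mathrm{\gamma }_{\mathcal{E}}^{(K)}$ is monotone under refinement, bounded by $\mathrm{\gamma }_{\mathcal{E}}$, and converges to it; without an argument of this kind (e.g. a compactness argument in the total-variation ball in the spirit of the paper's proof of lemma 2) the finite-level inequalities do not add up to the claimed equality.
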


From lemma 3 it follows that the state parameter $\mathrm{\Upsilon }%
_{S_{1}\times \cdots \times S_{N}}^{(\rho )}:=\sup_{\Lambda }\mathrm{%
\Upsilon }_{S_{1}\times \cdots \times S_{N}}^{(\rho ,\Lambda )},$ introduced
by relation (\ref{s}) and discussed in proposition 3, is otherwise expressed
by 
\begin{equation}
\mathrm{\Upsilon }_{S_{1}\times \cdots \times S_{N}}^{(\rho )}=\sup_{ 
_{\substack{ \Lambda ,\text{ }\mathrm{M}_{S,\Lambda },\Psi _{S,\Lambda }, 
\\ \mathcal{B}_{\Psi _{S,\Lambda }}\neq 0}}}\left\vert \frac{1}{\mathcal{B}%
_{\Psi _{S,\Lambda }}}\dsum\limits_{s_{1},...,s_{_{N}}}\left\langle \psi
_{(s_{1},\ldots ,s_{_{N}})}(\lambda _{1},...,\lambda _{N})\right\rangle _{%
\mathcal{E}_{\rho ,\text{ }\mathrm{M}_{S,\Lambda }}}\right\vert  \label{t}
\end{equation}%
and, therefore, represents the maximal violation by state $\rho $ of all
Bell-type inequalities on correlation functions and joint probabilities for
settings up to setting $S_{1}\times \cdots \times S_{N}$ and an arbitrary
outcome set $\Lambda .$

\begin{proposition}
For a state $\rho$ on a Hilbert space $\mathcal{H}_{1}\otimes\mathcal{\cdots
\otimes H}_{N},$ the following statements are mutually equivalent:\newline
\textrm{(a)} State $\rho$ admits the $S_{1}\times\cdots\times S_{N}$-setting
LHV description;\newline
\textrm{(b)} Parameter $\mathrm{\Upsilon}_{S_{1}\times\cdots\times
S_{N}}^{(\rho)}=1;$\newline
\textrm{(c)} State $\rho$ does not violate any Bell-type inequality with
settings $L_{1}\leq S_{1},...,L_{N}\leq S_{N}$ and outcomes in an arbitrary
set $\Lambda.$
\end{proposition}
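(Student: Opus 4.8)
The plan is to obtain Proposition~6 by assembling results already in hand: the equivalence (a)$\,\Leftrightarrow\,$(b) is nothing but Proposition~3(b), so the work is concentrated in the equivalence (b)$\,\Leftrightarrow\,$(c), which I would extract from Lemma~3, relation~(\ref{t}) and the quantum bound~(\ref{52'_}). First I would record the reduction that makes $\mathrm{\Upsilon}_{S_{1}\times\cdots\times S_{N}}^{(\rho)}$ control \emph{all} sub-settings at once: an $L_{1}\times\cdots\times L_{N}$-setting scenario with $L_{n}\leq S_{n}$ together with an $L$-setting family $\Psi_{L,\Lambda}$ embeds into an $S$-setting scenario by adjoining arbitrary normalized POV measures $\mathrm{M}_{n}^{(s_{n})}$ for the unused settings $s_{n}>L_{n}$ and padding $\Psi$ with the zero functions on the unused joint measurements; this leaves $\mathcal{B}_{\Psi_{L,\Lambda}}^{\sup},\mathcal{B}_{\Psi_{L,\Lambda}}^{\inf}$ and the quantum average $\sum_{s}\langle\psi\rangle_{\mathcal{E}_{\rho,\mathrm{M}}}$ unchanged, so the corresponding ratio in~(\ref{t}) is unchanged. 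Hence it suffices throughout to treat $L_{n}=S_{n}$.

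For (b)$\,\Rightarrow\,$(c): assume $\mathrm{\Upsilon}_{S_{1}\times\cdots\times S_{N}}^{(\rho)}=1$, hence $\mathrm{\Upsilon}_{S_{1}\times\cdots\times S_{N}}^{(\rho,\Lambda)}=1$ for every outcome set $\Lambda$. Then for every quantum scenario $\mathcal{E}_{\rho,\mathrm{M}_{S,\Lambda}}$ and every bounded measurable family $\Psi_{S,\Lambda}$, bound~(\ref{52'_}) collapses to $\bigl|\sum_{s}\langle\psi\rangle_{\mathcal{E}_{\rho,\mathrm{M}_{S,\Lambda}}}\bigr|\leq\mathcal{B}_{\Psi_{S,\Lambda}}$, i.e. to~(\ref{47}). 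To pass from this symmetric bound to the general two-sided Bell-type inequality~(\ref{44}) I would use the constant-shift trick: replacing each $\psi_{(s_{1},\ldots,s_{N})}$ by $\psi_{(s_{1},\ldots,s_{N})}-(\mathcal{B}_{\Psi_{S,\Lambda}}^{\sup}+\mathcal{B}_{\Psi_{S,\Lambda}}^{\inf})/(2S_{1}\cdots S_{N})$ shifts $\sum_{s}\psi$ by a constant and symmetrizes the LHV constants, so that~(\ref{47}) applied to the shifted family is verbatim~(\ref{44}) for the original one. Together with the padding reduction this shows that $\rho$ violates no Bell-type inequality~(\ref{44}) with settings $L_{n}\leq S_{n}$, which is statement~(c).

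For (c)$\,\Rightarrow\,$(b) I would argue by contraposition. If $\mathrm{\Upsilon}_{S_{1}\times\cdots\times S_{N}}^{(\rho)}>1$, then by Lemma~3 and~(\ref{t}) the supremum there exceeds $1$, so there are $\Lambda$, a collection $\mathrm{M}_{S,\Lambda}$ of POV measures, and a non-trivial family $\Psi_{S,\Lambda}$ (with $\mathcal{B}_{\Psi_{S,\Lambda}}\neq0$, which by Definition~7 is genuinely a Bell-type inequality) such that $\bigl|\sum_{s}\langle\psi\rangle_{\mathcal{E}_{\rho,\mathrm{M}_{S,\Lambda}}}\bigr|>\mathcal{B}_{\Psi_{S,\Lambda}}$; applying the same shift produces a family whose two-sided inequality~(\ref{44}) is genuinely violated by the quantum scenario $\mathcal{E}_{\rho,\mathrm{M}_{S,\Lambda}}$, which has settings $L_{n}=S_{n}\leq S_{n}$. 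Thus (c) fails, and the cycle (a)$\,\Leftrightarrow\,$(b)$\,\Leftrightarrow\,$(c) closes. I expect the only delicate point to be the bookkeeping between the two equivalent presentations of a Bell-type inequality — the symmetric form~(\ref{47}),(\ref{52'_}) and the general two-sided form~(\ref{44}) — together with the need to exclude the trivial families with $\mathcal{B}_{\Psi_{S,\Lambda}}=0$, which by Definition~7 are not Bell-type inequalities; everything else is a direct appeal to Proposition~3(b), Lemma~3, relation~(\ref{t}) and the bound~(\ref{52'_}).
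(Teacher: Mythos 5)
Your proposal is correct, and its skeleton matches the paper's: (a)$\Leftrightarrow$(b) is taken from proposition 3, and the link between (c) and $\mathrm{\Upsilon}_{S_{1}\times\cdots\times S_{N}}^{(\rho)}$ comes from relation (\ref{t})/lemma 3 (your (c)$\Rightarrow$(b) is exactly the paper's argument stated in contrapositive form). Where you differ is the remaining leg: the paper closes the cycle with the one-line implication (a)$\Rightarrow$(c), which is immediate from definition 6 — an $S_{1}\times\cdots\times S_{N}$-setting LHV description means every quantum scenario with $L_{n}\leq S_{n}$ admits an LHV model and therefore satisfies every constraint (\ref{44}) — whereas you prove (b)$\Rightarrow$(c) by passing through the quantum analog (\ref{52'_}) with $\mathrm{\Upsilon}=1$, a constant-shift symmetrization to recover the two-sided form (\ref{44}) from (\ref{47}), and a padding reduction for $L_{n}\leq S_{n}$. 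That route is valid (the shift and padding bookkeeping check out), but heavier than needed: note that (\ref{52'}) with $\mathrm{\Upsilon}_{S_{1}\times\cdots\times S_{N}}^{(\rho,\Lambda)}=1$ already yields (\ref{44}) directly, making the shift superfluous, and the paper avoids the padding issue because an $S$-setting LHV description automatically covers all sub-settings. Two small slips worth fixing: the Bell-inequality definition is definition 6, not 7; and $\mathcal{B}_{\Psi_{S,\Lambda}}\neq 0$ by itself does not make (\ref{44}) a Bell-type inequality (non-triviality means violability by some non-LHV scenario — e.g.\ the constant family has $\mathcal{B}\neq 0$ but is trivial). In your contrapositive this is harmless, since the quantum violation you construct itself certifies that the constraint is a genuine Bell-type inequality with settings $L_{n}=S_{n}\leq S_{n}$, so statement (c) indeed fails.
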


\begin{proof}
Equivalence \textrm{(a)}$\Leftrightarrow$\textrm{(b)} follows from
proposition 3. Implication\textrm{\ (a)} $\Rightarrow$ \textrm{(c)} follows
from definition 6 of a Bell-type inequality. Let \textrm{(c)} hold. Then
from (\ref{t}) it follows $\Upsilon_{S_{1}\times\cdots\times
S_{N}}^{(\rho)}=1,$ so that $\mathrm{(c)}\Rightarrow\mathrm{(b)}$. Thus, we
have proved $\mathrm{(a)}\Leftrightarrow\mathrm{(b),}$ \textrm{(a)} $%
\Rightarrow$ \textrm{(c), }$\mathrm{(c)}\Rightarrow\mathrm{(b).}$ These
implications prove the mutual equivalence of statements \textrm{(a)}, 
\textrm{(b), (c)}.\medskip
\end{proof}

The following theorem introduces a general analytical upper bound on the
maximal violation $\mathrm{\Upsilon}_{S_{1}\times\cdots\times
S_{N}}^{(\rho)} $ by state $\rho$ of all $S_{1}\times\cdots\times S_{N}$%
\emph{-}setting\emph{\ }Bell-type inequalities -- \emph{the maximal }$%
S_{1}\times\cdots\times S_{N}$\emph{-setting Bell} \emph{violation for state 
}$\rho,$ for short.

\begin{theorem}
For every quantum state $\rho$ on a Hilbert space $\mathcal{H}_{1}\otimes%
\mathcal{\cdots\otimes H}_{N}$ and arbitrary positive integers $%
S_{1},...,S_{N}\geq1,$ the maximal $S_{1}\times\cdots\times S_{N}$\emph{-}%
setting Bell violation\emph{\ }$\mathrm{\Upsilon}_{S_{1}\times \cdots\times
S_{N}}^{(\rho)}\geq1$ is upper bounded by 
\begin{align}
\mathrm{\Upsilon}_{S_{1}\times\cdots\times S_{N}}^{(\rho)} & \leq
\inf_{T_{S_{1}\times\cdots\times\underset{\overset{\uparrow}{n}}{1}%
\times\cdots\times S_{N}}^{(\rho)},\text{ }\forall n}\text{ }%
||T_{S_{1}\times\cdots\times\underset{\overset{\uparrow}{n}}{1}%
\times\cdots\times S_{N}}^{(\rho)}\text{ }||_{cov}  \label{67} \\
& \leq\inf_{T_{S_{1}\times\cdots\times\underset{\overset{\uparrow}{n}}{1}%
\times\cdots\times S_{N}}^{(\rho)},\text{ }\forall n}\text{ }%
||T_{S_{1}\times\cdots\times\underset{\overset{\uparrow}{n}}{1}\times
\cdots\times S_{N}}^{(\rho)}\text{ }||_{1},  \notag
\end{align}
where infimum is taken over all source operators $T_{S_{1}\times\cdots \times%
\underset{\overset{\uparrow}{n}}{1}\times\cdots\times S_{N}}^{(\rho)}$ with
only one setting at $n$-th site and over all $n=1,...,N$ and $\left\Vert
\cdot\right\Vert _{cov}$, $\left\Vert \cdot\right\Vert _{1}$ mean the
covering norm and the trace norm, respectively.
\end{theorem}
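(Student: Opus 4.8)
The plan is to derive Theorem 3 as an almost immediate consequence of the machinery already assembled, chaining together Proposition 4 and Lemma 1. First I would recall that Proposition 6 identifies $\mathrm{\Upsilon}_{S_{1}\times\cdots\times S_{N}}^{(\rho)}$ with the maximal $S_{1}\times\cdots\times S_{N}$-setting Bell violation for state $\rho$, so it suffices to bound that state parameter. Then I would invoke Proposition 4, in particular inequality (\ref{x}), which already gives
\begin{equation}
\mathrm{\Upsilon}_{S_{1}\times\cdots\times S_{N}}^{(\rho)}\leq\inf_{T_{S_{1}\times\cdots\times\underset{\overset{\uparrow}{n}}{1}\times\cdots\times S_{N}}^{(\rho)},\text{ }\forall n}||T_{S_{1}\times\cdots\times\underset{\overset{\uparrow}{n}}{1}\times\cdots\times S_{N}}^{(\rho)}||_{cov},
\end{equation}
the first inequality in (\ref{67}). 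The remaining task is the second inequality, which is purely a statement about norms of a fixed self-adjoint trace class operator.

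For the second inequality I would apply property (1) of Lemma 1 to each source operator $T_{S_{1}\times\cdots\times\underset{\overset{\uparrow}{n}}{1}\times\cdots\times S_{N}}^{(\rho)}$ appearing in the infimum: Lemma 1(1) asserts $\|W\|_{cov}\leq\|W\|_{1}$ for every self-adjoint trace class $W$ on a tensor product Hilbert space, so in particular
\begin{equation}
||T_{S_{1}\times\cdots\times\underset{\overset{\uparrow}{n}}{1}\times\cdots\times S_{N}}^{(\rho)}||_{cov}\leq||T_{S_{1}\times\cdots\times\underset{\overset{\uparrow}{n}}{1}\times\cdots\times S_{N}}^{(\rho)}||_{1}
\end{equation}
for each admissible $n$ and each choice of such a source operator. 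Taking the infimum over all source operators with one setting at the $n$-th site and over all $n=1,\dots,N$ on both sides preserves the inequality (the infimum is monotone), which yields the second line of (\ref{67}).

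I do not anticipate a genuine obstacle here; the theorem is essentially a repackaging of Proposition 4 together with the elementary ordering of the three norms $\|\cdot\|_{cov}\leq\|\cdot\|_{1}$ from Lemma 1. The only point requiring a word of care is that the infimum in (\ref{x}) is taken over precisely the same family of operators as the infimum in (\ref{67}) — source operators with exactly one setting at site $n$, ranging over all $n$ — so no mismatch of index sets arises when passing from the covering-norm bound to the trace-norm bound. I would also note in passing that each such infimum is over a nonempty set, since Proposition 1 guarantees existence of a source operator for every setting, so both bounds in (\ref{67}) are well defined and finite; the lower bound $\mathrm{\Upsilon}_{S_{1}\times\cdots\times S_{N}}^{(\rho)}\geq1$ has already been recorded in (\ref{s}). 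With these observations the proof is complete in a few lines, essentially by citing (\ref{x}) and Lemma 1(1).
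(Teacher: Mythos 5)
Your proposal is correct and follows essentially the same route as the paper, whose proof simply cites relation (\ref{t}) (the interpretation of $\mathrm{\Upsilon}_{S_{1}\times\cdots\times S_{N}}^{(\rho)}$ as the maximal Bell violation), proposition 4 (bound (\ref{x})), and bound (\ref{19}), the latter being exactly the ordering $\left\Vert \cdot\right\Vert _{cov}\leq\left\Vert \cdot\right\Vert _{1}$ from lemma 1 that you invoke before taking infima. The only cosmetic slip is attributing the identification of $\mathrm{\Upsilon}$ with the maximal Bell violation to proposition 6 rather than to relation (\ref{t}) following lemma 3, which does not affect the argument.
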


\begin{proof}
The statement follows from relation (\ref{t}), proposition 4 and bound (\ref%
{19}).
\end{proof}

\subsection{Numerical estimates}

In this section, via the analytical upper bound (\ref{67}) we estimate the
maximal $S_{1}\times\cdots\times S_{N}$-setting Bell violation $\mathrm{%
\Upsilon}_{S_{1}\times\cdots\times S_{N}}^{(\rho)}$ in terms of numerical
characteristics of quantum correlation scenarios such as a number $N\geq2$
of sites, a number $S_{n}\geq1$ of measurements and the Hilbert space
dimension $d_{n}:=\dim\mathcal{H}_{n}$ at each $n$-th of $N$ sites.

Let us first evaluate $\mathrm{\Upsilon}_{S_{1}\times\cdots\times
S_{N}}^{(\rho)}$ for some concrete quantum states generally used in quantum
information processing.

For the two-qubit singlet $\psi_{singlet}=\frac{1}{\sqrt{2}}(e_{1}\otimes
e_{2}-e_{2}\otimes e_{1}),$ the analytical upper bound (\ref{67}) and
relation (\ref{A23}) imply 
\begin{equation}
\mathrm{\Upsilon}_{S\times2}^{(\rho_{singlet})}\leq\sqrt{3},\text{ \ }%
\forall S\geq2.  \label{sing}
\end{equation}
Note that, due to Tsirelson's bound \cite{tsirelson} and the analysis of
Fine \cite{fine}, violation by a bipartite state $\rho$ of an arbitrary $%
2\times 2$-setting Bell-type inequality (either on correlation functions or
on joint probabilities) for two settings and two outcomes per site cannot
exceed $\sqrt{2}$ -- in our notation $\mathrm{\Upsilon}_{2\times2}^{(\rho
,\{\lambda_{1},\lambda_{2}\}^{2})}\leq\sqrt{2}$. The maximal violation by
the two-qubit singlet of all \emph{correlation} Bell-type inequalities is
given \cite{acin} by the Grothendieck's constant $K_{G}(3)$ of order 3 and
it is known \cite{Finch, krivine} that $\sqrt{2}\leq K_{G}(3)\leq1.5164.$

Consider also the maximal Bell violations for the N-qudit Greenberger- Horne
- Zeilinger (GHZ) state 
\begin{equation}
\psi _{d}=\frac{1}{\sqrt{d}}\sum_{j=1}^{d}|j\rangle ^{\otimes N}\in \left( 
\mathbb{C}^{d}\right) ^{\otimes N}  \label{74}
\end{equation}%
and the generalized $N$-qubit GHZ state 
\begin{equation}
\psi _{2}^{(gen)}=\sin \varphi \text{ }|1\rangle ^{\otimes N}+\cos \varphi 
\text{ }|2\rangle ^{\otimes N}\in \left( \mathbb{C}^{2}\right) ^{\otimes N},
\label{75}
\end{equation}%
where $|j\rangle ,$ $j=1,...,d,$ are mutually orthogonal unit vectors in $%
\mathbb{C}^{d}.$ For each of these states, the trace norm of the source
operator (\ref{A20}) is upper bounded by%
\begin{align}
\left\Vert \widetilde{\tau }_{_{1\times S_{2}\times \cdots \times
S_{N}}}^{(\rho _{d})}\right\Vert _{1}& \leq 1+2^{N-1}(d-1),  \label{75'} \\
\left\Vert \widetilde{\tau }_{_{1\times S_{2}\times \cdots \times
S_{N}}}^{(\rho _{2}^{(gen)})}\right\Vert _{1}& \leq 1+2^{N-1}\left\vert \sin
\varphi \cos \varphi \right\vert ,  \notag
\end{align}%
and, in view of relations (\ref{75'}), (\ref{A9}), the analytical upper
bound (\ref{67}) implies 
\begin{align}
\mathrm{\Upsilon }_{\underset{N}{\underbrace{S\times \cdots \times S}}%
}^{(\rho _{d})}& \leq \min \{(2S-1)^{N-1},\text{ }1+2^{N-1}(d-1)\}
\label{78} \\
& \leq 1+2^{N-1}\left[ \min \{S^{N-1},d\}-1\right]  \notag
\end{align}%
and 
\begin{align}
\mathrm{\Upsilon }_{\underset{N}{\underbrace{S\times \cdots \times S}}%
}^{(\rho _{2}^{(gen)})}& \leq \min \left\{ (2S-1)^{N-1},\text{ }%
1+2^{N-1}\left\vert \sin 2\varphi \right\vert \right\}  \label{78_} \\
& \leq 1+2^{N-1}\text{ }\left\vert \sin 2\varphi \right\vert .  \notag
\end{align}%
The second lines in (\ref{78}), (\ref{78_}) are due to relation $%
(2S-1)^{N-1}\leq $ $2^{N-1}(S^{N-1}-1)+1$ that can be easily proved by
induction.

For an arbitrary $N$-partite state $\rho,$ the general analytical upper
bound (\ref{67}) implies the following new numerical upper estimate.

\begin{theorem}
For every state $\rho $ on a Hilbert space $\mathcal{H}_{1}\otimes \cdots
\otimes \mathcal{H}_{N}$ and arbitrary positive integers $%
S_{1},...,S_{N}\geq 1,$ the maximal $S_{1}\times \cdots \times S_{N}$%
-setting Bell violation satisfies relation%
\begin{align}
\mathrm{\Upsilon }_{S_{1}\times \cdots \times S_{N}}^{(\rho )}& \leq \min
\{\xi _{N},\theta _{N},\}  \label{79} \\
& \leq 1+2^{N-1}\left[ \min \left\{ \frac{d_{1}\cdot \ldots \cdot d_{N}}{%
\max_{n}d_{n}},\text{ \ }\frac{S_{1}\cdot \ldots \cdot S_{N}}{\max_{n}S_{n}}%
\right\} -1\right] ,  \notag
\end{align}%
where $d_{n}=\dim \mathcal{H}_{n},$ $n=1,...,N,$ and\ 
\begin{align}
\xi _{N}& =1+2^{N-1}\left( \frac{d_{1}\cdot \ldots \cdot d_{N}}{\max_{n}d_{n}%
}-1\right) ,  \label{80} \\
\theta _{N}& =(-1)^{N-1}+\min_{\substack{ \{n_{1}.\ldots ,n_{N-1}\}  \\ %
\subset \{1,...,N\}}}\sum_{k=0}^{N-2}(-1)^{k}\text{ }2^{N-1-k}\sum 
_{\substack{ n_{j_{1}}\neq \ldots \neq n_{j_{N-1-k}},  \\ n_{j}\in
\{n_{1},\ldots ,n_{N-1}\}}}S_{n_{j_{_{1}}}}\cdot \ldots \cdot
S_{n_{_{j_{_{N-1-k}}}}}.  \notag
\end{align}%
If, in particular, $d_{n}=d,$ $S_{n}=S,$ $\forall n,$ then%
\begin{align}
\mathrm{\Upsilon }_{\underset{N}{\underbrace{S\times \ldots \times S}}%
}^{(\rho )}& \leq \min \left\{ (2S-1)^{N-1},\text{ }2^{N-1}(d^{N-1}-1)+1%
\right\}  \label{81} \\
&  \notag \\
& \leq 1+2^{N-1}\left[ \left( \min \{S,d\}\right) ^{N-1}-1\right] .  \notag
\end{align}
\end{theorem}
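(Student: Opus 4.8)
The plan is to feed two explicit families of single-setting source operators into the analytical upper bound (\ref{67}) and then rewrite the resulting combinatorial expressions in the clean form (\ref{79}), (\ref{81}). Recall first that, by the bound (\ref{67}) (equivalently by Proposition 4), for every $n=1,\dots ,N$ and every source operator $T_{S_{1}\times\cdots\times\underset{\overset{\uparrow}{n}}{1}\times\cdots\times S_{N}}^{(\rho)}$ for $\rho$ carrying only one setting at the $n$-th site one has $\mathrm{\Upsilon}_{S_{1}\times\cdots\times S_{N}}^{(\rho)}\leq\left\Vert T_{S_{1}\times\cdots\times\underset{\overset{\uparrow}{n}}{1}\times\cdots\times S_{N}}^{(\rho)}\right\Vert_{cov}\leq\left\Vert T_{S_{1}\times\cdots\times\underset{\overset{\uparrow}{n}}{1}\times\cdots\times S_{N}}^{(\rho)}\right\Vert_{1}$. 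So it suffices to exhibit, for one conveniently chosen site $n$, a single-setting source operator whose trace norm is at most $\xi_{N}$, and separately one whose trace norm is at most $\theta_{N}$, and then to take the smaller of the two conclusions.

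For the \emph{settings} estimate I would, for each fixed $n$, use the source operator already constructed for the $N$-partite GHZ states --- the operator entering (\ref{A20}), which obeys the bound (\ref{A9}) --- since it has one setting at the $n$-th site, $S_{m}$ settings at every other site, and trace norm at most $\prod_{m\neq n}(2S_{m}-1)$. A direct binomial expansion of this product shows that $\min_{n}\prod_{m\neq n}(2S_{m}-1)$ is exactly the quantity $\theta_{N}$ of (\ref{80}); minimising over the excluded site therefore yields $\mathrm{\Upsilon}_{S_{1}\times\cdots\times S_{N}}^{(\rho)}\leq\theta_{N}$.

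For the \emph{dimension} estimate I would fix the site $n_{0}$ with $d_{n_{0}}=\max_{n}d_{n}$ and build a source operator $T_{S_{1}\times\cdots\times\underset{\overset{\uparrow}{n_{0}}}{1}\times\cdots\times S_{N}}^{(\rho)}$ by the $N$-partite analog of the bipartite dilation behind (\ref{A23}) --- intuitively, a construction in which the negativity is introduced copy by copy, in the spirit of the swap operator $V$ of Section 2 (for which $\left\Vert V\right\Vert_{cov}=d$) --- arranged so that its negative part has trace at most $2^{N-2}\bigl(\tfrac{d_{1}\cdots d_{N}}{\max_{n}d_{n}}-1\bigr)$. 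By (\ref{4}) its trace norm is then at most $\xi_{N}=1+2^{N-1}\bigl(\tfrac{d_{1}\cdots d_{N}}{\max_{n}d_{n}}-1\bigr)$, and combining with the previous paragraph gives $\mathrm{\Upsilon}_{S_{1}\times\cdots\times S_{N}}^{(\rho)}\leq\min\{\xi_{N},\theta_{N}\}$, which is the first line of (\ref{79}).

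To reach the looser but cleaner bounds I would prove by induction on $k$ the elementary inequality $\prod_{i=1}^{k}(2a_{i}-1)\leq 1+2^{k}\bigl(\prod_{i=1}^{k}a_{i}-1\bigr)$ for integers $a_{i}\geq 1$ (the equal-argument case being the identity $(2S-1)^{N-1}\leq 2^{N-1}(S^{N-1}-1)+1$ already used in the text). Applying it with $k=N-1$ to $\theta_{N}=\min_{n}\prod_{m\neq n}(2S_{m}-1)$, with the excluded site taken to be one realising $\max_{n}S_{n}$, bounds $\theta_{N}$ by $1+2^{N-1}\bigl(\tfrac{S_{1}\cdots S_{N}}{\max_{n}S_{n}}-1\bigr)$, while $\xi_{N}$ already has the required shape; the minimum of the two is the second line of (\ref{79}), and the specialisation $d_{n}=d$, $S_{n}=S$ for all $n$ --- where $\xi_{N}=1+2^{N-1}(d^{N-1}-1)$ and $\theta_{N}=(2S-1)^{N-1}$ --- gives (\ref{81}). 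The only genuinely non-routine ingredient is the construction used in the dimension step: one has to verify that its reduced operators reproduce $\rho$ and that its negative part has the claimed trace; the settings step merely recycles the known $N$-partite GHZ source operator, and everything downstream of the two norm bounds is binomial bookkeeping together with the one-line induction above.
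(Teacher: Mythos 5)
Your overall route is the paper's own: plug explicit single-setting source operators into the analytical bound (\ref{67}) and reduce the rest to binomial bookkeeping. Your settings half is sound in substance: the $\tau$-type construction of appendix A with one setting at the $n$-th site has trace norm bounded by the expanded product in (\ref{A9}), which is exactly $\prod_{m\neq n}(2S_{m}-1)$, and minimising over the excluded site gives $\theta_{N}$; your auxiliary inequality $\prod_{i=1}^{k}(2a_{i}-1)\leq 1+2^{k}\bigl(\prod_{i=1}^{k}a_{i}-1\bigr)$ (needed for the second line of (\ref{79}) with unequal $S_{n}$, and only stated in the equal-argument form in the paper) is correct and its induction is routine. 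However, you have mixed up the two families of appendix-A operators: the operator (\ref{A20}) is the $\widetilde{\tau}$ operator, defined for an \emph{arbitrary} state via decomposition (\ref{A1}) (not specifically for GHZ states, which merely specialise it in (\ref{75'})), and it obeys the \emph{dimension} bound (\ref{A21}), not (\ref{A9}); the operator obeying (\ref{A9}) is the $\tau$ family of (\ref{A2}), (\ref{A4}), (\ref{A6}) built by symmetrisation with auxiliary states $\sigma_{n}$. The settings step survives this relabelling because only the bound (\ref{A9}) is actually used there.

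The genuine gap is in your dimension step: you do not construct the operator you need. You posit a ``swap-like, copy-by-copy'' single-setting source operator whose negative part has trace at most $2^{N-2}\bigl(\tfrac{d_{1}\cdots d_{N}}{\max_{n}d_{n}}-1\bigr)$, and you yourself flag that verifying it is a source operator for $\rho$ with that negativity is the non-routine ingredient; as written, the $\xi_{N}$ half of (\ref{79}) is therefore asserted rather than proved. The missing object is precisely the operator you misfiled under the settings step: choosing the single-setting site to be one of maximal dimension, $\widetilde{\tau}$ of (\ref{A20}), built from the blocks $W_{jj_{1}}^{(n,S_{n})}$ of (\ref{A12}), satisfies (\ref{A21}), i.e.\ $\Vert\widetilde{\tau}\Vert_{1}\leq 1+2^{N-1}\bigl(\tfrac{d_{1}\cdots d_{N}}{\max_{n}d_{n}}-1\bigr)=\xi_{N}$, and feeding both families into (\ref{67}) gives $\Upsilon_{S_{1}\times\cdots\times S_{N}}^{(\rho)}\leq\min\{\xi_{N},\theta_{N}\}$ --- which is exactly the paper's argument. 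With that substitution your proof closes and coincides with the published one; without it, the dimension bound rests on an unverified construction.
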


\begin{proof}
From bound (\ref{67}) it follows%
\begin{align}
\mathrm{\Upsilon}_{S_{1}\times\cdots\times S_{N}}^{(\rho)} & \leq
\inf_{T_{_{S_{1}\times\cdots\times\underset{\overset{\uparrow}{n}}{1}%
\times\cdots\times S_{N}}}^{(\rho)},\forall n}\text{ }||T_{_{S_{1}\times
\cdots\times\underset{\overset{\uparrow}{n}}{1}\times\cdots\times
S_{N}}}^{(\rho)}\text{ }||_{cov}  \label{82} \\
& \leq\inf_{T_{_{S_{1}\times\cdots\times\underset{\overset{\uparrow}{n}}{1}%
\times\cdots\times S_{N}}}^{(\rho)},\forall n}\text{ }||T_{_{S_{1}\times%
\cdots\times\underset{\overset{\uparrow}{n}}{1}\times\cdots\times
S_{N}}}^{(\rho)}\text{ }||_{1}  \notag \\
& \leq\min\{||\tau_{_{S_{1}\times\cdots\times\underset{\overset{\uparrow}{n}}%
{1}\times\cdots\times S_{N}}}^{(\rho)}\text{ }||_{cov},\text{ }||\widetilde{%
\tau}_{_{S_{1}\times\cdots\times\underset{\overset{\uparrow}{n}}{1}%
\times\cdots\times S_{N}}}^{(\rho)}\text{ }||_{cov},\text{ \ }n=1,...,N\}, 
\notag
\end{align}
where $\tau_{_{S_{1}\times\cdots\times\underset{\overset{\uparrow}{n}}{1}%
\times\cdots\times S_{N}}}^{(\rho)}$ and $\widetilde{\tau}%
_{_{S_{1}\times\cdots\times\underset{\overset{\uparrow}{n}}{1}%
\times\cdots\times S_{N}}}^{(\rho)}$ are the specific source operators
constructed for an arbitrary $N$-partite state $\rho$ in appendix A. Taking
into the account the upper bounds (\ref{A9}), (\ref{A21}) for these source
operators and also relation $(2S-1)^{N-1}\leq2^{N-1}S^{N-1}-2^{N-1}+1,$ we
come to relation (\ref{79}), implying, in turn, (\ref{81}).
\end{proof}

Note that estimate (\ref{79}) implies that $\mathrm{\Upsilon}_{1\times
\cdots\times1\times S_{n}\times1\times\cdots\times1}^{(\rho)}=1$ for every $%
N $-partite state $\rho.$ In view of statement (b) of proposition 3, this
result agrees with statement (a) of proposition 5.

Theorem 4 implies.

\begin{corollary}
(a) For an arbitrary $N$-partite quantum state, violation of a Bell-type
inequality (either on correlation functions or on joint probabilities) for $%
S $ settings per site cannot exceed $(2S-1)^{N-1}$ even in case of an
infinite dimensional state and infinitely many outcomes. \newline
(b) For an arbitrary state $\rho$ on $(\mathbb{C}^{d})^{\otimes N}$,
violation of a Bell-type inequality (either on correlation functions or on
joint probabilities) is upper bounded by $2^{N-1}(d^{N-1}-1)+1$
independently on a number of settings and a number of outcomes at each site.
\end{corollary}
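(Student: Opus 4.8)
The plan is to obtain both statements as immediate consequences of Theorem 4 combined with the identification of $\mathrm{\Upsilon}_{S_{1}\times\cdots\times S_{N}}^{(\rho)}$ as a maximal Bell violation. Recall that, by relation (\ref{t}) (derived from Lemma 3, with Proposition 6 making the link to Bell-type inequalities explicit), $\mathrm{\Upsilon}_{S_{1}\times\cdots\times S_{N}}^{(\rho)}$ equals the supremum of $|\mathcal{B}_{\Psi_{S,\Lambda}}^{-1}\dsum_{s_{1},\ldots,s_{N}}\langle\psi_{(s_{1},\ldots,s_{N})}\rangle_{\mathcal{E}_{\rho,\mathrm{M}_{S,\Lambda}}}|$ taken over all outcome sets $\Lambda$ (of arbitrary spectral type, discrete or continuous, hence in particular with infinitely many outcomes), all families $\mathrm{M}_{S,\Lambda}$ of POV measures, and all non-trivial function families $\Psi_{S,\Lambda}$ specifying Bell-type inequalities with settings $L_{1}\times\cdots\times L_{N}$, $L_{n}\leq S_{n}$. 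Thus any upper bound on $\mathrm{\Upsilon}_{S_{1}\times\cdots\times S_{N}}^{(\rho)}$ is literally an upper bound on the factor by which $\rho$ can violate any such Bell-type inequality.

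For statement (a) I would set $S_{1}=\cdots=S_{N}=S$ and invoke the first inequality in (\ref{81}), which gives $\mathrm{\Upsilon}_{S\times\cdots\times S}^{(\rho)}\leq(2S-1)^{N-1}$. The point to stress is that this particular term is precisely $\theta_{N}$ of (\ref{80}) evaluated at $S_{n}=S$, and $\theta_{N}$ depends only on the numbers $S_{1},\ldots,S_{N}$ of settings, not on the dimensions $d_{n}=\dim\mathcal{H}_{n}$; hence the bound $(2S-1)^{N-1}$ persists when some or all $\mathcal{H}_{n}$ are infinite dimensional (formally, the competing term $2^{N-1}(d^{N-1}-1)+1$ in (\ref{81}) tends to $\infty$ as $d\to\infty$, so the minimum collapses to $(2S-1)^{N-1}$). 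Combining this with the interpretation of $\mathrm{\Upsilon}$ via (\ref{t}), whose supremum already ranges over continuous outcome spaces, yields (a) in full, including the "infinite dimensional state and infinitely many outcomes" clause.

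For statement (b) I would instead fix $d_{n}=d$ for all $n$ while keeping $S_{1},\ldots,S_{N}$ arbitrary, and use the bound $\mathrm{\Upsilon}_{S_{1}\times\cdots\times S_{N}}^{(\rho)}\leq\xi_{N}$ contained in (\ref{79}). For $d_{n}=d$ one has $d_{1}\cdots d_{N}/\max_{n}d_{n}=d^{N-1}$, so $\xi_{N}=2^{N-1}(d^{N-1}-1)+1$, and, crucially, this quantity does not involve $S_{1},\ldots,S_{N}$. Any concrete Bell-type inequality for $\rho$ uses finitely many settings, say $S'_{n}$ at the $n$-th site; taking $S_{n}=S'_{n}$ in the preceding bound shows its violation by $\rho$ is at most $2^{N-1}(d^{N-1}-1)+1$, uniformly in the numbers of settings and of outcomes. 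This establishes (b).

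Since both parts reduce to reading off Theorem 4, I do not expect a genuine obstacle. The one item requiring care is the bookkeeping of quantifiers: confirming that the dimension-free branch of the estimate (\ref{81}) is exactly the $(2S-1)^{N-1}$ term — i.e. that the inner sum in (\ref{80}) is over unordered index subsets, so that $\theta_{N}|_{S_{n}=S}=(2S-1)^{N-1}$ as used in the second line of (\ref{81}) — and that relation (\ref{t}) indeed encompasses Bell-type inequalities with arbitrarily many (including continuously many) outcomes, so that the phrases "infinitely many outcomes" and "for $S$ settings per site" in the statement are matched precisely by the corresponding instances of $\mathrm{\Upsilon}_{S_{1}\times\cdots\times S_{N}}^{(\rho)}$.
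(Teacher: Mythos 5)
Your proposal is correct and follows exactly the paper's route: the corollary is stated as an immediate consequence of Theorem 4, with part (a) coming from the dimension-independent term $(2S-1)^{N-1}$ (i.e. $\theta_N$ at $S_n=S$, valid also for infinite-dimensional $\mathcal{H}_n$) and part (b) from the setting-independent term $\xi_N=2^{N-1}(d^{N-1}-1)+1$, together with the identification of $\mathrm{\Upsilon}_{S_1\times\cdots\times S_N}^{(\rho)}$ via relation (\ref{t}) as the maximal Bell violation over arbitrary outcome sets. Your quantifier bookkeeping (including reading the inner sum in (\ref{80}) over unordered index subsets so that $\theta_N|_{S_n=S}=(2S-1)^{N-1}$) matches the paper.
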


Let us now specify the general $N$-partite upper estimate (\ref{80}) for $%
N=2,3.$

\begin{corollary}
For every bipartite state $\rho$ and arbitrary positive integers $%
S_{1},S_{2}\geq1,$ 
\begin{equation}
\Upsilon_{S_{1}\times S_{2}}^{(\rho)}\leq2\min\left\{
S_{1},S_{2},d_{1},d_{2}\right\} -1.  \label{72}
\end{equation}
If, in particular, $d_{1}=d_{2}=d,$ $S_{1}=S_{2}=S,$ then 
\begin{equation}
\Upsilon_{S\times S}^{(\rho)}\leq2\min\{S,d\}-1.  \label{72'}
\end{equation}
\medskip For every tripartite state $\rho$ and arbitrary positive integers $%
S_{1},S_{2},S_{3}\geq1,$%
\begin{align}
\Upsilon_{S_{1}\times S_{2}\times S_{3}}^{(\rho)} & \leq\min\text{ }\left\{
\min_{_{\{n_{1},n_{2}\}\subset\{1,2,3%
\}}}(4S_{n_{1}}S_{n_{2}}-2(S_{n_{1}}+S_{n_{2}})+1),\text{ \ }4\frac{%
d_{1}d_{2}d_{3}}{\max_{n}d_{n}}-3\right\}  \label{85'} \\
& \leq4\min\left\{ \frac{S_{1}S_{2}S_{3}}{\max_{n}S_{n}},\text{ }\frac {%
d_{1}d_{2}d_{3}}{\max_{n}d_{n}}\right\} -3.  \notag
\end{align}
If, in particular, $d_{n}=d,$ $S_{n}=S,$ $\forall n,$ then 
\begin{align}
\Upsilon_{S\times S\times S}^{(\rho)} & \leq\min\left\{ (2S-1)^{2},\text{ }%
4d^{2}-3\right\}  \label{85} \\
& \leq4\left( \min\{S,d\}\right) ^{2}-3.  \notag
\end{align}
\end{corollary}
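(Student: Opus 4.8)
The plan is to obtain Corollary 5 as a direct specialization of the general $N$-partite estimate $\mathrm{\Upsilon}_{S_{1}\times\cdots\times S_{N}}^{(\rho)}\leq\min\{\xi_{N},\theta_{N}\}$ of Theorem 4, simply evaluating $\xi_{N}$ and $\theta_{N}$ explicitly for $N=2$ and $N=3$. First, for $N=2$ I would write out $\xi_{2}=1+2(\frac{d_{1}d_{2}}{\max_{n}d_{n}}-1)=2\min\{d_{1},d_{2}\}-1$, using that $\frac{d_{1}d_{2}}{\max\{d_{1},d_{2}\}}=\min\{d_{1},d_{2}\}$. For $\theta_{2}$, the sum in (\ref{80}) has only the $k=0$ term, running over the choice of a single index $n_{1}\in\{1,2\}$, so $\theta_{2}=(-1)^{1}+\min_{\{n_{1}\}\subset\{1,2\}}2\,S_{n_{1}}=-1+2\min\{S_{1},S_{2}\}$. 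Taking the minimum of $\xi_{2}$ and $\theta_{2}$ gives $\Upsilon_{S_{1}\times S_{2}}^{(\rho)}\leq 2\min\{S_{1},S_{2},d_{1},d_{2}\}-1$, which is (\ref{72}); setting $d_{1}=d_{2}=d$, $S_{1}=S_{2}=S$ yields (\ref{72'}) at once.

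For $N=3$ I would similarly expand $\xi_{3}=1+4(\frac{d_{1}d_{2}d_{3}}{\max_{n}d_{n}}-1)=4\frac{d_{1}d_{2}d_{3}}{\max_{n}d_{n}}-3$, which is the second entry inside the outer minimum of (\ref{85'}). For $\theta_{3}$, the leading factor is $(-1)^{N-1}=(-1)^{2}=1$, and the inner minimum runs over the choice of a two-element subset $\{n_{1},n_{2}\}\subset\{1,2,3\}$; the double sum over $k$ has the $k=0$ term $2^{2}\sum_{n_{j_{1}}\neq n_{j_{2}}}S_{n_{j_{1}}}S_{n_{j_{2}}}=4\cdot 2\,S_{n_{1}}S_{n_{2}}$ — wait, here care is needed with whether the sum over ordered or unordered pairs is intended; I expect the convention to produce exactly $4S_{n_{1}}S_{n_{2}}$ — and the $k=1$ term $-2^{1}\sum S_{n_{j}}=-2(S_{n_{1}}+S_{n_{2}})$, so $\theta_{3}=1+\min_{\{n_{1},n_{2}\}}(4S_{n_{1}}S_{n_{2}}-2(S_{n_{1}}+S_{n_{2}}))=\min_{\{n_{1},n_{2}\}}(4S_{n_{1}}S_{n_{2}}-2(S_{n_{1}}+S_{n_{2}})+1)$. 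Combining with $\xi_{3}$ inside $\min\{\xi_{3},\theta_{3}\}$ reproduces the first line of (\ref{85'}).

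For the second lines of (\ref{85'}) and (\ref{85}), I would then invoke the elementary bound $4ab-2(a+b)+1\leq 4ab$ (equivalently $(2a-1)(2b-1)\geq 0$, true for positive integers $a,b\geq 1$) applied with $a=S_{n_{1}}$, $b=S_{n_{2}}$, together with $\min_{\{n_{1},n_{2}\}}S_{n_{1}}S_{n_{2}}=\frac{S_{1}S_{2}S_{3}}{\max_{n}S_{n}}$, to pass from the exact expression to the cleaner upper bound $4\min\{\frac{S_{1}S_{2}S_{3}}{\max_{n}S_{n}},\frac{d_{1}d_{2}d_{3}}{\max_{n}d_{n}}\}-3$. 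Specializing $d_{n}=d$, $S_{n}=S$ gives $\theta_{3}=(2S-1)^{2}$ and $\xi_{3}=4d^{2}-3$, hence (\ref{85}), and the final line follows since $(2S-1)^{2}\leq 4S^{2}$ and $4d^{2}-3\leq 4d^{2}$ combine to $4(\min\{S,d\})^{2}-3$. The one genuinely fiddly point — the main "obstacle," though it is really just bookkeeping — is getting the combinatorial conventions in the definition (\ref{80}) of $\theta_{N}$ right: I must confirm whether $\sum_{n_{j_{1}}\neq\cdots\neq n_{j_{N-1-k}}}$ counts each unordered tuple once or with multiplicity, and track the alternating signs and the powers of $2$ carefully, so that the $N=2$ and $N=3$ specializations land exactly on the published formulas rather than on expressions off by a combinatorial factor.
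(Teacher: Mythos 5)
Your proposal is correct and coincides with the paper's own route: the corollary is stated as a direct specialization of theorem 4 (Eqs. (\ref{79}), (\ref{80})) to $N=2,3$, and your resolution of the combinatorial convention in (\ref{80}) (each unordered tuple counted once) is confirmed by the appendix bounds (\ref{A5}), (\ref{A7}), (\ref{A9}), which for $N=3$ give exactly $4S_{n_{1}}S_{n_{2}}-2(S_{n_{1}}+S_{n_{2}})+1$. One small correction to your bookkeeping: to reach the "$-3$" in the second lines of (\ref{85'}) and (\ref{85}) you need $4ab-2(a+b)+1\leq 4ab-3$, i.e.\ $a+b\geq 2$ (and $(2S-1)^{2}\leq 4S^{2}-3$, i.e.\ $4S\geq 4$), rather than the weaker inequalities $4ab-2(a+b)+1\leq 4ab$ and $(2S-1)^{2}\leq 4S^{2}$ you cite; these stronger forms are immediate for integers $a,b,S\geq 1$, so the argument goes through unchanged.
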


We stress that, in contrast to the bipartite and tripartite numerical
estimates found in Refs. \cite{junge, junge1, junge2, carlos} up to unknown
universal constants, our bipartite and tripartite numerical upper estimates (%
\ref{72}) - (\ref{85}) are exact.

\subsection{Discussion}

For bipartite and tripartite correlation scenarios with a finite number of
outcomes at each site, the numerical estimates on the maximal Bell
violations have been recently presented in Refs. \cite{junge, kaplan,
junge1, junge2, carlos}. The results of corollary 3 indicate.

\begin{itemize}
\item Our exact bipartite upper estimate (\ref{72}) improves the approximate
bipartite estimate $\preceq\min\{d,S\}$\ found in Ref. \cite{carlos}
(theorem 6.8) up to an unknown universal constant.

\item The bipartite upper estimates (in our notation)%
\begin{align}
\Upsilon_{S_{1}\times S_{2}}^{(\rho,\text{ }\Lambda)} & \leq2K_{G}+1,\text{
\ \ \ \ \ \ \ \ \ \ \ \ \ \ \ \ \ \ \ \ \ \ if \ }\left\vert \mathcal{A}%
\right\vert =\left\vert \mathcal{B}\right\vert =2, \\
\Upsilon_{S_{1}\times S_{2}}^{(\rho,\text{ }\Lambda)} & \leq2\left\vert 
\mathcal{A}\right\vert \left\vert \mathcal{B}\right\vert (K_{G}+1)-1,\text{
\ \ \ }\forall\left\vert \mathcal{A}\right\vert ,\left\vert \mathcal{B}%
\right\vert ,  \notag
\end{align}
derived in theorem 22 of Ref. \cite{kaplan} for arbitrary $S_{1},S_{2}\geq1$
and numbers $\left\vert \mathcal{A}\right\vert ,$ $\left\vert \mathcal{B}%
\right\vert $ of outcomes at Alice's and Bob's sites, are improved by our
bipartite upper estimate (\ref{72}) if%
\begin{align}
\min\{d_{1},d_{2}\} & <K_{G}+1, \\
\min\{d_{1},d_{2}\} & <\left\vert \mathcal{A}\right\vert \left\vert \mathcal{%
B}\right\vert (K_{G}+1),  \notag
\end{align}
respectively. Here, $K_{G}$ is the Grothendieck constant, and it is known 
\cite{Finch, krivine} that $K_{G}\in\lbrack1.676,1.783].$

\item From our exact tripartite upper estimates (\ref{85'}), (\ref{85}) it
follows that \emph{violation by a tripartite quantum state of a Bell-type
inequality for }$S$\emph{\ settings per site cannot exceed }$(2S-1)^{2}.$
Therefore, the tripartite lower estimate $\succeq\sqrt{d},$ found in theorem
1 of Ref. \cite{junge} for violation of some correlation Bell-type
inequality by some tripartite state on $\mathbb{C}^{d}\otimes\mathbb{C}%
^{D}\otimes \mathbb{C}^{D}$, is meaningful if only a number $S$ of settings
per site needed for such a violation in the corresponding Bell-type
inequality obeys relation%
\begin{equation}
(2S-1)^{2}\succeq\sqrt{d}.  \label{91}
\end{equation}
Thus, for an arbitrarily large tripartite violation argued in Ref. \cite%
{junge} to be reached, not only a Hilbert space dimension $d$ but also a
number $S$ of settings per site in the corresponding tripartite Bell-type
inequality must be large and the required growth of $S$ with respect to $d$
is given by (\ref{91}).
\end{itemize}

\section{Conclusions}

In the present paper, for the probabilistic description of a general
correlation scenario, we have introduced (\emph{definition 5}) a new
simulation model, \emph{a} \emph{local quasi hidden variable (LqHV) model, }%
where locality and the measure-theoretic structure inherent to an LHV model
are preserved but positivity of a simulation measure is dropped.

We have specified (\emph{theorem 1}) a necessary and sufficient condition
for LqHV modelling and, based on this, proved (\emph{theorem 2}) that every
quantum correlation scenario admits an LqHV simulation.

Via the LqHV approach, we have constructed analogs (Eq. (\ref{52'})) of
Bell-type inequalities for an $N$-partite quantum state and found (\emph{%
theorem 3}) a new analytical upper bound on the maximal violation by an $N$%
-partite state of all $S_{1}\times\cdots\times S_{N}$-setting Bell-type
inequalities -- either on correlation functions or on joint probabilities
and for outcomes of an arbitrary spectral type, discrete or continuous.

This analytical upper bound is based on the new state dilation
characteristics (\emph{definitions 1-4}) introduced in the present paper and
this allows us:

\begin{itemize}
\item to trace (\emph{propositions 5, 6}) $N$-partite quantum states
admitting an $S_{1}\times\cdots\times S_{N}$- setting LHV description;

\item to find the exact numerical upper estimates (Eqs. (\ref{sing}), (\ref%
{78}), (\ref{78_})) on the maximal $S_{1}\times\cdots\times S_{N}$-setting
Bell violations for some concrete $N$-partite quantum states used in quantum
information processing;

\item to find (\emph{theorem 4}) the exact numerical upper estimate on the
maximal $S_{1}\times\cdots\times S_{N}$- setting Bell violation for an
arbitrary $N$-partite quantum state, in particular, to show (\emph{corollary
2}) that violation by an $N$-partite quantum state of a Bell-type inequality
(either on correlation functions or on joint probabilities) for $S$ settings
per site is upper bounded by $(2S-1)^{N-1}$ even in case of an infinite
dimensional quantum state and infinitely many outcomes.
\end{itemize}

Specified (\emph{corollary 3}) for $N=2,3$, our exact $N$-partite numerical
upper estimate (Eq. (\ref{79})) improves the bipartite numerical upper
estimates in Refs. \cite{kaplan, carlos} and clarifies the range of
applicability of the approximate tripartite numerical lower estimate in Ref. 
\cite{junge}.

\begin{acknowledgement}
I am grateful to the organizers of the Operator Structures in Quantum
Information Workshop, held at the Fields Institute in Toronto on July 6 -10,
2009, where part of this paper covering a bipartite case was presented. I am
also thankful to Professor A. M. Chebotarev for the useful discussion.
\end{acknowledgement}

\appendix

\section{proofs for section 2}

In this appendix, we prove proposition 1 on the existence of source
operators for an $N$-partite quantum state and introduce some source
operators which are different from those constructed in the proof of
proposition 1 and are needed for our consideration in section 2 of appendix
B. We also prove lemma1 on the properties of the covering norm.

\subsection{Proof of proposition 1}

For a state $\rho$ on a Hilbert space $\mathcal{H}_{1}\otimes\cdots \otimes%
\mathcal{H}_{N},$ consider its decomposition 
\begin{align}
\rho & =\sum\eta_{mm_{1},...,kk_{1}}|e_{m}^{(1)}\rangle\langle
e_{m_{1}}^{(1)}|\otimes\cdots\otimes|e_{k}^{(N)}\rangle\langle
e_{k_{1}}^{(N)}|,  \tag{A1}  \label{A1} \\
\eta_{mm_{1},...,kk_{1}} &
=\sum_{i}\alpha_{i}\varsigma_{m...k}^{(i)}\varsigma_{m_{1}...k_{1}}^{%
\ast(i)},\text{ \ \ \ }\alpha_{i}>0,\text{ \ }\sum_{i}\alpha_{i}=1,\text{\ }%
\sum_{m,...,k}\left\vert \varsigma _{m...k}^{(i)}\right\vert ^{2}=1,  \notag
\end{align}
in orthonormal bases $\{e_{m}^{(n)}\in\mathcal{H}_{n}\},$ $n=1,...,N.$

Let $N=2.$ For a bipartite state $\rho,$ denote by $\rho_{n}$ the reduced
state on $\mathcal{H}_{n},$ $n=1,2,$ and introduce on a Hilbert space $%
\mathcal{H}_{1}^{\otimes S_{1}}\otimes\mathcal{H}_{2}^{\otimes S_{2}}$ the
self-adjoint operator 
\begin{align}
\tau_{_{S_{1}\times S_{2}}}^{(\rho)} & =\sum\eta_{mm_{1},kk_{1}}\left[
|e_{m}^{(1)}\rangle\langle e_{m_{1}}^{(1)}|\otimes\mathbb{\sigma}%
_{1}^{\otimes(S_{1}-1)}\right] _{\mathrm{sym}}\otimes\left[
|e_{k}^{(2)}\rangle\langle e_{k_{1}}^{(2)}|\otimes\mathbb{\sigma}%
_{2}^{\otimes(S_{2}-1)}]\right] _{\mathrm{sym}}  \tag{A2}  \label{A2} \\
& -(S_{2}-1)\rho_{1}^{\otimes S_{1}}\otimes\mathbb{\sigma}_{2}^{\otimes
S_{2}}-(S_{1}-1)\mathbb{\sigma}_{1}^{\otimes S_{1}}\otimes\rho_{2}^{\otimes
S_{2}}  \notag \\
& -(S_{1}-1)(S_{2}-1)\mathbb{\sigma}_{1}^{\otimes S_{1}}\otimes \mathbb{%
\sigma}_{2}^{\otimes S_{2}},  \notag
\end{align}
where $\sigma_{n}\ $is a state on $\mathcal{H}_{n}$ and notation $%
[\cdot]_{sym}$ means symmetrization on $\mathcal{H}_{n}^{\otimes S_{n}}$.
For example, $[X_{1}\otimes X_{2}]_{sym}$ $:=X_{1}\otimes X_{2}+X_{2}\otimes
X_{1}.$ It is easy to verify that (\ref{A2}) represents an $S_{1}\times
S_{2} $-setting source operator for state (\ref{A1}) specified with $N=2$
and the trace norm of this source operator satisfies relation 
\begin{equation}
1\leq\left\Vert \tau_{_{S_{1}\times S_{2}}}^{(\rho)}\right\Vert _{1}\leq
2S_{1}S_{2}-1.  \tag{A3}  \label{A3}
\end{equation}

Let $N=3.$ For a tripartite state $\rho,$ in addition to the above notation $%
\rho_{n}$ for the reduced state on $\mathcal{H}_{n}$, denote by $\rho_{1,n}$
the reduced state on $\mathcal{H}_{1}\otimes\mathcal{H}_{n}.$ For short of
notation, we further take one of settings to be equal only to $1$, say $%
S_{1}=1.$ Introduce on a Hilbert space $\mathcal{H}_{1}\otimes\mathcal{H}%
_{2}^{\otimes S_{2}}\otimes\mathcal{H}_{3}^{\otimes S_{3}}$ the self-adjoint
operator%
\begin{align}
\tau_{_{1\times S_{2}\times S_{3}}}^{(\rho)} &
=\sum\eta_{mm_{1},jj_{1},kk_{1}}|e_{m}^{(1)}\rangle\langle
e_{m_{1}}^{(1)}|\otimes\left[ |e_{j}^{(2)}\rangle\langle
e_{j_{1}}^{(2)}|\otimes\mathbb{\sigma}_{2}^{\otimes(S_{2}-1)}\right] _{%
\mathrm{sym}}  \tag{A4}  \label{A4} \\
& \otimes\left[ |e_{k}^{(3)}\rangle\langle e_{k_{1}}^{(3)}|\otimes \mathbb{%
\sigma}_{3}^{\otimes(S_{3}-1)}\right] _{\mathrm{sym}}-(S_{3}-1)\tau_{_{1%
\times S_{2}}}^{(\rho_{1,2})}\otimes\mathbb{\sigma}_{3}^{\otimes S_{3}} 
\notag \\
& -(S_{2}-1)\overset{\tau_{_{1\times S_{3}}}^{(\rho_{1,3})}}{\overbrace {%
\checkmark\otimes\mathbb{\sigma}_{2}^{\otimes S_{2}}\otimes\checkmark}}%
-(S_{2}-1)(S_{3}-1)\rho_{1}^{\otimes S_{1}}\otimes\mathbb{\sigma}%
_{2}^{\otimes S_{2}}\otimes\mathbb{\sigma}_{3}^{\otimes S_{3}},  \notag
\end{align}
where: (i) $\tau_{_{1\times S_{n}}}^{(\rho_{1,n})}$ is the $1\times S_{n}$%
-source operator (\ref{A2}) specified for the reduced state $\rho_{1,n}$ on $%
\mathcal{H}_{1}\otimes\mathcal{H}_{n}$; (ii) notation in the third line
means operator derived by insertion of term $\otimes\mathbb{\sigma}%
_{2}^{\otimes S_{2}}\otimes$ into each term of the tensor product
decomposition of the source operator $\tau_{_{1\times S_{3}}}^{(\rho_{1,3})}$
for state $\rho _{1,3}$ on $\mathcal{H}_{1}\otimes\mathcal{H}_{3}$. It is
easy to verify that (\ref{A4}) represents an $1\times S_{2}\times S_{3}$%
-setting source operator for state (\ref{A1}) specified with $N=3$.
Substituting (\ref{A2}) into (\ref{A4}), taking into the account (\ref{4})
and evaluating the negative part of the self-adjoint operator (\ref{A4}), we
derive%
\begin{align}
\left\Vert \tau_{_{1\times S_{2}\times S_{3}}}^{(\rho)}\right\Vert _{1} &
\leq1+2(S_{2}-1)S_{3}+2(S_{3}-1)S_{2}  \tag{A5}  \label{A5} \\
& =4S_{2}S_{3}-2(S_{2}+S_{3})+1  \notag \\
& \leq4S_{2}S_{3}-3.  \notag
\end{align}

Let $N=4.$ For a quadripartite state $\rho,$ denote by $\rho_{n},$ $%
\rho_{n_{1},n_{2}},$ $\rho_{n_{1},n_{2},n_{3}}$ the reduced states on $%
\mathcal{H}_{n},$ $\mathcal{H}_{n_{1}}\otimes\mathcal{H}_{n_{2}}$ and $%
\mathcal{H}_{n_{1}}\otimes\mathcal{H}_{n_{2}}\otimes\mathcal{H}_{n_{3}},$
respectively and consider on $\mathcal{H}_{1}\otimes\mathcal{H}_{2}^{\otimes
S_{2}}\otimes\mathcal{H}_{3}^{\otimes S_{3}}\otimes\mathcal{H}_{4}^{\otimes
S_{4}}$ the self-adjoint operator 
\begin{align}
\tau_{_{1\times S_{2}\times S_{3}\times S_{4}}}^{(\rho)} & =\sum\eta
_{mm_{1},jj_{1},l,l_{1}kk_{1}}|e_{m}^{(1)}\rangle\langle
e_{m_{1}}^{(1)}|\otimes\left[ |e_{j}^{(2)}\rangle\langle
e_{j_{1}}^{(2)}|\otimes\mathbb{\sigma}_{2}^{\otimes(S_{2}-1)}]\right] _{%
\mathrm{sym}}  \notag \\
& \otimes\left[ \lbrack e_{l}^{(3)}\rangle\langle e_{l_{1}}^{(3)}|\otimes%
\mathbb{\sigma}_{3}^{\otimes(S_{3}-1)}\right] _{\mathrm{sym}}\otimes\left[
|e_{k}^{(4)}\rangle\langle e_{k_{1}}^{(4)}|\otimes \mathbb{\sigma}%
_{4}^{\otimes(S_{4}-1)}\right] _{\mathrm{sym}}\text{ }  \notag \\
& -\text{ }(S_{4}-1)\tau_{_{1\times S_{2}\times
S_{3}}}^{(\rho_{1,2,3})}\otimes\mathbb{\sigma}_{4}^{\otimes S_{4}}-(S_{2}-1)%
\overset{\tau_{_{1\times S_{3}\times S_{4}}}^{(\rho_{1,3,4})}}{\overbrace{%
\checkmark\otimes \mathbb{\sigma}_{2}^{\otimes
S_{2}}\otimes\checkmark\otimes\checkmark}}  \notag \\
& -(S_{3}-1)\overset{\tau_{_{1\times S_{2}\times S_{4}}}^{(\rho_{1,2,4})}}{%
\overbrace{\checkmark\otimes\checkmark\otimes\mathbb{\sigma}_{3}^{\otimes
S_{3}}\otimes\checkmark}}  \notag \\
& -(S_{3}-1)(S_{4}-1)\tau_{_{1\times S_{2}}}^{(\rho_{1,2})}\otimes \mathbb{%
\sigma}_{3}^{\otimes S_{3}}\otimes\mathbb{\sigma}_{4}^{\otimes S_{4}} 
\tag{A6}  \label{A6} \\
& -(S_{2}-1)(S_{3}-1)\overset{\tau_{_{1\times S_{4}}}^{(\rho_{1,4})}}{%
\overbrace{\checkmark\otimes\mathbb{\sigma}_{2}^{\otimes S_{2}}\otimes%
\mathbb{\sigma}_{3}^{\otimes S_{3}}\otimes\checkmark}}  \notag \\
& -(S_{2}-1)(S_{4}-1)\overset{\tau_{_{1\times S_{3}}}^{(\rho_{1,3})}}{%
\overbrace{\checkmark\otimes\mathbb{\sigma}_{2}^{\otimes
S_{2}}\otimes\checkmark}\otimes\mathbb{\sigma}_{4}^{\otimes S_{4}}}  \notag
\\
& -(S_{2}-1)(S_{3}-1)(S_{4}-1)\rho_{1}^{\otimes S_{1}}\otimes\mathbb{\sigma }%
_{2}^{\otimes S_{2}}\otimes\mathbb{\sigma}_{3}^{\otimes S_{3}}\otimes 
\mathbb{\sigma}_{4}^{\otimes S_{4}},  \notag
\end{align}
where $\tau_{_{1\times S_{n}}}^{(\rho_{1,n})}$ is the $1\times S_{n}$%
-setting source operator (\ref{A2}) specified for the reduced state $%
\rho_{1,n}$ and $\tau_{_{1\times S_{n_{1}}\times
S_{n_{2}}}}^{(\rho_{1,n_{1},n_{2}})}$ is the source operator (\ref{A5}) for
the reduced state $\rho_{1,n_{1},n_{2}}$. It is easy to verify that (\ref{A6}%
) represents an $1\times S_{2}\times S_{3}\times S_{4}$-setting source
operator for state (\ref{A1}) specified with $N=4$. Substituting (\ref{A2}),
(\ref{A4}) into (\ref{A6}) and evaluating the negative part of the
self-adjoint operator (\ref{A6}), we derive%
\begin{align}
\left\Vert \tau_{_{1\times S_{2}\times S_{3}\times
S_{4}}}^{(\rho)}\right\Vert _{1} &
\leq1+2(S_{2}-1)S_{3}S_{4}+2(S_{3}-1)S_{2}S_{4}  \tag{A7}  \label{A7} \\
& +2(S_{4}-1)S_{2}S_{3}+2(S_{2}-1)(S_{3}-1)(S_{4}-1)  \notag \\
& =8S_{2}S_{3}S_{4}-4(S_{2}S_{3}+S_{2}S_{4}+S_{3}S_{4})  \notag \\
& +2(S_{2}+S_{3}+S_{4})-1.  \notag
\end{align}

Thus, in view of (\ref{A3}), (\ref{A5}), (\ref{A7}),%
\begin{align}
\left\Vert \tau _{_{1\times S_{2}}}^{(\rho )}\right\Vert _{1}& \leq 2S_{2}-1,
\tag{A8}  \label{A8} \\
\left\Vert \tau _{_{1\times S_{2}\times S_{4}}}^{(\rho )}\right\Vert _{1}&
\leq 4S_{2}S_{3}-2(S_{2}+S_{3})+1\leq 4S_{2}S_{3}-3,  \notag \\
\left\Vert \tau _{_{1\times S_{2}\times S_{3}\times S_{4}}}^{(\rho
)}\right\Vert _{1}& \leq 8S_{2}S_{3}S_{4}-4(S_{2}S_{3}+S_{2}S_{4}+S_{3}S_{4})
\notag \\
& +2(S_{2}+S_{3}+S_{4})-1\leq 8S_{2}S_{3}S_{4}-7.  \notag
\end{align}%
All these bounds are tight in the sense they imply $\left\Vert \tau
_{_{1\times \cdots \times 1}}^{(\rho )}\right\Vert _{1}=1.$ The
generalization to an $N$-partite case with setting $S_{1}=1$ is
straightforward and gives%
\begin{align}
\left\Vert \tau _{_{1\times S_{2}\times \cdots \times S_{N}}}^{(\rho
)}\right\Vert _{1}& \leq \sum_{k=0}^{N-2}(-1)^{k}\text{ }2^{N-1-k}\sum 
_{\substack{ n_{1}\neq \ldots \neq n_{N-1-k},  \\ n_{j}=2,...,N}}%
S_{n_{1}}\cdot \ldots \cdot S_{n_{N-1-k}}  \tag{A9}  \label{A9} \\
& +(-1)^{N-1},  \notag \\
\left\Vert \tau _{_{1\times S\times \cdots \times S}}^{(\rho )}\right\Vert
_{1}& \leq (2S-1)^{N-1}.  \notag
\end{align}

The constructed source operators (\ref{A2}), (\ref{A4}), (\ref{A6}), (\ref%
{A9}) prove the statement of proposition 1.

\subsubsection{Other\ examples of source operators\ }

For our consideration in section 2 of appendix B, let us also construct
source operators of a type $\widetilde{\tau}$ different from type $\tau$ in
Eqs. (\ref{A2}), (\ref{A4}), (\ref{A6}), (\ref{A9}). Denote $d_{n}:=\dim 
\mathcal{H}_{n}$, $n=1,...,N,$ and assume that $\max_{n}d_{n}=d_{1}$.

Let $N=2$ and $\psi\in\mathcal{H}_{1}\otimes\mathcal{H}_{2}$. For a pure
bipartite state $|\psi\rangle\langle\psi|$, consider its Schmidt
decomposition 
\begin{equation}
|\psi\rangle\langle\psi|=\sum\xi_{j}\xi_{j_{1}}|g_{j}^{(1)}\rangle\langle
g_{j_{1}}^{(1)}|\otimes|g_{j}^{(2)}\rangle\langle g_{j_{1}}^{(2)}|,\ \ \ \
\xi_{j}>0,\ \ \ \ \sum\xi_{j}^{2}=1,  \tag{A10}  \label{A10}
\end{equation}
where the sum is taken over $j,j_{1}=1,...,d_{2}$ and $\{g_{j}^{(n)}\}$ is
an orthonormal base in $\mathcal{H}_{n},$ $n=1,2.$ \ Introduce the
self-adjoint operator%
\begin{equation}
\widetilde{\tau}_{_{1\times S_{2}}}^{|\psi\rangle\langle\psi|}=\sum_{j}\xi
_{j}^{2}\text{ }|g_{j}^{(1)}\rangle\langle
g_{j}^{(1)}|\otimes(|g_{j}^{(2)}\rangle\langle g_{j}^{(2)}|)^{\otimes
S_{2}}+\sum_{j\neq j_{1}}\xi _{j}\xi_{j_{1}}|g_{j}^{(1)}\rangle\langle
g_{j_{1}}^{(1)}|\otimes W_{jj_{1}}^{(2,S_{2})},  \tag{A11}  \label{A11}
\end{equation}
where 
\begin{align}
2W_{j\neq j_{1}}^{(2,S_{2})} & =\frac{\left(
|g_{j}^{(2)}+g_{j_{1}}^{(2)}\rangle\langle
g_{j}^{(2)}+g_{j_{1}}^{(2)}|\right) ^{\otimes S_{2}}}{2^{S_{2}}}-\frac{%
\left( |g_{j}^{(2)}-g_{j_{1}}^{(2)}\rangle\langle
g_{j}^{(2)}-g_{j_{1}}^{(2)}|\right) ^{\otimes S_{2}}}{2^{S_{2}}}  \tag{A12}
\label{A12} \\
& -i\frac{\left( |g_{j}^{(2)}+ig_{j_{1}}^{(2)}\rangle\langle
g_{j}^{(2)}+ig_{j_{1}}^{(2)}|\right) ^{\otimes S_{2}}}{2^{S_{2}}}+i\frac{%
\left( |g_{j}^{(2)}-ig_{j_{1}}^{(2)}\rangle\langle
g_{j}^{(2)}-ig_{j_{1}}^{(2)}|\right) ^{\otimes S_{2}}}{2^{S_{2}}}\text{ }, 
\notag \\
\left( W_{jj_{1}}^{(2,S_{2})}\right) ^{\ast} & =W_{j_{1}j}^{(2,S_{2})}. 
\notag
\end{align}
It is easy to verify that (\ref{A11}) represents an $1\times S_{2}$-setting
source operator for state (\ref{A10}) and 
\begin{align}
\left\Vert \widetilde{\tau}_{_{1\times S_{2}}}^{|\psi\rangle\langle\psi
|}\right\Vert _{1} & \leq1+2\sum_{j\neq j_{1}}\xi_{j}\xi_{j_{1}}=2\left(
\sum\xi_{j}\right) ^{2}-1  \tag{A13}  \label{A13} \\
& \leq2d_{2}-1  \notag
\end{align}
for any $S_{2}\geq1.$ By convexity, for an arbitrary bipartite state $%
\rho=\sum_{i}\alpha_{i}|\psi_{i}\rangle\langle\psi_{i}|,$ operator $%
\widetilde{\tau}_{_{1\times S_{2}}}^{(\rho)}=\sum\alpha_{i}\widetilde{\tau }%
_{_{1\times S_{2}}}^{(|\psi_{i}\rangle\langle\psi_{i}|)}$ represents an $%
1\times S_{2}$-setting source operator and 
\begin{equation}
\left\Vert |\widetilde{\tau}_{_{1\times S_{2}}}^{(\rho)}\right\Vert
_{1}\leq2d_{2}-1,\text{ \ \ }\forall S_{2}\geq1.  \tag{A14}  \label{A!4}
\end{equation}

Let $N=3$. For state (\ref{A1}) with $N=3$ introduce the self-adjoint
operator 
\begin{equation}
\widetilde{\tau}_{_{1\times S_{2}\times S_{3}}}^{(\rho)}=\sum\eta
_{mm_{1},jj_{1}...kk_{1}}|e_{m}^{(1)}\rangle\langle e_{m_{1}}^{(1)}|\otimes
W_{jj_{1}}^{(2,S_{2})}\otimes W_{kk_{1}}^{(3,S_{3})},  \tag{A15}  \label{A15}
\end{equation}
where $W_{ll}^{(n,S_{n})}:=\left( |e_{l}^{(n)}\rangle\langle
e_{l}^{(n)}|\right) ^{\otimes S_{n}}$ and operator $W_{l\neq
l_{1}}^{(n,S_{n})}$ is defined by (\ref{A12}) via replacements $2\rightarrow
n,$ $S_{2}\rightarrow S_{n},$ $g_{j}^{(2)}\rightarrow e_{j}^{(n)}.$ It is
easy to verify that (\ref{A15}) is an $1\times S_{2}\times S_{3}$-setting
source operator for state (\ref{A1}) with $N=3.$ Splitting (\ref{A15}) into
four sums 
\begin{align}
\widetilde{\tau}_{_{1\times S_{2}\times S_{3}}}^{(\rho)} & =\sum
_{i,j,k}\alpha_{i}(\beta_{jk}^{(i)})^{2}|\phi_{jk}^{(i)}\rangle\langle
\phi_{jk}^{(i)}|\otimes(|e_{j}^{(2)}\rangle\langle e_{j}^{(2)}|)^{\otimes
S_{2}}\otimes(|e_{k}^{(3)}\rangle\langle e_{k}^{(3)}|)^{\otimes S_{3}} 
\tag{A16}  \label{A16} \\
& +\sum_{i,j\neq
j_{1},k}\alpha_{i}\beta_{jk}^{(i)}\beta_{j_{1}k}^{(i)}|\phi_{jk}^{(i)}%
\rangle\langle\phi_{j_{1}k}^{(i)}|\otimes
W_{jj_{1}}^{(2,S_{2})}\otimes(|e_{k}^{(3)}\rangle\langle
e_{k}^{(3)}|)^{\otimes S_{3}}  \notag \\
& +\sum_{i,j,k\neq
k_{1}}\alpha_{i}\beta_{jk}^{(i)}\beta_{jk_{1}}^{(i)}|\phi_{jk}^{(i)}\rangle%
\langle\phi_{jk_{1}}^{(i)}|\otimes(|e_{j}^{(2)}\rangle\langle
e_{j}^{(2)}|)^{\otimes S_{1}}\otimes W_{kk_{1}}^{(3,S_{3})}  \notag \\
& +\sum_{i,j\neq j_{1},k\neq k_{1}}\alpha_{i}\beta_{jk}^{(i)}\beta
_{j_{1}k_{1}}^{(i)}|\phi_{jk}^{(i)}\rangle\langle\phi_{j_{1}k_{1}}^{(i)}|%
\otimes W_{jj_{1}}^{(2,S_{2})}\otimes W_{kk_{1}}^{(3,S_{3})},  \notag
\end{align}
where, for any index $i,$%
\begin{align}
\phi_{jk}^{(i)} & =\frac{1}{\beta_{jk}^{(i)}}\sum_{m}%
\varsigma_{mjk}^{(i)}e_{m}^{(1)},\ \ \ ||\phi_{jk}^{(i)}||\text{ }=1, 
\tag{A17}  \label{A17} \\
\beta_{jk}^{(i)} & =\left( \dsum
\limits_{m}|\varsigma_{mjk}^{(i)}|^{2}\right) ^{1/2},\text{ \ \ }%
\sum_{j,k}(\beta _{jk}^{(i)})^{2}=1,  \notag
\end{align}
and taking into the account that%
\begin{align}
\left\Vert |\phi_{jk}^{(i)}\rangle\langle\phi_{j_{1}k_{1}}^{(i)}|\right\Vert
_{1} & =1,\text{ \ \ }\sum\beta_{jk}^{(i)}\leq\sqrt{d_{2}d_{3}},  \tag{A18}
\label{A18} \\
\left\Vert W_{ll}^{(n,S_{n})}\right\Vert _{1} & =1,\text{ \ }\left\Vert
W_{l\neq l_{1}}^{(n,S_{n})}\right\Vert _{1}\leq2\text{, \ \ }n=2,3,  \notag
\end{align}
we derive%
\begin{equation}
\left\Vert \widetilde{\tau}_{_{1\times S_{2}\times
S_{3}}}^{(\rho)}\right\Vert _{1}\leq4d_{2}d_{3}-3  \tag{A19}  \label{A19}
\end{equation}
for any $S_{2},S_{3}\geq1.$

The generalization of (\ref{A15}), (\ref{A19}) for $N\geq4$ is
straightforward and gives the source operator%
\begin{equation}
\widetilde{\tau}_{_{1\times S_{2}\times\cdots\times S_{N}}}^{(\rho)}=\sum
\eta_{mm_{1},jj_{1,\ldots},kk_{1}}|e_{m}^{(1)}\rangle\langle
e_{m_{1}}^{(1)}|\otimes W_{jj_{1}}^{(2,S_{2})}\otimes\cdots\otimes
W_{kk_{1}}^{(N,S_{N})},  \tag{A20}  \label{A20}
\end{equation}
with the trace norm 
\begin{equation}
\left\Vert \widetilde{\tau}_{_{1\times S_{2}\times\cdots\times
S_{N}}}^{(\rho)}\right\Vert _{1}\leq2^{N-1}(d_{2}\cdot\ldots\cdot d_{N}-1)+1.
\tag{A21}  \label{A21}
\end{equation}

\subsubsection{\textbf{Source operator for the singlet}}

For the two-qubit singlet $\psi_{singlet}=\frac{1}{\sqrt{2}}(e_{1}\otimes
e_{2}-e_{2}\otimes e_{1})$, the self-adjoint operator%
\begin{align}
T_{1\times2}^{(\rho_{singlet})} & =\frac{1}{2}|e_{1}\rangle\langle
e_{1}|\otimes|e_{2}\rangle\langle e_{2}|\otimes|e_{2}\rangle\langle e_{2}|+%
\frac{1}{2}|e_{2}\rangle\langle e_{2}|\otimes|e_{1}\rangle\langle
e_{1}|\otimes|e_{1}\rangle\langle e_{1}|  \tag{A22}  \label{A22} \\
& -\frac{1}{2}|e_{1}\rangle\langle e_{2}|\otimes|e_{2}\rangle\langle
e_{1}|\otimes\frac{\mathbb{I}_{\mathbb{C}^{2}}}{2}-\frac{1}{2}%
|e_{2}\rangle\langle e_{1}|\otimes|e_{1}\rangle\langle e_{2}|\otimes\frac {%
\mathbb{I}_{\mathbb{C}^{2}}}{2}  \notag \\
& -\frac{1}{2}|e_{1}\rangle\langle e_{2}|\otimes\frac{\mathbb{I}_{\mathbb{C}%
^{2}}}{2}\otimes|e_{2}\rangle\langle e_{1}|\text{ }-\frac{1}{2}%
|e_{2}\rangle\langle e_{1}|\otimes\frac{\mathbb{I}_{\mathbb{C}^{2}}}{2}%
\otimes|e_{1}\rangle\langle e_{2}|  \notag
\end{align}
on $(\mathbb{C}^{2})^{\otimes3}$ represents an $1\times2$-setting source
operator. Calculating the eigenvalues of this source operator, we derive $%
\lambda_{1,2,3,4}=0,$ $\lambda_{5,6}=\frac{1-\sqrt{3}}{4}$ and $\lambda
_{7,8}=\frac{1+\sqrt{3}}{4}.$ Hence,%
\begin{equation}
\left\Vert T_{1\times2}^{(\rho_{singlet})}\right\Vert _{1}=\sqrt{3}. 
\tag{A23}  \label{A23}
\end{equation}

\subsection{Proof of lemma 1}

\emph{Property 1.}\textbf{\ }The first left-hand side inequality of property
(1) is trivial. The last right-hand side inequality is due to the fact that,
for each $W\in\mathcal{T}_{\mathcal{G}_{1}\otimes\mathcal{\cdots}\otimes%
\mathcal{G}_{m}}^{(sa)},$ operator $|W|$ is one of its coverings, so that,
by (\ref{15}) and definition of the trace norm, we have 
\begin{equation}
\left\Vert W\right\Vert _{cov}\leq\mathrm{tr}[|W|]:=\left\Vert W\right\Vert
_{1}.  \tag{A24}  \label{A24}
\end{equation}
In order to prove 
\begin{equation}
\sup_{X_{j}=X_{j}^{\ast},\text{ }||X_{j}||=1,\forall j}\text{ }\left\vert 
\text{ }\mathrm{tr}[W\{X_{1}\otimes\cdots\otimes X_{m}\}]\right\vert
\leq\left\Vert W\right\Vert _{cov},  \tag{A25}  \label{A25}
\end{equation}
where supremum is taken over all self-adjoint bounded linear operators $%
X_{j},$ $j=1,...,N,$ with the operator norm $\left\Vert X_{j}\right\Vert =1,$
let us represent operator $W$ $\in\mathcal{T}_{\mathcal{G}_{1}\otimes 
\mathcal{\cdots}\otimes\mathcal{G}_{m}}^{(sa)}$ via decomposition (\ref{11})
with an arbitrary a trace class covering $W_{cov}$. We have 
\begin{align}
\sup\left\vert \text{ }\mathrm{tr}[W\{X_{1}\otimes\cdots\otimes
X_{m}\}]\right\vert \text{ } & \leq\frac{1}{2}\sup\left\vert \text{ }\mathrm{%
tr}[(W_{cov}+W)\{X_{1}\otimes\cdots\otimes X_{m}\}]\right\vert \text{ } 
\tag{A26}  \label{A26} \\
& +\frac{1}{2}\sup\left\vert \text{ }\mathrm{tr}[(W_{cov}-W)\{X_{1}\otimes%
\cdots\otimes X_{m}\}]\right\vert .  \notag
\end{align}
Applying to each term in (\ref{A26}) the spectral theorem 
\begin{equation}
X_{k}=\dint \lambda\mathbb{E}_{X_{k}}(\mathrm{d}\lambda),  \tag{A27}
\label{A27}
\end{equation}
where $\mathbb{E}_{X_{k}}$ is the spectral (projection-valued) measure on $(%
\mathbb{R},\mathcal{B}_{\mathbb{R}})$ for each self-adjoint bounded linear
operator $X_{k},$ \ $k=1,...,m,$ and taking into the account that $%
W_{cov}\pm W\overset{\otimes}{\geq}0$ and spectrum $\sigma(X_{k})\subseteq%
\lbrack-1,1]$ for each $k,$ we derive%
\begin{align}
& \sup\left\vert \text{ }\mathrm{tr}[(W_{cov}\pm W)\{X_{1}\otimes
\cdots\otimes X_{m}\}]\right\vert  \tag{A28}  \label{A28} \\
& \leq\sup\text{ }\left\vert \dint \lambda_{1}\cdot\ldots\cdot\lambda_{m}%
\text{ }\mathrm{tr}\left[ \left( W_{cov}\pm W\right) \{\mathbb{E}_{X_{1}}(%
\mathrm{d}\lambda_{1})\otimes \cdots\otimes\mathbb{E}_{X_{m}}(\mathrm{d}%
\lambda_{m})\}\right] \text{ }\right\vert  \notag \\
& \leq\mathrm{tr}\left[ W_{cov}\pm W\right] .  \notag
\end{align}
Substituting (\ref{A28}) into (\ref{A26}), we have 
\begin{equation}
\sup_{X_{j}=X_{j}^{\ast},\text{ }|\left\Vert X_{j}\right\Vert =1,\forall j}%
\text{ }\left\vert \mathrm{tr}[W\left\{ X_{1}\otimes\cdots\otimes
X_{m}\right\} ]\text{ }\right\vert \leq\mathrm{tr}\left[ W_{cov}\right] 
\tag{A29}  \label{A29}
\end{equation}
for each trace class covering $W_{cov}$ of an operator $W\in\mathcal{T}_{%
\mathcal{G}_{1}\otimes\mathcal{\cdots}\otimes\mathcal{G}_{m}}^{(sa)}.$ This
relation and definition (\ref{15}) of the covering norm imply inequality (%
\ref{A25}).

\emph{Property 2.} If a self-adjoint trace class operator $W$ is tensor
positive, then $\mathrm{tr}[W]\geq0$ and $W$ is itself one of its possible
coverings. Therefore, by (\ref{15}), $\left\Vert W\right\Vert _{cov}\leq%
\mathrm{tr}[W].$ This and property (1) of lemma 1 prove property (2).

\emph{Property 3.} Let $W_{red}\in\mathcal{T}_{\mathcal{G}_{k_{1}}\otimes%
\mathcal{\cdots}\otimes\mathcal{G}_{k_{j}}}^{(sa)},$ $1\leq
k_{1}<\cdots<k_{j}\leq m,$ $j<m,$ be the self-adjoint trace class operator
reduced from a self-adjoint trace class operator $W$ on a Hilbert space $%
\mathcal{G}_{1}\otimes\mathcal{\cdots}\otimes\mathcal{G}_{m}.$ The left-hand
side inequality in property (3) follows from property (1) and relation $%
\mathrm{tr}[W_{red}]=\mathrm{tr}[W]$.

Further, if $(W_{cov})_{red}$ $\in\mathcal{T}_{\mathcal{G}_{k_{1}}\otimes%
\mathcal{\cdots}\otimes\mathcal{G}_{k_{j}}}$ is the operator reduced from a
trace class covering $W_{cov}$ of operator $W\in\mathcal{T}_{\mathcal{G}%
_{1}\otimes\mathcal{\cdots}\otimes\mathcal{G}_{m}}^{(sa)},$ then $%
(W_{cov})_{red}$ is one of trace class coverings of operator $W_{red}\in%
\mathcal{T}_{\mathcal{G}_{k_{1}}\otimes\mathcal{\cdots}\otimes \mathcal{G}%
_{k_{j}}}^{(sa)}$. Therefore, for each $W\in\mathcal{T}_{\mathcal{G}%
_{1}\otimes\mathcal{\cdots}\otimes\mathcal{G}_{m}}^{(sa)},$ we have the
following inclusion%
\begin{align}
& \left\{ \left( W_{cov}\right) _{red}\in\mathcal{T}_{\mathcal{G}%
_{k_{1}}\otimes\mathcal{\cdots}\otimes\mathcal{G}_{k_{j}}}\mid W_{cov}\in 
\mathcal{T}_{\mathcal{G}_{1}\otimes\mathcal{\cdots}\otimes\mathcal{G}%
_{m}}\right\}  \tag{A30}  \label{A30} \\
& \subseteq\left\{ \left( W_{red}\right) _{cov}\in\mathcal{T}_{\mathcal{G}%
_{k_{1}}\otimes\mathcal{\cdots}\otimes\mathcal{G}_{k_{j}}}\right\} .  \notag
\end{align}
In view of definition (\ref{15}) of the covering norm and relation $\mathrm{%
tr}\left[ (W_{cov})_{red}\right] =\mathrm{tr}[W_{cov}],$ inclusion (\ref{A30}%
) implies%
\begin{align}
\left\Vert W_{red}\right\Vert _{cov} & =\inf_{\left( W_{red}\right) _{cov}\in%
\mathcal{T}_{\mathcal{G}_{k_{1}}\otimes\mathcal{\cdots}\otimes\mathcal{G}%
_{k_{j}}}}\mathrm{tr}\left[ (W_{red})_{cov}\right]  \tag{A31}  \label{A31} \\
& \leq\inf_{W_{cov}\in\mathcal{T}_{\mathcal{G}_{1}\otimes\mathcal{\cdots }%
\otimes\mathcal{G}_{m}}}\mathrm{tr}\left[ (W_{cov})_{red}\right]  \notag \\
& =\inf_{W_{cov}\in\mathcal{T}_{\mathcal{G}_{1}\otimes\mathcal{\cdots}\otimes%
\mathcal{G}_{m}}}\mathrm{tr}[W_{cov}]=\left\Vert W\right\Vert _{cov}  \notag
\end{align}
for each $W_{red}$ reduced from an operator $W\in\mathcal{T}_{\mathcal{G}%
_{1}\otimes\mathcal{\cdots}\otimes\mathcal{G}_{m}}^{(sa)}.$ This proves
property (3).

\section{proofs for section 5}

In this appendix we prove lemma 2 and find also some upper bounds needed for
our presentation in section 5.

\subsection{Proof of lemma 2}

For short, we further omit the below indices $\rho ,$ $\mathrm{M}_{S,\Lambda
}$ at notation $\mathcal{E}_{\rho ,\mathrm{M}_{S,\Lambda }}$ and denote by $%
\mathcal{M}_{\mathcal{E}}:=\{\mu \}$ the set of all normalized bounded
real-valued measures $\mu $, each returning all distributions $%
P_{(s_{1},...,s_{N})}^{(\mathcal{E})}$ of scenario $\mathcal{E}$\ as the
corresponding marginals and defined on the direct product space $(\Lambda
_{1}^{S_{1}}\times \cdots \times \Lambda _{N}^{S_{N}},$\textbf{\ }$\mathcal{F%
}_{\Lambda _{1}}^{\otimes S_{1}}\otimes \cdots \otimes \mathcal{F}_{\Lambda
_{N}}^{\otimes S_{N}}):=(\Lambda ^{\prime },\mathcal{F}^{\prime })$.

If scenario $\mathcal{E}$ admits an LHV model, then, by corollary 1, in set $%
\mathcal{M}_{\mathcal{E}},$ there is a probability measure. Since the total
variation norm of any probability measure is equal to $1$, parameter $\gamma
_{\mathcal{E}}=\inf_{\mu \in \mathcal{M}_{\mathcal{E}}}||\mu ||_{var}$ $=1.$

Conversely, let $\gamma _{\mathcal{E}}=1.$ As it is shown in section 2 of
this appendix, for a quantum scenario, set $\mathcal{M}_{\mathcal{E}}$
contains more than one element, so that, in view of convexity of $\mathcal{M}%
_{\mathcal{E}},$ this set is infinite. From relation 
\begin{equation}
\gamma _{\mathcal{E}}=\inf \{||\mu ||_{var}\mid \mu \in \mathcal{M}_{%
\mathcal{E}}\}=1  \tag{B1}  \label{B1}
\end{equation}%
it follows that, in set $\mathcal{M}_{\mathcal{E}},$ there is a sequence $%
\{\mu _{m}\},$ for which $||\mu _{m}||_{var}\rightarrow 1$ as $m\rightarrow
\infty $ and which is bounded in norm $||\cdot ||_{var}$. Note that,
equipped with the total variation norm $\left\Vert \cdot \right\Vert _{var},$
the linear space $\mathfrak{F}_{(\Lambda ^{\prime },\mathcal{F}^{\prime })}$
of all bounded real-valued measures on the measurable space $(\Lambda
^{\prime },\mathcal{F}^{\prime })$ is Banach. Therefore, there exists a
subsequence $\{\mu _{k_{m}}\}\subseteq \{\mu _{m}\}$ and a measure $%
\widetilde{\mu }\in \mathfrak{F}_{(\Lambda ^{\prime },\mathcal{F}^{\prime
})} $ such that%
\begin{equation}
\lim_{m\rightarrow \infty }\int f(\lambda ^{\prime })\mu _{k_{m}}(\mathrm{d}%
\lambda ^{\prime })=\int f(\lambda ^{\prime })\widetilde{\mu }(\mathrm{d}%
\lambda ^{\prime }),\text{ \ \ }\left\Vert \widetilde{\mu }\right\Vert
_{var}\leq 1,  \tag{B2}  \label{B2}
\end{equation}%
for all Borel measurable bounded real-valued functions $f$ on $(\Lambda
^{\prime },\mathcal{F}^{\prime }).$ Denote by $\widetilde{P}%
_{(s_{1},...,s_{N})}(\mathrm{d}\lambda _{1}^{(s_{1})}\times \cdots \times 
\mathrm{d}\lambda _{N}^{(s_{N})})$ the corresponding marginal of measure $%
\widetilde{\mu }(\mathrm{d}\lambda _{1}^{(1)}\times \cdots \times \mathrm{d}%
\lambda _{1}^{(S_{1})}\times \cdots \times $ $\mathrm{d}\lambda
_{N}^{(1)}\times \cdots \times \mathrm{d}\lambda _{N}^{(S_{N})}).$
Specifying (\ref{B2}) with the indicator function $\chi _{_{F}}((\lambda
_{1}^{(s_{1})},...,\lambda _{N}^{(s_{N})})),$ $F\in \mathcal{F}_{1}\otimes
\cdots \otimes \mathcal{F}_{N},$ and taking into the account that $\mu
_{k_{m}}\in \mathcal{M}_{\mathcal{E}},$ $\forall k_{m},$ we have 
\begin{equation}
P_{(s_{1},...,s_{N})}^{(\mathcal{E})}(F)=\widetilde{P}_{(s_{1},...,s_{N})}(F)
\tag{B3}  \label{B3}
\end{equation}%
for all sets $F\in \mathcal{F}_{1}\otimes \cdots \otimes \mathcal{F}_{N}$
and all tuples $(s_{1},...,s_{N}).$ Thus, $\widetilde{P}%
_{(s_{1},...,s_{N})}=P_{(s_{1},...,s_{N})}^{(\mathcal{E})}$ for all joint
measurements $(s_{1},...,s_{N})$ of scenario $\mathcal{E}.$ The latter means
that the bounded real-valued measure $\widetilde{\mu }$ is normalized and
returns all probability distributions $P_{(s_{1},...,s_{N})}^{(\mathcal{E})}$
of scenario $\mathcal{E}$ as the corresponding marginals. Hence, $\widetilde{%
\mu }$ belongs to set $\mathcal{M}_{\mathcal{E}}.$ In view of relation (\ref%
{35}) and the second relation in (\ref{B2}), the normalized bounded
real-valued measure $\widetilde{\mu }\in \mathcal{M}_{\mathcal{E}}$ is a
probability measure. By corollary 1, this proves lemma 2.

\subsection{Some upper bounds}

For the evaluation in proposition 4 of the state parameters (\ref{s1}), (\ref%
{s}), in addition to measure (\ref{22}), let us also consider another
example of possible measures $\mu_{\mathcal{E}_{\rho,\mathrm{M}%
_{S,\Lambda}}} $ in (\ref{q1}).

Let $T_{1\times S_{2}\times\cdots\times S_{N}}^{(\rho)}$ be an $1\times
S_{2}\times\cdots\times S_{N}$-setting source operator for state $\rho$ on $%
\mathcal{H}_{1}\otimes\cdots\otimes\mathcal{H}_{N}$ (see definition 1 in
section 2). Introduce the following collection of normalized bounded
real-valued measures%
\begin{align}
& \mathrm{tr}[T_{1\times S_{2}\times\cdots\times S_{N}}^{(\rho)}\text{ }\{%
\mathrm{M}_{1}^{(s_{1})}(\mathrm{d}\lambda_{1}^{(s_{_{1}})})\otimes \mathrm{M%
}_{2}^{(1)}(\mathrm{d}\lambda_{2}^{(1)})\otimes\cdots\otimes \mathrm{M}%
_{2}^{(S_{2})}(\mathrm{d}\lambda_{2}^{(S_{2})})  \tag{B4}  \label{B4} \\
& \otimes\cdots\otimes\mathrm{M}_{N}^{(1)}(\mathrm{d}\lambda_{N}^{(1)})%
\otimes\cdots\otimes\mathrm{M}_{N}^{(S_{N})}(\mathrm{d}\lambda
_{N}^{(S_{N})})\}],\text{ \ \ \ \ \ }s_{1}=1,...,S_{1},  \notag
\end{align}
where each $s_{1}$-th measure returns as the corresponding marginals all
joint distributions $P_{(s_{1},...,s_{N})}^{(\mathcal{E}_{\rho,\mathrm{M}%
_{S,\Lambda}})}$ of scenario $\mathcal{E}_{\rho,\mathrm{M}_{S,\Lambda}}$
with measurement $s_{1}$ at site $n=1.$

For an arbitrary trace class covering $(T_{1\times S_{2}\times \cdots \times
S_{N}}^{(\rho )}$ $)_{cov}$ of a source operator $T_{1\times S_{2}\times
\cdots \times S_{N}}^{(\rho )}$ (see definition 3 in section 2),
decomposition (\ref{11}) implies the following representation 
\begin{equation}
\frac{1}{2}\mathrm{tr}\left[ \mathfrak{\tau }^{+}\{\text{ }\mathrm{M}%
_{1}^{(s_{1})}(\cdot )\otimes \mathrm{M}_{2}^{(1)}(\cdot )\otimes \cdots \}%
\right] -\frac{1}{2}\mathrm{tr}\left[ \tau ^{-}\{\text{ }\mathrm{M}%
_{1}^{(s_{1})}(\cdot )\otimes \mathrm{M}_{2}^{(1)}(\cdot )\otimes \cdots \}%
\right]  \tag{B5}  \label{B5}
\end{equation}%
of each $s_{1}$-th measure (\ref{B4}) via two positive real-valued measures,
where 
\begin{equation}
\tau ^{\pm }=\left( T_{1\times S_{2}\times \cdots \times S_{N}}^{(\rho
)}\right) _{_{cov}}\pm T_{1\times S_{2}\times \cdots \times S_{N}}^{(\rho )}%
\overset{\otimes }{\geq 0}  \tag{B6}  \label{B6}
\end{equation}%
are tensor positive trace class operators.

As it is discussed in the proof of theorem 2 in Ref. \cite{loubenets3}, for
a positive measure $\nu $ on a direct product space $(\Lambda _{1},\mathcal{F%
}_{\Lambda _{1}})\times (\Lambda _{2},\mathcal{F}_{\Lambda _{2}})$, each
positive measure $\nu (B_{1}\times \cdot ),$ $B_{1}\in \mathcal{F}_{\Lambda
_{1}},$ on $(\Lambda _{2},\mathcal{F}_{\Lambda _{2}})$ is absolutely
continuous \cite{35} with respect to the marginal measure $\nu (\Lambda
_{1}\times \cdot )$ on $(\Lambda _{2},\mathcal{F}_{\Lambda _{2}})$. Hence,
for each of positive measures in decomposition (\ref{B5}), the Radon-Nykodim
theorem \cite{dunford} implies representation 
\begin{align}
& \mathrm{tr}\left[ \tau ^{\pm }\{\mathrm{M}_{1}^{(s_{1})}(\mathrm{d}\lambda
_{1}^{(s_{_{1}})})\otimes \mathrm{M}_{2}^{(1)}(\mathrm{d}\lambda
_{2}^{(1)})\otimes \cdots \}\right]  \tag{B7}  \label{B7} \\
& =\alpha _{s_{1}}^{(\pm )}(\mathrm{d}\lambda _{1}^{(s_{_{1}})}\mid \lambda
_{2}^{(1)},...)\text{ }\mathrm{tr}\left[ \tau ^{\pm }\{\mathbb{I}_{\mathcal{H%
}_{1}}\otimes \mathrm{M}_{2}^{(1)}(\mathrm{d}\lambda _{2}^{(1)})\otimes
\cdots \}\right] ]  \notag
\end{align}%
via conditional probability measure $\alpha _{s_{1}}^{(\pm )}(\cdot \mid
\lambda _{2}^{(1)},...)$ and marginal 
\begin{equation}
\mathrm{tr}\left[ \tau ^{\pm }\{\mathbb{I}_{\mathcal{H}_{1}}\otimes \mathrm{M%
}_{2}^{(1)}(\mathrm{d}\lambda _{2}^{(1)})\otimes \cdots \}\right] .  \tag{B8}
\label{B8}
\end{equation}%
From (\ref{B4}) - (\ref{B7}) it follows that the normalized bounded
real-valued measure%
\begin{align}
& \mu _{T_{1\times S_{2}\times \cdots \times S_{N}}^{(\rho )},\left(
T_{1\times S_{2}\times \cdots \times S_{N}}^{(\rho )}\right) _{cov}}^{(\rho ,%
\mathrm{M}_{S,\Lambda })}\left( \mathrm{d}\lambda _{1}^{(1)}\times \cdots
\times \mathrm{d}\lambda _{1}^{(S_{1})}\times \cdots \times \mathrm{d}%
\lambda _{N}^{(1)}\times \cdots \times \mathrm{d}\lambda
_{N}^{(S_{N})}\right)  \tag{B9}  \label{B9} \\
& :=\frac{1}{2}\left\{ \dprod\limits_{s_{1}=1,..,S_{1}}\alpha _{s_{1}}^{(+)}(%
\mathrm{d}\lambda _{1}^{(s_{_{1}})}|\lambda _{2}^{(1)},...)\right\} \text{ }%
\mathrm{tr}\left[ \tau ^{^{+}}\left\{ \mathbb{I}_{\mathcal{H}_{1}}\otimes 
\mathrm{M}_{2}^{(1)}(\mathrm{d}\lambda _{2}^{(1)})\otimes \cdots \right\} %
\right]  \notag \\
& -\frac{1}{2}\left\{ \dprod\limits_{s_{1}=1,..,S_{1}}\alpha _{s_{1}}^{(-)}(%
\mathrm{d}\lambda _{1}^{(s_{_{1}})}|\lambda _{2}^{(1)},...)\right\} \text{ }%
\mathrm{tr}\left[ \tau ^{-}\left\{ \text{ }\mathbb{I}_{\mathcal{H}%
_{1}}\otimes \mathrm{M}_{2}^{(1)}(\mathrm{d}\lambda _{2}^{(1)})\otimes
\cdots \right\} \right]  \notag
\end{align}%
\medskip returns all distributions (\ref{3}) of scenario $\mathcal{E}_{\rho ,%
\mathrm{M}_{S,\Lambda }}$ as the corresponding marginals.

\begin{lemma}
For arbitrary source operators $T_{S_{1}\times\cdots\times S_{N}}^{(\rho)},$ 
$T_{_{1\times S_{2}\times\cdots\times S_{N}}}^{(\rho)}$ for state $\rho,$
the total variation norms of measures (\ref{22}), (\ref{B9}) satisfy
relations%
\begin{equation}
1\leq\left\Vert \mu_{T_{S_{1}\times\cdots\times S_{N}}^{(\rho)}}^{(\rho,%
\mathrm{M}_{S,\Lambda})}\right\Vert _{_{_{var}}}\leq\left\Vert
T_{S_{1}\times\cdots\times S_{N}}^{(\rho)}\right\Vert _{cov}  \tag{B10}
\label{B10}
\end{equation}
and 
\begin{align}
1 & \leq\inf_{\left( T_{_{1\times S_{2}\times\cdots\times S_{N}}}^{(\rho
)}\right) _{_{_{cov}}}}\left\Vert \mu_{T_{_{1\times S_{2}\times\cdots\times
S_{N}}}^{(\rho)},\text{ }(T_{_{1\times S_{2}\times\cdots\times
S_{N}}}^{(\rho)}\text{ })_{_{_{cov}}}}^{(\rho,\mathrm{M}_{S,\Lambda})}\right%
\Vert _{_{var}}\leq\left\Vert T_{_{1\times S_{2}\times\cdots\times
S_{N}}}^{(\rho )}\right\Vert _{cov}  \tag{B11}  \label{B11} \\
&  \notag
\end{align}
for every collection $\mathrm{M}_{S,\Lambda}$ of POV measures on $\Lambda$
and an arbitrary outcome set $\Lambda.$ Here, infimum is taken over all
trace class coverings $(T_{_{1\times S_{2}\times\cdots\times S_{N}}}^{(\rho
)})_{_{_{cov}}}$ of a source operator $T_{_{1\times S_{2}\times\cdots\times
S_{N}}}^{(\rho)}$ and $\left\Vert \cdot\right\Vert _{cov}$ is the covering
norm (see definition 4 in section 2).
\end{lemma}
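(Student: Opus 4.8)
The plan is to estimate the total variation norms of the two families of measures by reducing everything to the covering norm of the relevant source operator. I start with bound (\ref{B10}). Fix a source operator $T_{S_{1}\times\cdots\times S_{N}}^{(\rho)}$ and let $T_{cov}$ be any trace class covering of it, so that $\tau^{\pm}:=T_{cov}\pm T_{S_{1}\times\cdots\times S_{N}}^{(\rho)}\overset{\otimes}{\geq}0$. Substituting the decomposition (\ref{11}) into the definition (\ref{22}) of $\mu_{T_{S_{1}\times\cdots\times S_{N}}^{(\rho)}}^{(\rho,\mathrm{M}_{S,\Lambda})}$, I write this measure as the difference $\tfrac12\mathrm{tr}[\tau^{+}\{\cdots\}]-\tfrac12\mathrm{tr}[\tau^{-}\{\cdots\}]$ of two measures built by tensoring the POV measures in $\mathrm{M}_{S,\Lambda}$. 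Because $\tau^{\pm}$ are trace class and tensor positive, each of $\mathrm{tr}[\tau^{\pm}\{\cdots\}]$ is a positive measure (this uses the remark in Section 2 that $\mathrm{tr}[W\{X_{1}\otimes\cdots\otimes X_{m}\}]\geq0$ for $W\overset{\otimes}{\geq}0$ and positive $X_{j}$, applied to the values of the POV measures). Hence the Jordan decomposition of $\mu$ is dominated by this particular splitting, so $\|\mu\|_{var}\le \tfrac12\mathrm{tr}[\tau^{+}](\Omega')+\tfrac12\mathrm{tr}[\tau^{-}](\Omega') = \tfrac12\,\mathrm{tr}[\tau^{+}]+\tfrac12\,\mathrm{tr}[\tau^{-}]=\mathrm{tr}[T_{cov}]$, where I used $\mathrm{M}_{n}^{(s_{n})}(\Lambda_{n})=\mathbb{I}_{\mathcal{H}_{n}}$ to evaluate the total masses and additivity of the trace. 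Taking the infimum over all trace class coverings $T_{cov}$ and invoking the definition (\ref{15}) of the covering norm gives $\|\mu\|_{var}\le\|T_{S_{1}\times\cdots\times S_{N}}^{(\rho)}\|_{cov}$. The lower bound $\|\mu\|_{var}\ge1$ is immediate from (\ref{35}) since $\mu$ is normalized.

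For bound (\ref{B11}) the argument is essentially the same but applied to the measure (\ref{B9}) already constructed above the lemma. For a fixed trace class covering $(T_{_{1\times S_{2}\times\cdots\times S_{N}}}^{(\rho)})_{cov}$, the measure $\mu_{T_{_{1\times S_{2}\times\cdots\times S_{N}}}^{(\rho)},(T_{_{1\times S_{2}\times\cdots\times S_{N}}}^{(\rho)})_{cov}}^{(\rho,\mathrm{M}_{S,\Lambda})}$ is by its very construction (see (\ref{B5})--(\ref{B9})) a difference $\tfrac12\,\mu^{+}-\tfrac12\,\mu^{-}$ of two positive measures: $\mu^{\pm}$ is the product of the conditional probability measures $\alpha_{s_{1}}^{(\pm)}(\cdot\mid\lambda_{2}^{(1)},\ldots)$ times the positive marginal measure $\mathrm{tr}[\tau^{\pm}\{\mathbb{I}_{\mathcal{H}_{1}}\otimes\mathrm{M}_{2}^{(1)}(\cdot)\otimes\cdots\}]$, where $\tau^{\pm}$ are as in (\ref{B6}). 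Hence $\|\mu\|_{var}\le\tfrac12\,\mu^{+}(\Omega')+\tfrac12\,\mu^{-}(\Omega')$. Each total mass $\mu^{\pm}(\Omega')$ equals $\mathrm{tr}[\tau^{\pm}]$: the conditional probability measures integrate to $1$, and the marginal integrates to $\mathrm{tr}[\tau^{\pm}]$ once all remaining POV measures are evaluated on the full outcome sets giving identity operators. Therefore $\|\mu\|_{var}\le\tfrac12(\mathrm{tr}[\tau^{+}]+\mathrm{tr}[\tau^{-}])=\mathrm{tr}[(T_{_{1\times S_{2}\times\cdots\times S_{N}}}^{(\rho)})_{cov}]$, and taking the infimum over all trace class coverings yields the covering-norm bound in (\ref{B11}). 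Again $\|\mu\|_{var}\ge1$ by normalization and (\ref{35}).

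The only genuinely delicate point is the measurability and well-definedness underlying (\ref{B7})--(\ref{B9}): one must know that $\mathrm{tr}[\tau^{\pm}\{\mathbb{I}_{\mathcal{H}_{1}}\otimes\mathrm{M}_{2}^{(1)}(\cdot)\otimes\cdots\}]$ is indeed the marginal of the positive measure $\mathrm{tr}[\tau^{\pm}\{\mathrm{M}_{1}^{(s_{1})}(\cdot)\otimes\cdots\}]$ on the $\Lambda_{1}$-factor, so that the Radon--Nikodym derivative $\alpha_{s_{1}}^{(\pm)}$ exists as a genuine conditional probability and the product over $s_{1}=1,\ldots,S_{1}$ makes sense; this is exactly the argument recalled from the proof of Theorem 2 in Ref.~\cite{loubenets3} and is already carried out in the text preceding the lemma, so I would simply cite it. Beyond that, the proof is a bookkeeping exercise: everything rests on (i) tensor positivity of $\tau^{\pm}$ forcing the two pieces to be positive measures, (ii) the normalization $\mathrm{M}_{n}^{(s_{n})}(\Lambda_{n})=\mathbb{I}$ to compute total masses, and (iii) the definition of the covering norm as an infimum of $\mathrm{tr}[W_{cov}]$.
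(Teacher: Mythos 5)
Your proposal is correct and follows essentially the same route as the paper's own proof: decompose the source operator via an arbitrary trace class covering into the tensor-positive operators $T_{cov}\pm T$, observe that each induces a positive measure under the POV-measure integration so that $\left\Vert \mu \right\Vert _{var}\leq \frac{1}{2}\mathrm{tr}[T_{cov}+T]+\frac{1}{2}\mathrm{tr}[T_{cov}-T]=\mathrm{tr}[T_{cov}]$, then take the infimum over coverings to obtain the covering norm, with the lower bound $1$ coming from normalization via (\ref{35}). This is exactly the argument in Eqs. (\ref{B12})--(\ref{B16}), including the parallel treatment of measure (\ref{B9}), so no further comment is needed.
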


\begin{proof}
In view of (\ref{11}), consider the decomposition of a source operator 
\begin{equation}
T_{S_{1}\times ...\times S_{N}}^{(\rho )}=\frac{1}{2}\left\{ (T_{S_{1}\times
\cdots \times S_{N}}^{(\rho )})_{cov}+T_{S_{1}\times \cdots \times
S_{N}}^{(\rho )}\right\} -\frac{1}{2}\left\{ (T_{S_{1}\times \cdots \times
S_{N}}^{(\rho )})_{cov}-T_{S_{1}\times \cdots \times S_{N}}^{(\rho )}\right\}
\tag{B12}  \label{B12}
\end{equation}%
via tensor positive operators $(T_{S_{1}\times \cdots \times S_{N}}^{(\rho
)})_{cov}\pm T_{S_{1}\times \cdots \times S_{N}}^{(\rho )},$ where $%
(T_{S_{1}\times \cdots \times S_{N}}^{(\rho )})_{_{cov}}$ is an arbitrary
trace class covering of $T_{S_{1}\times \cdots \times S_{N}}^{(\rho )},$ see
definition 3 of section 2.

Substituting (\ref{B12}) into (\ref{22}), we represent measure $\mu
_{T_{S_{1}\times \cdots \times S_{N}}^{(\rho )}}^{(\rho ,\mathrm{M}%
_{S,\Lambda })}$ as the difference of two positive measures and, for each of
these measures, the total variation (\ref{tv}) is upper bounded by 
\begin{equation}
\frac{1}{2}\mathrm{tr}\left[ \left( T_{S_{1}\times \cdots \times
S_{N}}^{(\rho )}\right) _{cov}\pm T_{S_{1}\times \cdots \times S_{N}}^{(\rho
)}\right] .  \tag{B13}  \label{B13}
\end{equation}%
\medskip Therefore, for the total variation (\ref{tv}) of the normalized
measure $\mu _{T_{S_{1}\times \cdots \times S_{N}}^{(\rho )}}^{(\rho ,%
\mathrm{M}_{S,\Lambda })},$ we have%
\begin{equation}
1\leq \left\Vert \mu _{T_{S_{1}\times \cdots \times S_{N}}^{(\rho )}}^{(\rho
,\mathrm{M}_{S,\Lambda })}\right\Vert _{var}\leq \mathrm{tr}\left[ \left(
T_{S_{1}\times \cdots \times S_{N}}^{(\rho )}\right) _{cov}\right]  \tag{B14}
\label{B14}
\end{equation}%
\medskip for every trace class covering $(T_{S_{1}\times \cdots \times
S_{N}}^{(\rho )})_{_{cov}}$ of a source operator $T_{S_{1}\times \cdots
\times S_{N}}^{(\rho )}.$ Relations (\ref{B14}), (\ref{15}) imply bound (\ref%
{B10}).

Quite similarly, for measure (\ref{B9}), 
\begin{subequations}
\begin{equation}
1\leq \left\Vert \mu _{T_{_{1\times S_{2}\times \cdots \times S_{N}}}^{(\rho
)},\left( T_{_{1\times S_{2}\times \cdots \times S_{N}}}^{(\rho )}\right)
_{_{_{cov}}}}^{(\rho ,\mathrm{M}_{S,\Lambda })}\right\Vert _{var}\leq 
\mathrm{tr}\left[ \left( T_{_{1\times S_{2}\times \cdots \times
S_{N}}}^{(\rho )}\text{ }\right) _{cov}\right]  \tag{B15}  \label{B15}
\end{equation}%
\medskip for each trace class covering $(T_{_{1\times S_{2}\times \cdots
\times S_{N}}}^{(\rho )})_{cov}$ of a source operator $T_{_{1\times
S_{2}\times \cdots \times S_{N}}}^{(\rho )}$. From (\ref{B15}), (\ref{15})
we derive 
\end{subequations}
\begin{equation}
1\leq \inf_{_{\left( T_{_{1\times S_{2}\times \cdots \times S_{N}}}^{(\rho
)}\right) _{_{cov}}}}\text{ }\left\Vert \mu _{T_{_{1\times S_{2}\times
\cdots \times S_{N}}}^{(\rho )},\left( T_{_{1\times S_{2}\times \cdots
\times S_{N}}}^{(\rho )}\right) _{_{cov}}}^{(\rho ,\mathrm{M}_{S,\Lambda
})}\right\Vert _{var}\leq \left\Vert T_{_{1\times S_{2}\times \cdots \times
S_{N}}}^{(\rho )}\right\Vert _{cov},  \tag{B16}  \label{B16}
\end{equation}%
representing relation (\ref{B11}).
\end{proof}

Generalizing measure (\ref{B9}) to the case of a source operator $%
T_{S_{1}\times\cdots\times\underset{\overset{\uparrow}{n}}{1}\times
\cdots\times S_{N}}^{(\rho)}$ with $S_{n}=1$ at an arbitrary $n$-th site,
similarly to bound (\ref{B11}), we have 
\begin{align}
1 & \leq\inf_{(T_{S_{1}\times\cdots\times\underset{\overset{\uparrow}{n}}{1}%
\times\cdots\times S_{N}}^{(\rho)})_{_{_{cov}}}}\text{ }||\mu
_{T_{S_{1}\times\cdots\times\underset{\overset{\uparrow}{n}}{1}\times
\cdots\times S_{N}}^{(\rho)},(T_{S_{1}\times\cdots\times\underset{\overset{%
\uparrow}{n}}{1}\times\cdots\times S_{N}}^{(\rho)})_{cov}}^{(\rho,\mathrm{M}%
_{S,\Lambda})}\text{ }||_{_{var}}  \tag{B17}  \label{B17} \\
& \leq||T_{S_{1}\times\cdots\times\underset{\overset{\uparrow}{n}}{1}%
\times\cdots\times S_{N}}^{(\rho)}\text{ }||_{_{cov}}.  \notag
\end{align}
\medskip

Bounds (\ref{B13}), (\ref{B17}) allow us to evaluate the scenario parameter $%
\gamma_{\mathcal{E}_{\rho,\mathrm{M}_{S,\Lambda}}}$ defined by relation (\ref%
{q1}).

\begin{lemma}
For each $S_{1}\times\cdots\times S_{N}$-setting correlation scenario $%
\mathcal{E}_{\rho,\mathrm{M}_{S,\Lambda}}$ on a state $\rho$ on a Hilbert
space $\mathcal{H}_{1}\otimes\mathcal{\cdots}\otimes\mathcal{H}_{N}$, 
\begin{equation}
\gamma_{\mathcal{E}_{\rho,\mathrm{M}_{S,\Lambda}}}\leq\inf_{T_{S_{1}\times%
\cdots\times\underset{\overset{\uparrow}{n}}{1}\times\cdots\times
S_{N}}^{(\rho)},\text{ }\forall n}\text{ }||T_{S_{1}\times\cdots\times 
\underset{\overset{\uparrow}{n}}{1}\times\cdots\times S_{N}}^{(\rho)}\text{ }%
||_{_{cov}}.  \tag{B18}  \label{B18}
\end{equation}
\end{lemma}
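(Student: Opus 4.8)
The plan is to dominate the infimum that defines $\gamma_{\mathcal{E}_{\rho,\mathrm{M}_{S,\Lambda}}}$ by an infimum over a distinguished subfamily of admissible simulation measures --- the measures built in Section~2 of this appendix from one-setting source operators and their trace class coverings --- and then to quote the norm bound already established for that subfamily.

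First I would fix an arbitrary site $n\in\{1,\ldots,N\}$, an arbitrary source operator $T_{S_{1}\times\cdots\times\underset{\overset{\uparrow}{n}}{1}\times\cdots\times S_{N}}^{(\rho)}$ for $\rho$ with a single setting at site $n$, and an arbitrary trace class covering of it. Generalizing the construction of measure (\ref{B9}) from site~$1$ to site~$n$ --- exactly as indicated in the line preceding (\ref{B17}), interchanging the roles of sites $1$ and $n$ in (\ref{B4})--(\ref{B9}) --- produces a normalized bounded real-valued measure $\mu_{T_{S_{1}\times\cdots\times\underset{\overset{\uparrow}{n}}{1}\times\cdots\times S_{N}}^{(\rho)},\,(T_{S_{1}\times\cdots\times\underset{\overset{\uparrow}{n}}{1}\times\cdots\times S_{N}}^{(\rho)})_{cov}}^{(\rho,\mathrm{M}_{S,\Lambda})}$ on $(\Lambda_{1}^{S_{1}}\times\cdots\times\Lambda_{N}^{S_{N}},\mathcal{F}_{\Lambda_{1}}^{\otimes S_{1}}\otimes\cdots\otimes\mathcal{F}_{\Lambda_{N}}^{\otimes S_{N}})$ that returns every joint probability distribution $P_{(s_{1},\ldots,s_{N})}^{(\mathcal{E}_{\rho,\mathrm{M}_{S,\Lambda}})}$ of scenario $\mathcal{E}_{\rho,\mathrm{M}_{S,\Lambda}}$ as the corresponding marginal. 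Such a measure therefore lies in the set over which the infimum in definition (\ref{q1}) of $\gamma_{\mathcal{E}_{\rho,\mathrm{M}_{S,\Lambda}}}$ is taken, so $\gamma_{\mathcal{E}_{\rho,\mathrm{M}_{S,\Lambda}}}$ does not exceed the total variation norm of this particular measure, and this holds for every choice of covering.

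Next I would invoke bound (\ref{B17}), which asserts precisely that, after minimizing over all trace class coverings of the chosen one-setting source operator, this total variation norm is at most $||T_{S_{1}\times\cdots\times\underset{\overset{\uparrow}{n}}{1}\times\cdots\times S_{N}}^{(\rho)}||_{cov}$ (itself a consequence of the decomposition (\ref{11}) of the source operator into its tensor positive pieces $(T_{\cdots})_{cov}\pm T_{\cdots}$ together with definition (\ref{15}) of the covering norm). Combining these two facts gives $\gamma_{\mathcal{E}_{\rho,\mathrm{M}_{S,\Lambda}}}\leq||T_{S_{1}\times\cdots\times\underset{\overset{\uparrow}{n}}{1}\times\cdots\times S_{N}}^{(\rho)}||_{cov}$ for the fixed $n$ and the fixed one-setting source operator. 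Taking the infimum over all source operators with one setting at site $n$, and then over all $n=1,\ldots,N$, yields (\ref{B18}).

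Since the construction (\ref{B9}) and the bound (\ref{B17}) are already in hand, no substantive obstacle remains. The only delicate point --- already handled in (\ref{B7})--(\ref{B9}) via the Radon--Nikodym theorem applied to the tensor positive trace class operators $\tau^{\pm}=(T_{\cdots})_{cov}\pm T_{\cdots}$, which expresses the $s_{n}$-dependence at the distinguished site through conditional probability measures --- is the verification that the constructed measure reproduces as marginals \emph{all} $S_{n}$ joint distributions that differ only in the setting selected at site $n$, and hence is genuinely admissible in (\ref{q1}); everything else is immediate from the definitions.
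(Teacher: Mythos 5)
Your argument is correct and follows essentially the same route as the paper: the site-$n$ analogue of the measure (\ref{B9}) is admissible in the infimum (\ref{q1}), so $\gamma_{\mathcal{E}_{\rho,\mathrm{M}_{S,\Lambda}}}$ is dominated via (\ref{B17}) by $||T_{S_{1}\times\cdots\times\underset{\overset{\uparrow}{n}}{1}\times\cdots\times S_{N}}^{(\rho)}||_{cov}$, and taking infima over source operators and over $n$ gives (\ref{B18}). The only (immaterial) difference is that you bypass the paper's intermediate step (\ref{B19})--(\ref{B20}), where the bound (\ref{B10}) for full $S_{1}\times\cdots\times S_{N}$-setting source operators is also listed and then shown, via relation (\ref{20}) and reduction of source operators, to be redundant for the final inequality.
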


\begin{proof}
From \ref{q1}), (\ref{B10}), (\ref{B17}), it follows%
\begin{equation}
\gamma _{\mathcal{E}_{\rho ,\mathrm{M}_{S,\Lambda }}}\leq \inf \left\{ 
\mathbf{||}T_{S_{1}\times \cdots \times \underset{\overset{\uparrow }{n}}{1}%
\times \cdots \times S_{N}}^{(\rho )}\text{ }\mathbf{||}_{cov},\text{ }%
\left\Vert T_{S_{1}\times \cdots \times S_{N}}^{(\rho )}\right\Vert _{cov},%
\text{ }n=1,...,N\right\} ,  \tag{B19}  \label{B19}
\end{equation}%
where infimum is taken over all source operators $T_{S_{1}\times \cdots
\times S_{N}}^{(\rho )},$ $T_{S_{1}\times \cdots \times \underset{\overset{%
\uparrow }{n}}{1}\times \cdots \times S_{N}}^{(\rho )}$ and\ over all $%
n=1,...,N.$ For each $n,$ the set $\{T_{S_{1}\times \cdots \times \underset{%
\overset{\uparrow }{n}}{1}\times \cdots \times S_{N}}^{(\rho )}\}$ of all
source operators $T_{S_{1}\times \cdots \times \underset{\overset{\uparrow }{%
n}}{1}\times \cdots \times S_{N}}^{(\rho )}$ for state $\rho $ includes the
set of all source operators on $\mathcal{H}_{1}^{(\otimes S_{1})}\otimes 
\mathcal{\cdots \otimes H}_{n}\otimes \cdots \otimes \mathcal{H}%
_{N}^{(S_{N})}$, each reduced from some $T_{S_{1}\times \cdots \times
S_{N}}^{(\rho )},$ as a particular subset. This inclusion and relation (\ref%
{20}) imply%
\begin{equation}
\inf_{T_{S_{1}\times \cdots \times \underset{\overset{\uparrow }{n}}{1}%
\times \cdots \times S_{N}}^{(\rho )}}\text{ }||T_{S_{1}\times \cdots \times 
\underset{\overset{\uparrow }{n}}{1}\times \cdots \times S_{N}}^{(\rho
)}||_{cov}\text{ }\leq \text{\ }\inf_{T_{_{S_{1}\times \cdots \times
S_{N}}}^{(\rho )}}\left\Vert T_{S_{1}\times \cdots \times S_{N}}^{(\rho
)}\right\Vert _{cov}  \tag{B20}  \label{B20}
\end{equation}%
for each $n$. Taking this into the account in (\ref{B19}), we prove (\ref%
{B18}).
\end{proof}

\section{proof of lemma 3 in section 6}

Constraints (\ref{52}) imply 
\begin{equation}
\sup_{_{\Psi _{S,\Lambda },\mathcal{B}_{\Psi _{S,\Lambda }}\neq
0}}\left\vert \frac{1}{\mathcal{B}_{\Psi _{S,\Lambda }}}\text{ }%
\dsum\limits_{s_{1},...,s_{_{N}}}\left\langle \psi _{(s_{1},\ldots
,s_{_{N}})}(\lambda _{1},...,\lambda _{N})\right\rangle _{\mathcal{E}_{\rho ,%
\mathrm{M}_{S,\Lambda }}}\right\vert \leq \mathrm{\gamma }_{\mathcal{E}%
_{\rho ,\mathrm{M}_{S,\Lambda }}},  \tag{C1}  \label{C1}
\end{equation}%
where supremum is taken over all non-trivial (i. e. $\mathcal{B}_{\Psi
_{S,\Lambda }}\neq 0$) function collections $\Psi _{S,\Lambda }.$ For short
of notation, we further replace $\mathcal{E}_{\rho ,\mathrm{M}%
_{S}}\rightarrow \mathcal{E}_{S,\Lambda }$. In order to prove 
\begin{equation}
\sup_{_{\Psi _{S,\Lambda },\mathcal{B}_{\Psi _{S,\Lambda }}\neq
0}}\left\vert \frac{1}{\mathcal{B}_{\Psi _{S,\Lambda }}}\text{ }%
\dsum\limits_{s_{1},...,s_{_{N}}}\left\langle \psi _{(s_{1},\ldots
,s_{_{N}})}(\lambda _{1},...,\lambda _{N})\right\rangle _{\mathcal{E}%
_{S,\Lambda }}\right\vert =\mathrm{\gamma }_{\mathcal{E}_{S,\Lambda }}, 
\tag{C2}  \label{C2}
\end{equation}%
we note that, by introducing variables $\xi ^{\pm }=\mu _{\mathcal{E}%
_{S,\Lambda }}^{\pm }(\Omega )\geq 0,$ the parameter $\mathrm{\gamma }_{%
\mathcal{E}_{S,\Lambda }}$, defined by (\ref{q1}), can be otherwise
expressed as 
\begin{align}
\mathrm{\gamma }_{\mathcal{E}_{S,\Lambda }}& =\inf \{\xi ^{+}+\xi ^{-}\mid
\xi ^{\pm }\geq 0,\text{ \ }\xi ^{+}-\xi ^{-}=1,\text{ }\exists \text{ }%
\mathcal{E}_{S,\Lambda }^{lhv},\widetilde{\mathcal{E}}_{S,\Lambda }^{\text{ }%
lhv}:  \tag{C3}  \label{C3} \\
P_{(s_{1},...,s_{n})}^{(\mathcal{E}_{S,\Lambda })}& =\xi
^{+}P_{(s_{1},...,s_{n})}^{(\mathcal{E}_{S,\Lambda }^{\text{ }lhv})}-\xi
^{-}P_{(s_{1},...,s_{n})}^{(\widetilde{\mathcal{E}}_{S,\Lambda }^{\text{ }%
lhv})},\text{ \ \ }\forall s_{n},\forall n\}.  \notag
\end{align}

As it is specified in section 3, we consider only standard measurable
spaces. In this case, $(\Lambda _{n},\mathcal{F}_{\Lambda _{n}})$ is Borel
isomorphic to some measurable space $(\mathcal{X}_{n},\mathcal{B}_{\mathcal{X%
}_{n}}),$ where $\mathcal{X}_{n}\mathcal{\in B}_{\mathbb{R}}$ is a Borel
subset of $\mathbb{R}$ and $\mathcal{B}_{\mathcal{X}_{n}}:=\mathcal{B}_{%
\mathbb{R}}\cap \mathcal{X}_{n}$ is the trace on $\mathcal{X}_{n}$ of the
Borel $\sigma $-algebra $\mathcal{B}_{\mathbb{R}}$ on $\mathbb{R}.$ We have
two major cases.

\textrm{(a)} \emph{Discrete case}$\emph{.}$ Let, for a correlation scenario $%
\mathcal{E}_{S,\Lambda },$ each outcome set be finite: $\Lambda
_{n}=\{\lambda _{n}^{(k_{n})},$ $k_{n}=1,...,$ $K_{n}<\infty \}$. Then 
\begin{align}
& \dsum\limits_{s_{1},...,s_{_{N}}}\left\langle \psi _{(s_{1},\ldots
,s_{_{N}})}(\lambda _{1},...,\lambda _{N})\right\rangle _{\mathcal{E}%
_{S,\Lambda }}  \tag{C4}  \label{C4} \\
& =\sum_{s_{1},...,s_{_{N}},\text{ }k_{1},...,k_{N}}\beta _{(s_{1},\ldots
,s_{_{N}})}^{(k_{1},...,k_{N})}P_{(s_{1},...,s_{n})}^{(\mathcal{E}%
_{S,\Lambda })}\left( \{\lambda _{1}^{(k_{1})}\}\times \cdots \times
\{\lambda _{N}^{(k_{n})}\}\right) ,  \notag
\end{align}%
where $\beta _{(s_{1},\ldots ,s_{_{N}})}^{(k_{1},...,k_{N})}$ are real
numbers. Hence, in (\ref{C1}), supremum over $\Psi _{S,\Lambda }$ reduces to
supremum over families $\{\beta _{(s_{1},\ldots
,s_{_{N}})}^{(k_{1},...,k_{N})}\}$ of real numbers and equality (\ref{C2})
follows from (\ref{C1}) and (\ref{C3}) by the linear programming (LP)
duality. This proof is similar to the proof of theorem 17 in Ref. \cite%
{kaplan} for a bipartite case with a finite number of outcomes at each site.

Let now, for a correlation scenario $\mathcal{E}_{S,\Lambda },$ every
outcome set $\Lambda _{n}$ be inifinite but countable: $\Lambda
_{n}=\{\lambda _{n}^{(k_{n})},k_{n}=1,...,$ $K_{n},...\}.$ Consider
collections $\Psi _{S,\Lambda }^{(\beta _{K},K)},$ of bounded measurable
real-valued functions%
\begin{equation}
\psi _{(s_{_{1}},...,s_{_{N}})}^{(\beta _{K},K)}(\lambda _{1},...,\lambda
_{N}):=\sum_{k_{1},...,k_{N}}\beta
_{(s_{_{1}},...,s_{_{N}})}^{(k_{1},...,k_{N})}\chi _{\{\lambda
_{1}^{(k_{1})}\}}(\lambda _{1})\cdot \ldots \cdot \chi _{\{\lambda
_{N}^{(k_{N})}\}}(\lambda _{N}),  \tag{C5}  \label{C5}
\end{equation}%
specified via tuples $K:=(K_{1},...,K_{N})$ of positive integers $%
K_{1},...,K_{N}<\infty $ \ and families $\beta _{K}:=\{\beta _{(s_{1},\ldots
,s_{_{N}})}^{(k_{1},...,k_{N})}\in \mathbb{R},$ $s_{n}=1,...,S_{n},$ $%
k_{n}=1,...,K_{n}\}$ of real numbers. For each of these function
collections, the expression%
\begin{align}
& \dsum\limits_{s_{1},...,s_{_{N}}}\left\langle \psi _{(s_{1},\ldots
,s_{_{N}})}^{(\beta _{K},K)}(\lambda _{1},...,\lambda _{N})\right\rangle _{%
\mathcal{E}_{S,\Lambda }}  \tag{C6}  \label{C6} \\
& =\sum_{s_{1},...,s_{_{N}},\text{ }k_{1},...,k_{N}}\beta
_{(s_{_{1}},...,s_{_{N}})}^{(k_{1},...,k_{N})}P_{(s_{_{1}},...,s_{_{N}})}^{(%
\mathcal{E}_{S,\Lambda })}\left( \{\lambda _{1}^{(k_{1})}\}\times \cdot
\cdot \cdot \times \{\lambda _{N}^{(k_{N})}\}\right)  \notag
\end{align}%
is similar by its form to (\ref{C4}), so that relation (\ref{C1}) and the
proof in the above case imply%
\begin{align}
\mathrm{\gamma }_{\mathcal{E}_{S,\Lambda }}& \geq \sup_{_{\Psi _{S,\Lambda },%
\mathcal{B}_{\Psi _{S,\Lambda }}\neq 0}}\text{ }\left\vert \text{ }\frac{1}{%
\mathcal{B}_{\Psi _{S,\Lambda }}}\dsum\limits_{s_{1},...,s_{_{N}}}\left%
\langle \psi _{(s_{_{1}},\ldots ,s_{_{N}})}(\lambda _{1},...,\lambda
_{N})\right\rangle _{\mathcal{E}_{S,\Lambda }}\right\vert \text{ }  \tag{C7}
\label{C7} \\
& \geq \sup_{_{\beta _{K},\text{ }K}}\text{ }\left\vert \frac{1}{\mathcal{B}%
_{\Psi _{S,\text{ }\Lambda }^{(\beta _{K},\text{ }K)}}}\dsum%
\limits_{s_{1},...,s_{_{N}}}\left\langle \psi
_{(s_{_{1}},...,s_{_{N}})}^{(\beta _{K},K)}(\lambda _{1},...,\lambda
_{N})\right\rangle _{\mathcal{E}_{S,\Lambda }}\right\vert \text{ }  \notag \\
& =\sup_{_{K}}\mathrm{\gamma }_{\mathcal{E}_{S,\Lambda }}^{(K)},  \notag
\end{align}%
where, for each $K=(K_{1},...,K_{N}),$ 
\begin{align}
\gamma _{\mathcal{E}_{S,\Lambda }}^{(K)}& :=\inf \{\text{ }\xi ^{+}+\xi
^{-}\mid \xi ^{\pm }\geq 0,\text{ \ }\xi ^{+}-\xi ^{-}=1,\text{\ \ }\exists 
\mathcal{E}_{S,\Lambda }^{\text{ }lhv},\text{ }\widetilde{\mathcal{E}}%
_{S,\Lambda }^{\text{ }lhv}:  \tag{C8}  \label{C8} \\
& P_{(s_{_{1}},...,s_{_{N}})}^{(\mathcal{E}_{S,\Lambda })}\left( \{\alpha
_{k_{1}}\}\times \cdots \times \{\alpha _{k_{N}}\}\right) =\xi
^{+}P_{(s_{1},...,s_{n})}^{(\mathcal{E}_{S,\Lambda }^{\text{ }lhv})}\left(
\{\lambda _{n}^{(k_{1})}\}\times \cdot \cdot \cdot \times \{\lambda
_{N}^{(k_{N})}\}\right)  \notag \\
& -\xi ^{-}P_{(s_{1},...,s_{n})}^{(\widetilde{\mathcal{E}}_{S,\Lambda }^{%
\text{ }lhv})}\left( \{\lambda _{n}^{(k_{1})}\}\times \cdot \cdot \cdot
\times \{\lambda _{N}^{(k_{N})}\}\right) ,\text{ \ }k_{n}=1,...,K_{n},\text{ 
}s_{n}=1,...,S_{n}\}.  \notag
\end{align}%
From (\ref{C3}), (\ref{C8}) it follows that $\mathrm{\gamma }_{\mathcal{E}%
_{S,\Lambda }}^{(K)}\leq \mathrm{\gamma }_{\mathcal{E}_{S,\Lambda
}}^{(L)}\leq \mathrm{\gamma }_{\mathcal{E}_{S,\Lambda }}$, if $K_{1}\leq
L_{1},...,K_{N}\leq L_{N},$ and $\lim_{K_{1},...,K_{N}\rightarrow \infty }%
\mathrm{\gamma }_{\mathcal{E}_{S,\Lambda }}^{(K)}=\mathrm{\gamma }_{\mathcal{%
E}_{S,\Lambda }}.$ Hence, $\sup_{_{K}}\mathrm{\gamma }_{\mathcal{E}%
_{S,\Lambda }}^{(K)}=\mathrm{\gamma }_{\mathcal{E}_{S,\Lambda }}.$ Taking
this into the account in (\ref{C7}), we derive 
\begin{align}
\mathrm{\gamma }_{\mathcal{E}_{S,\Lambda }}& \geq \sup_{_{\Psi _{S,\Lambda },%
\mathcal{B}_{\Psi _{S,\Lambda }}\neq 0}}\text{ }\left\vert \frac{1}{\mathcal{%
B}_{\Psi _{S,\Lambda }}}\dsum\limits_{s_{1},...,s_{_{N}}}\left\langle \psi
_{(s_{_{1}},...,s_{_{N}})}(\lambda _{1},...,\lambda _{N})\right\rangle _{%
\mathcal{E}_{S,\Lambda }}\right\vert  \tag{C9}  \label{C9} \\
& \geq \mathrm{\gamma }_{\mathcal{E}_{S,\Lambda }}.  \notag
\end{align}%
This proves equality (\ref{C2}) if set $\Lambda _{0}$ is infinite and
countable.

\textrm{(b)} \emph{Continuous case. }Let, for a correlation scenario $%
\mathcal{E}_{S,\Lambda },$ each outcome set $\Lambda _{n}$ be infinite and
uncountable. For positive integers $K_{n}\geq 1,$ introduce partitions 
\begin{align}
\mathrm{D}_{K_{n}}& =\{D_{K_{n}}^{(k_{n})}\in \mathcal{F}_{\Lambda _{n}}\mid
D_{K_{n}}^{(k_{n})}\neq \varnothing ,\text{ \ \ }D_{K_{n}}^{(k_{n})}\cap
D_{K_{n}}^{(k_{n}^{\prime })}=\varnothing ,\text{ \ \ }\forall k_{n}\neq
k_{n}^{\prime },  \tag{C10}  \label{C10} \\
\cup _{k}D_{K_{n}}^{(k_{n})}& =\Lambda _{n},\text{ \ \ }k_{n}=1,..,2^{K_{n}}%
\}  \notag
\end{align}%
of each set $\Lambda _{n}$, such that 
\begin{equation}
D_{K_{n}+1}^{(2k_{n}-1)}\cup D_{K_{n}+1}^{(2k_{n})}=D_{K_{n}}^{(k_{n})},%
\text{ \ \ }k_{n}=1,...,2^{K_{n}},\text{ \ \ }\forall K_{n}\in \mathbb{N}. 
\tag{C11}  \label{C11}
\end{equation}%
For some partitions $\mathrm{D}_{K_{1}},...,\mathrm{D}_{K_{N},}$ of sets $%
\Lambda _{1},...,\Lambda _{N},$ respectively, consider collections $%
\widetilde{\Psi }_{S,\Lambda }^{(\beta _{K},\text{ }K)}$ of real-valued
measurable functions 
\begin{equation}
\widetilde{\psi }_{(s_{_{1}},\ldots ,s_{_{N}})}^{(\beta _{K},K)}(\lambda
_{1},...,\lambda _{N}):=\sum_{k_{1},...,k_{N}}\beta _{(s_{_{1}},\ldots
,s_{_{N}})}^{(k_{1},...,k_{N})}\chi _{D_{K_{1}}^{(k_{1})}}(\lambda
_{1})\cdot \ldots \cdot \chi _{D_{K_{N}}^{(k_{N})}}(\lambda _{N}),  \tag{C12}
\label{C12}
\end{equation}%
specified by tuples $K:=(K_{1},...,K_{N})$ and collections $\beta
_{K}:=\{\beta _{(s_{1},\ldots ,s_{_{N}})}^{(k_{1},...,k_{N})}\in \mathbb{R},$
$s_{n}=1,...,S_{n},$ $k_{n}=1,...,2^{K_{n}}\}$ of real numbers. Similarly to
the derivation of (\ref{C7}), we have:%
\begin{align}
\mathrm{\gamma }_{\mathcal{E}_{S,\Lambda }}& \geq \sup_{_{\Psi _{S,\Lambda },%
\mathcal{B}_{\Psi _{S,\Lambda }}\neq 0}}\left\vert \text{ }\frac{1}{\mathcal{%
B}_{\Psi _{S,\Lambda }}}\dsum\limits_{s_{1},...,s_{_{N}}}\left\langle \psi
_{(s_{_{1}},\ldots ,s_{_{N}})}(\lambda _{1},...,\lambda _{N})\right\rangle _{%
\mathcal{E}_{S,\Lambda }}\right\vert \text{ }  \tag{C13}  \label{C13} \\
& \geq \sup_{_{\beta _{K},K}}\text{ }\left\vert \frac{1}{\mathcal{B}_{%
\widetilde{\Psi }_{S,\Lambda }^{(\beta _{K},\text{ }K)}}}\dsum%
\limits_{s_{1},...,s_{_{N}}}\left\langle \widetilde{\psi }_{(s_{_{1}},\ldots
,s_{_{N}})}^{(\beta _{K},K)}(\lambda _{1},...,\lambda _{N})\right\rangle _{%
\mathcal{E}_{S,\Lambda }}\right\vert \text{ }  \notag \\
& =\sup_{_{K}}\mathrm{\gamma }_{\mathcal{E}_{S,\Lambda }}^{(K)},  \notag
\end{align}%
\medskip where%
\begin{align}
\mathrm{\gamma }_{\mathcal{E}_{S,\Lambda }}^{(K)}& :=\inf \{\text{ }\xi
^{+}+\xi ^{-}\mid \xi ^{\pm }\geq 0,\text{ \ }\xi ^{+}-\xi ^{-}=1,\text{\ \ }%
\exists \mathcal{E}_{S,\Lambda }^{\text{ }lhv},\widetilde{\mathcal{E}}%
_{S,\Lambda }^{\text{ }lhv}:  \tag{C14}  \label{C14} \\
& P_{(s_{1},...,s_{_{N}})}^{(\mathcal{E}_{S,\Lambda })}\left(
D_{K_{1}}^{(k_{1})}\times \cdots \times D_{K_{N}}^{(k_{N})}\right) =\xi
^{+}P_{(s_{1},...,s_{n})}^{(\mathcal{E}_{S,\Lambda }^{\text{ }lhv})}\left(
D_{K_{1}}^{(k_{1})}\times \cdots \times D_{K_{N}}^{(k_{N})}\right)  \notag \\
& -\xi ^{-}P_{(s_{1},...,s_{n})}^{(\widetilde{\mathcal{E}}_{S,\Lambda }^{%
\text{ }lhv})}\left( D_{K_{1}}^{(k_{1})}\times \cdots \times
D_{K_{N}}^{(k_{N})}\right) ,\text{\ \ \ }k_{n}=1,...,2^{K_{n}},\text{ \ }%
s_{n}=1,...,S_{n}\},  \notag
\end{align}%
for each $K=(K_{1},...K_{N}),$ $K_{n}\in \mathbb{N}.$ From (\ref{C3}), (\ref%
{C14}) and the special construction (\ref{C11}) of partitions $\mathrm{D}%
_{K},$ it follows that $\mathrm{\gamma }_{\mathcal{E}_{S,\Lambda
}}^{(K)}\leq \mathrm{\gamma }_{\mathcal{E}_{S,\Lambda }}^{(L)}\leq \mathrm{%
\gamma }_{\mathcal{E}_{S,\Lambda }}$, if $K_{1}\leq L_{1},...,K_{N}\leq
L_{N} $, and $\lim_{K_{1},...,K_{N}\rightarrow \infty }\mathrm{\gamma }_{%
\mathcal{E}_{S,\Lambda }}^{(K)}=\mathrm{\gamma }_{\mathcal{E}_{S,\Lambda }}.$
Therefore, $\sup_{K}\mathrm{\gamma }_{\mathcal{E}_{S,\Lambda }}^{(K)}=%
\mathrm{\gamma }_{\mathcal{E}_{S,\Lambda }}$. Substituting this into (\ref%
{C13}), we prove equality (\ref{C2}) in case of uncountable sets $\Lambda
_{n}$.

Coming back to notation $\mathcal{E}_{S,\Lambda }\rightarrow \mathcal{E}%
_{\rho ,\mathrm{M}_{S,\Lambda }}$ and taking supremum of the left-hand and
the right-hand sides of (\ref{C2}) over all collections $\mathrm{M}%
_{S,\Lambda }$ of POV measures, we prove relation (\ref{64}).\medskip

\end{document}